\documentclass[11pt]{article}

\usepackage{amsthm}
\usepackage{graphicx} %
\usepackage{array} %

\usepackage[bb=boondox]{mathalpha}

\usepackage{amsmath, amssymb, amsfonts, verbatim, soul}
\usepackage{hyphenat,epsfig,subcaption,multirow}
\usepackage[font=small,labelfont=bf]{caption}
\usepackage{nicefrac}
\usepackage{paralist}

\usepackage[usenames,dvipsnames]{xcolor}
\usepackage[ruled]{algorithm2e}

\DeclareFontFamily{U}{mathx}{\hyphenchar\font45}
\DeclareFontShape{U}{mathx}{m}{n}{
      <5> <6> <7> <8> <9> <10>
      <10.95> <12> <14.4> <17.28> <20.74> <24.88>
      mathx10
      }{}
\DeclareSymbolFont{mathx}{U}{mathx}{m}{n}
\DeclareMathSymbol{\bigtimes}{1}{mathx}{"91}

\usepackage{tcolorbox}
\tcbuselibrary{skins,breakable}
\tcbset{enhanced jigsaw}

\usepackage[normalem]{ulem}
\usepackage[compact]{titlesec}

\definecolor{DarkRed}{rgb}{0.5,0.1,0.1}
\definecolor{DarkBlue}{rgb}{0.1,0.1,0.5}

\usepackage{nameref}
\definecolor{ForestGreen}{rgb}{0.1333,0.5451,0.1333}
\definecolor{Red}{rgb}{0.9,0,0}
\usepackage[linktocpage=true,
	pagebackref=true,colorlinks,
	linkcolor=DarkRed,citecolor=ForestGreen,
	bookmarks,bookmarksopen,bookmarksnumbered]
	{hyperref}
\usepackage[noabbrev,nameinlink]{cleveref}
\crefname{property}{property}{Property}
\creflabelformat{property}{(#1)#2#3}
\crefname{equation}{eq}{Eq}
\creflabelformat{equation}{(#1)#2#3}

\usepackage{bm}
\usepackage{url}
\usepackage{xspace}
\usepackage[mathscr]{euscript}

\usepackage{tikz}
\usetikzlibrary{arrows}
\usetikzlibrary{arrows.meta}
\usetikzlibrary{shapes}
\usetikzlibrary{backgrounds}
\usetikzlibrary{positioning}
\usetikzlibrary{decorations.markings}
\usetikzlibrary{decorations.pathreplacing} %
\usetikzlibrary{patterns}
\usetikzlibrary{calc}
\usetikzlibrary{fit}
\usetikzlibrary{decorations}

\usepackage[framemethod=TikZ]{mdframed}

\usepackage[noend]{algpseudocode}
\makeatletter
\def\BState{\State\hskip-\ALG@thistlm}
\makeatother

\usepackage{cite}
\usepackage{enumitem}
\setlist[itemize]{leftmargin=20pt}
\setlist[enumerate]{leftmargin=20pt}

\usepackage[margin=1in]{geometry}

\usepackage{thmtools}
\usepackage{thm-restate}

\newtheorem{theorem}{Theorem}
\newtheorem{lemma}{Lemma}[section]
\newtheorem{proposition}[lemma]{Proposition}

\newtheorem{claim}[lemma]{Claim}
\newtheorem{fact}[lemma]{Fact}

\newtheorem{definition}[lemma]{Definition}

\newtheorem*{claim*}{Claim}
\newtheorem*{assumption*}{Assumption}
\newtheorem*{proposition*}{Proposition}
\newtheorem*{lemma*}{Lemma}

\newtheorem{observation}[lemma]{Observation}

\newtheorem*{observation*}{Observation}

\newtheorem*{theorem*}{Theorem}

\crefname{lemma}{Lemma}{Lemmas}
\crefname{claim}{claim}{claims}
\crefname{property}{Property}{Properties}
\crefname{invariant}{Invariant}{Invariants}

\newtheorem{mdresult}{Result}
\newenvironment{result}{\begin{mdframed}[backgroundcolor=lightgray!40,topline=false,rightline=false,leftline=false,bottomline=false,innertopmargin=2pt]\begin{mdresult}}{\end{mdresult}\end{mdframed}}

\theoremstyle{definition}

\newenvironment{ourbox}{\begin{mdframed}[hidealllines=false,innerleftmargin=10pt,backgroundcolor=white!10,innertopmargin=10pt,innerbottommargin=5pt,roundcorner=10pt]}{\end{mdframed}}

\newenvironment{cbox}{\begin{mdframed}[backgroundcolor=ForestGreen!10,topline=false,bottomline=false, innerbottommargin=5pt,innertopmargin=5pt]}{\end{mdframed}}

\newtheorem{Definition}[lemma]{Definition}
\renewenvironment{definition}{\begin{cbox}\begin{Definition}}{\end{Definition}\end{cbox}}

\newtheorem{mdalgorithm}{Algorithm}

\allowdisplaybreaks

\renewcommand{\qed}{\nobreak \ifvmode \relax \else
      \ifdim\lastskip<1.5em \hskip-\lastskip
      \hskip1.5em plus0em minus0.5em \fi \nobreak
      \vrule height0.75em width0.5em depth0.25em\fi}

\newcommand{\Qed}[1]{\rlap{\qed$_{\textnormal{~~\Cref{#1}}}$}}

\newcommand*\samethanks[1][\value{footnote}]{\footnotemark[#1]}

\newcommand{\logstar}[1]{\ensuremath{\log^{*}\!{#1}}}

\renewcommand{\leq}{\leqslant}
\renewcommand{\geq}{\geqslant}

\newcommand{\rs}{{{Ruzsa-Szemerédi}}\xspace}

\newcommand{\Gbase}{G_{\textnormal{\textsc{base}}}}
\newcommand{\Ebase}{E_{\textnormal{\textsc{base}}}}
\newcommand{\Mbase}{M^{\star}_{\textnormal{\textsc{base}}}}

\newcommand{\GG}{\ensuremath{\mathcal{G}}}

\newcommand{\belong}[2]{\ensuremath{\textnormal{\texttt{Belong}}(#1,#2)}}

\newcommand{\Ot}{\ensuremath{\widetilde{O}}}
\newcommand{\eps}{\ensuremath{\varepsilon}}

\newcommand{\bracket}[1]{\left[#1\right]}
\newcommand{\paren}[1]{\ensuremath{\left(#1\right)}\xspace}
\newcommand{\card}[1]{\left\vert{#1}\right\vert}

\newcommand{\set}[1]{\ensuremath{\left\{ #1 \right\}}}
\newcommand{\poly}{\mbox{\rm poly}}

\DeclareMathOperator*{\Exp}{\ensuremath{{{E}}}}
\DeclareMathOperator*{\Prob}{\ensuremath{\textnormal{Pr}}}
\renewcommand{\Pr}{\Prob}

\newcommand{\Ex}{\Exp}

\newenvironment{tbox}{\begin{tcolorbox}[
		enlarge top by=5pt,
		enlarge bottom by=5pt,
		 breakable,
		 boxsep=0pt,
                  left=4pt,
                  right=4pt,
                  top=10pt,
                  arc=0pt,
                  boxrule=1pt,toprule=1pt,
                  colback=white
                  ]%
	}
{\end{tcolorbox}}

\newcommand{\supp}[1]{\ensuremath{\textnormal{\text{supp}}(#1)}}
\newcommand{\distribution}[1]{\ensuremath{\textnormal{dist}(#1)}\xspace}

\newcommand{\II}{\ensuremath{\mathbf{I}}}
\newcommand{\HH}{\ensuremath{\mathbf{H}}}

\newcommand{\mireal}[1][]{
  \ifx\relax#1\relax%
    \II(\mione \,; \mitwo)%
  \else%
    \II(\mione \,; \mitwo\mid #1)%
  \fi
}
\newcommand{\en}[1]{\ensuremath{\HH(#1)}}

\newcommand{\itfacts}[1]{\Cref{fact:it-facts}-(\ref{part:#1})\xspace}

\newcommand{\Grs}{\ensuremath{G_{\textnormal{\sc ers}}}}
\newcommand{\Lrs}{\ensuremath{L_{\textnormal{\sc ers}}}}
\newcommand{\Rrs}{\ensuremath{R_{\textnormal{\sc ers}}}}
\newcommand{\Ers}{\ensuremath{E_{\textnormal{\sc ers}}}}

\title{Semi-Streaming Matching in a Single Pass I: \\ A New Framework for Lower Bounds via Blueprints\footnote{A preliminary version of this paper appeared in the \emph{ACM Symposium on Theory of Computing, STOC 2026}. \newline}}
\author{Sepehr Assadi\footnote{(sepehr@assadi.info, max.jiang@uwaterloo.ca, mxiang@uwaterloo.ca) School of Computer Science, University of Waterloo. Supported in part by a  Sloan Research Fellowship, an NSERC
Discovery Grant, a Faculty of Math Research Chair grant, and an Undergraduate Research Fellowship (URF) from University of Waterloo. \smallskip} 
\and Max Jiang\samethanks  \and Mars Xiang\samethanks  
}

\date{}

\begin{document}
\maketitle


\begin{abstract}

\medskip

In the semi-streaming model, we have an $n$-vertex graph $G=(V,E)$ whose edges arrive in an arbitrary order in a stream. The goal is to make one or a few passes over the stream, 
use a limited memory of $\Ot(n) := O(n \cdot \poly\!\log{n})$ bits, and output a solution to the problem at hand at the end. A central open question in this area is to determine 
the best approximation ratio possible for the maximum matching problem via \emph{single-pass} semi-streaming algorithms. 

\medskip

This problem admits a simple $0.5$-approximation algorithm---by maintaining a maximal 
matching greedily---which, despite extensive efforts, has remained the state of the art. Lower bounds for this problem have also been few and far between with best known bounds ruling 
out better than $1/(1+\ln{(2)}) \sim 0.590$ approximation, using a highly complicated construction motivated by the literature on \rs (RS) graphs from extremal graph theory. 

\medskip

We develop a new framework for proving lower bounds for the semi-streaming matching problem. Our framework abstracts out the extremal graph theory and information theoretic arguments in the lower bounds, and reduces
the problem to constructing certain \emph{constant-size} graphs, which we call \textbf{blueprints}. Not only existing lower bounds 
can be captured by these blueprints---leading to far simpler and more concise arguments---but also we can design new blueprints that can be used to rule out 
$(8-2\sqrt{10})/3 \sim 0.558$-approximation for the semi-streaming matching problem. We believe this approach can be of its own independent interest
and lead to further improvements on this tantalizing open question. 

\end{abstract}

\vspace{1cm}

\paragraph{Follow up work.} Very recently, we built on this framework to rule out any single-pass semi-streaming algorithm with approximation ratio strictly better than half. This shows that the simple greedy algorithm for the problem is already optimal, settling the central open question at the heart of this work. That paper appears on arXiv under the title: 
\begin{center}
\emph{``Semi-Streaming Matching in a Single Pass II: Greedy is Optimal''}.
\end{center}
The remainder of the current paper provides the original  version of this work with no mention of that new result.

\pagenumbering{roman}

\clearpage
\setcounter{tocdepth}{2}
\tableofcontents
\clearpage
\pagenumbering{arabic}
\setcounter{page}{1}


\newcommand{\tapprox}{\ensuremath{A^*}}
\newcommand{\tvalue}{\ensuremath{V^*}}

\section{Introduction}\label{sec:intro} 

The semi-streaming model---formalized by~\cite{FeigenbaumKMSZ05}---is defined as follows: we have an $n$-vertex graph $G=(V,E)$ whose edges are given
to the algorithm in some arbitrarily ordered stream; the algorithm makes one or a few passes over this stream, uses a memory of $\Ot(n):= O(n \cdot \poly\!\log{(n)})$ bits, and at the end of the stream, 
outputs a solution to the problem at hand. In this paper, we study \textbf{single-pass} semi-streaming algorithms for the \textbf{maximum matching} problem. 

Already two decades ago,~\cite{FeigenbaumKMSZ05} observed that this problem admits a simple semi-streaming algorithm achieving a $0.5$-approximation: greedily maintain a {maximal} matching of the arriving edges in the stream. 
Despite significant effort, this simple algorithm has remained the state of the art for this problem\footnote{It is only known that one can break this approximation in $o(n^2)$ space~\cite{AssadiBKL23}, namely, in space asymptotically less than storing the entire 
input explicitly (although due to the reliance of the algorithm on the triangle removal lemma~\cite{Fox11}, the improvement over the trivial $n^2$ bound is only by a $2^{\Theta(\logstar{(n)})}$ factor.)}. Indeed, whether or not the approximation ratio of this simple 
algorithm can be improved has since become one of the most central problems in the graph streaming literature; see, e.g.~\cite{KonradMM12,GoelKK12,AssadiKL17,Kapralov21,AssadiBKL23,FeldmanS24} and references therein (see
also~\cite{McGregor05,AhnG11,PazS17,AssadiB21,FischerMU22,AssadiS23,KonradN24,Bernstein24,Assadi25,AssadiBKNS25} for pointers to some other related variants of the problem including multi-pass, random-order, or dynamic streaming algorithms).  

The progress on this question from the lower bound side has also been limited so far.~\cite{FeigenbaumKMSZ05} proved that finding an exact maximum matching is not possible 
via semi-streaming algorithms. Almost a decade after,~\cite{GoelKK12} proved that even a $2/3$-approximation is not possible, which soon after was followed by a $(1-1/e) \sim 0.632$-approximation lower bound of~\cite{Kapralov13}.
Almost another decade passed until~\cite{Kapralov21} improved this to a $1/(1+\ln{(2)}) \sim 0.590$-approximation lower bound, which remains the state of the art. 
It is worth noting that the lower bounds of~\cite{Kapralov13} and~\cite{Kapralov21} are based on extending prior lower bounds of~\cite{KarpVV90} and~\cite{EpsteinLSW13} in the \emph{online preemptive} model---with the same bounds
as above, respectively---to semi-streaming algorithms.  

The slow progress on lower bounds can be partially attributed to the extreme technicality of the existing arguments. The starting point of these arguments 
is an already complex construction of \emph{\rs (RS)} graphs~\cite{RuzsaS78} due to~\cite{FischerLNRRS02}. These constructions 
 are then used in a (completely) white-box manner to provide a ``geometric interpretation'' of prior online lower bound instances in~\cite{KarpVV90,EpsteinLSW13} to
construct hard instances  for semi-streaming algorithms (see~\cite[Section 2]{Kapralov21}); this step is the heart of these proofs. Finally, these hard instances are analyzed using information theory tools 
to establish the final lower bound. The end results are thus highly formidable arguments which in the case of~\cite{Kapralov21} span over 150 pages. 

\subsection{Our Contributions}\label{sec:contribution} 

In this paper, we present an improved lower bound for the semi-streaming matching problem. 

\begin{result}\label{res:main}
	Any single-pass semi-streaming algorithm (deterministic or randomized) for the maximum bipartite matching problem cannot achieve an approximation ratio better than
	\[
		(8-2\sqrt{10})/3 \sim 0.55848155\cdots.  
	\]
\end{result}

The main contribution of our work is not only the numerical improvement over~\cite{Kapralov21}, but rather the new \textbf{lower bound framework} introduced in its proof, which we explain next. 

\subsection*{A New Framework for Lower Bounds via Blueprints} 
Our lower bound framework entirely abstracts out the extremal graph theory and information theory arguments of prior work, and reduces the task of proving lower bounds to constructing certain \emph{constant-size} graphs, which we call \textbf{blueprints}
(since our framework receives them as input and uses them as the `blueprint' of hard instances it creates). 

Defining blueprints or providing proper intuition behind their use will already get too technical and we postpone it to later. For now, we just mention that 
at a (very) high level, a blueprint is a bipartite graph on a constant number of \emph{labeled} vertices and edges\footnote{To put this in context, re-proving the $2/3$-approximation lower bound of~\cite{GoelKK12} requires a blueprint on $4$ vertices, 
and getting a better-than-$2/3$ lower bound can be done with a blueprint on $64$ vertices (see~\Cref{fig:blueprints}). The blueprint in~\Cref{res:main} is however less explicit and we do not bound its number of vertices beyond showing it is finite.}. 
The edges form a \emph{matching} and the goal is to maximize the \emph{relative} size of this matching compared to the number of vertices. 
The constraints are of the following type: for any pairs of vertices $u,v$, there exists a fixed set $S_{uv}$, depending only on the labels of vertices and the edge $(u,v)$; whenever the edge $(u,v)$ exists, 
for each pair $(x,y) \in S_{u,v}$, at most one of $x$ or $y$ can have any incident edges. 

\begin{figure}[h!]
	\centering
	\includegraphics[scale=0.5]{Figs/blueprint-intro.png}
	\caption{A high level illustration of a blueprint: existence of each edge $(u,v)$ introduces a constraint between some pairs of vertices that forces at most one vertex per pair to have any edge.
	Here, the banned pairs of a single edge $(u,v)$ are shown with dashed lines; only one endpoint of each dashed line has an edge.}\label{fig:blueprint-intro}
\end{figure}
\noindent
After the definition of these blueprints, the proof of our lower bound framework goes in three steps: 
\begin{itemize}
	\item \textbf{Step 1:} Introducing a new generalization of RS graphs, called \emph{ERS} graphs (short for \emph{extended} RS graphs). ERS graphs can be seen
	as a somewhat-dense analogue of \emph{Hamming graphs}\footnote{A Hamming graph $H(d,q)$ has vertices $[q]^d$ and edges between vertices that differ on exactly one coordinate.}, 
	the same way RS graphs of~\cite{FischerLNRRS02,GoelKK12} are somewhat-dense analogues of hypercube graphs\footnote{See~\cite[Section 4.3]{Raskhodnikova03} that fully spells out the steps of going from a hypercube to RS graphs in~\cite{FischerLNRRS02}.}. 
	\item \textbf{Step 2:} Defining a relatively simple graph product that takes any blueprint and an ERS graph and constructs a \emph{hard} instance for the semi-streaming matching problem. 
	While this step is inspired by the prior work of~\cite{Kapralov13,Kapralov21}, it almost entirely bypasses the intricacies and highly complicated nature of those arguments given its \emph{blackbox} reliance on ERS graphs 
	as opposed to the white-box applications of the ideas of~\cite{FischerLNRRS02,GoelKK12} used in~\cite{Kapralov13,Kapralov21}. 
	\item \textbf{Step 3:} An information-theoretic analysis of these hard instances to prove a lower bound for semi-streaming algorithms. This step is generally not complicated in prior work either but in our work has become particularly easy
	given the modular structure of hard instances following the blueprint. Along the way, we show in a formal way that, on our hard instances, we only need to analyze algorithms that store edges of the graph explicitly in their memory and at the end output
	a large matching of the stored edges.  
\end{itemize}

Equipped with this framework, we can now \emph{solely} focus on designing ``good'' blueprints, which is a standalone optimization problem with its own rules and constraints. 
This constitutes the second main part of the paper. 

\subsection*{Construction of Blueprints}

We first show a construction of blueprints that rules out a $(2-\sqrt{2}) \sim 0.585$-approximation which already (slightly) improves upon the prior lower bound of~\cite{Kapralov21}. This matches
the \emph{target} bound of best known online preemptive matching results in~\cite{HuangPTTWZ19}, put forward by~\cite{Kapralov21} who stated that their techniques  ``can probably be extended'' to 
mimic hard instances of~\cite{HuangPTTWZ19} as well. Our blueprints now seamlessly recover this stronger bound
using far simpler and more direct arguments that do not even rely on online preemptive matching lower bounds directly.  

We then build on this blueprint to create a more complex recursive one that proves~\Cref{res:main}
and improves the prior best bounds of~\cite{Kapralov21} considerably. We note that, to our knowledge, our lower bound also improves the state of the art for the online preemptive matching problem as a corollary but we have not explored that direction in this paper.

The general strategy of the proofs is as follows: we consider $\tapprox$ as the \emph{infimum} of approximation ratios possible by all finite blueprints
and in each result, prove an upper bound on $\tapprox$. These upper bounds are proven via combinatorial arguments based on the rules and constraints of blueprints and have the following flavor: given any blueprint with approximation ratio above the target upper bound, we show how to add a \emph{gadget} to the blueprint that decreases its approximation ratio. 

We conclude this section by remarking that \emph{if} one attempts to recover 
the hard instances of the semi-streaming matching problem from our blueprints (i.e., ``unroll'' the steps in our framework), the end result again would be a highly complicated construction as is the case of~\cite{Kapralov13,Kapralov21}. 
But, the power of our framework is that it \emph{abstracts away} all these intricacies. This allows us to design even more complex 
hard instances compared to prior lower bounds, and yet use far simpler arguments as we solely need to focus on constant-size blueprints and their own rules and constraints.  
We believe this approach based on blueprints can be of its own independent interest
and lead to further improvements on understanding the semi-streaming matching problem.  

\subsection{Organization of the Rest of This Paper} 
Our arguments in this paper are quite concise on one hand and at the same time require careful technical definitions. Given this, we forego having a lengthy informal technical overview of the entire paper upfront and instead in each section, provide ample intuition and discussions throughout the proofs after providing the formal definitions. The rest of the paper, after a short preliminaries section, is organized as follows.  
\begin{itemize}
	\item In~\Cref{sec:ers}, we define \emph{ERS} graphs and present a somewhat-dense construction of them with $n^{1+\Omega(1/\log\log{n})}$ edges (which is the by-now familiar bound of RS graphs in~\cite{FischerLNRRS02,GoelKK12}
	as well as hard instances of~\cite{Kapralov13,Kapralov21}).
	\item In~\Cref{sec:blueprints}, we formally define blueprints and provide an \emph{expansion} operation that turns them into large graphs for forming hard instances of the semi-streaming matching problem. We also prove some key properties of blueprints in this section. 
	\item In~\Cref{sec:blueprint-lower}, we use our expansion operation of blueprints to show that given any blueprint, we can prove a semi-streaming lower bound that rules out approximation ratios 
	determined by the quality of the blueprint. This constitutes our lower bound framework. 
	
	\item In~\Cref{sec:construct-blueprints}, we present some constructions of blueprints and use them, alongside our framework, to establish lower bounds for the semi-streaming matching problem and prove~\Cref{res:main}. 
\end{itemize}
Finally, we wrap up the paper in~\Cref{sec:concluding} with concluding remarks and pointers for future work.

\clearpage


\newcommand{\PS}[1]{\ensuremath{\texttt{Player}^{(#1)}}}
\newcommand{\ES}[1]{\ensuremath{E^{(#1)}}}

\newcommand{\msg}{\ensuremath{\texttt{msg}}}

\newcommand{\conc}{\circ}

\newcommand{\EE}[1]{\ensuremath{E}^{(#1)}}
\newcommand{\FF}[1]{\ensuremath{F}^{(#1)}}
\newcommand{\LL}[1]{\ensuremath{L}^{(#1)}}
\newcommand{\RR}[1]{\ensuremath{R}^{(#1)}}
\newcommand{\LLL}[1]{\ensuremath{L'}^{(#1)}}
\newcommand{\RRR}[1]{\ensuremath{R'}^{(#1)}}
\newcommand{\EEE}[1]{\ensuremath{E'}^{(#1)}}
\newcommand{\LRS}{\ensuremath{L_{RS}}}
\newcommand{\RRS}{\ensuremath{R_{RS}}}

\newcommand{\expand}[3]{\ensuremath{\textnormal{\texttt{Expansion}}(#1,#2,#3)}\xspace}
\newcommand{\sexpand}[3]{\ensuremath{\textnormal{\textsc{Strong-Expand}}_{#3}(#1,#2)}\xspace}
\newcommand{\inputs}[3]{\ensuremath{\textnormal{\textsc{Input}}_{#3}(#1,#2)}\xspace}

\newcommand{\bG}{\mathbb{G}}
\newcommand{\bV}{\mathbb{V}}
\newcommand{\bE}{\mathbb{E}}
\newcommand{\bL}{\mathbb{L}}
\newcommand{\bR}{\mathbb{R}}
\newcommand{\bv}{\mathbb{v}}
\newcommand{\bu}{\mathbb{u}}
\newcommand{\be}{\mathbb{e}}
\newcommand{\bff}{\mathbb{f}}
\newcommand{\bg}{\mathbb{g}}
\newcommand{\bw}{\mathbb{w}}

\newcommand{\bLL}[1]{\ensuremath{\bL}^{(#1)}}
\newcommand{\bRR}[1]{\ensuremath{\bR}^{(#1)}}
\newcommand{\bEE}[1]{\ensuremath{\bE}^{(#1)}}

\newcommand{\Mstar}{\ensuremath{M^{\star}}}

\newcommand{\nrs}{\ensuremath{n_{\textsc{ers}}}}

\newcommand{\Lnew}{\overline{L}}
\newcommand{\Rnew}{\overline{R}}

\newcommand{\sizes}{\vec{s}}

\newcommand{\extend}[3]{\ensuremath{\textnormal{\textsc{Extend}}(#1,#2,#3)}\xspace}
\renewcommand{\LLL}[1]{\ensuremath{L'}^{(#1)}}
\renewcommand{\RRR}[1]{\ensuremath{R'}^{(#1)}}
\renewcommand{\EEE}[1]{\ensuremath{E'}^{(#1)}}

\newcommand{\bLLL}[1]{\ensuremath{\bL}'^{(#1)}}
\newcommand{\bRRR}[1]{\ensuremath{\bR}'^{(#1)}}
\newcommand{\bEEE}[1]{\ensuremath{\bE}'^{(#1)}}
\newcommand{\hbG}{\ensuremath{\widehat \bG}}
\newcommand{\hbL}{\ensuremath{\widehat \bL}}
\newcommand{\hbR}{\ensuremath{\widehat \bR}}
\newcommand{\hbE}{\ensuremath{\widehat \bE}}
\newcommand{\hbe}{\ensuremath{\widehat \be}}
\newcommand{\hbLL}[1]{\ensuremath{\hbL}^{(#1)}}
\newcommand{\hbRR}[1]{\ensuremath{\hbR}^{(#1)}}
\newcommand{\hbEE}[1]{\ensuremath{\hbE}^{(#1)}}
\newcommand{\hP}{\ensuremath{\widehat P}}
\newcommand{\hB}{\ensuremath{\widehat B}}
\newcommand{\hC}{\ensuremath{\widehat C}}
\newcommand{\hbw}{\ensuremath{\widehat \bw}}

\newcommand{\hsizes}{\ensuremath{\widehat \sizes}}

\newcommand{\ssize}[1]{\ensuremath{size\left(#1\right)}}
\newcommand{\rratio}[1]{\ensuremath{ratio\left(#1\right)}}
\newcommand{\sscore}[1]{\ensuremath{score\left(#1\right)}}
\newcommand{\sscores}[1]{\ensuremath{score_s\left(#1\right)}}

\section{Preliminaries}

\paragraph{Notation.} We often denote a bipartite graph as $G=(L,R,E)$ with the bipartition of vertices $L$ and $R$. We always assume $\card{L} = \card{R}$ unless explicitly stated otherwise. 
For an edge $e \in E$, we use $L(e)$ and $R(e)$ to denote the endpoint of $e$ in $L$ and $R$, respectively. 

For a tuple $x = (x_1,\ldots,x_t)$ and $i \in [t]$, we use $x_{<i}$ to denote $(x_1,\ldots,x_{i-1})$; we define $x_{>i}$ and $x_{-i}$ analogously. Moreover, for two tuples $x=(x_1,\ldots,x_a)$ and $y=(y_1,\ldots,y_b)$, 
we use $x \conc y$ to denote the concatenated $(a+b)$-tuple $(x_1,\ldots,x_a,y_1,\ldots,y_b)$. Finally, when considering tuples $(x_1,\ldots,x_t) \in [C]^t$ for some $C$ depending on the context, 
we may represent a set of tuples by using $*$-notation in some indices to denote all possible choices from $[C]$ for those indices, e.g., 
\[
(*,x_2,\ldots,x_{t-1},*) = \set{(a,x_2,\ldots,x_{t-1},b) \mid a,b \in [C]}.
\]

\paragraph{Notation for blueprints.} Recall that a \emph{blueprint} for us is a constant-size graph, which shall be defined formally in~\Cref{sec:blueprints}. 
In order to easily distinguish between blueprints and our ``actual''  graphs, throughout, we always use `blackboard bold' letters to denote blueprints and their vertices and edges, e.g., write $\bG=(\bV,\bE)$,
and use `normal' letters for other  graphs, e.g., write $G=(V,E)$.

\subsection{A Communication Game for Approximate Matching}\label{sec:cc-game}

We consider approximating matchings in the following one-way number-in-hand communication model with shared blackboard: 
\begin{ourbox}
\begin{itemize}[leftmargin=10pt]
	\item An input graph $G=(V,E)$ is edge-partitioned between $P$ players $\PS{1},\ldots,\PS{P}$. We denote the edges given to the player $p \in [P]$ by $\ES{p}$ and thus have $E = \ES{1} \sqcup \ldots \sqcup \ES{P}$. 
	\item The communication is done using a shared blackboard. First, $\PS{1}$ writes a message $\msg_1$ based on $\ES{1}$ on the shared blackboard which will be visible to all subsequent players. 
	Then, $\PS{2}$ writes the next message $\msg_2$ based on $\ES{2}$ and $\msg_1$. The players continue like this until $\PS{P}$ writes the last message $\msg_P$ which is a function of $\ES{P}$ and $\msg_{<P}$.
	\item The goal of the players is to output an approximate maximum matching of $G$ by $\PS{P}$ writing it on the shared blackboard as the message $\msg_P$. 
	\item We assume that the input of players is sampled from some distribution and against this distribution, the players are following a deterministic protocol. 
\end{itemize}
\end{ourbox}
The \textbf{communication cost} of a protocol used by the players is defined as the worst-case number of bits written by any one player on the blackboard on any input.

We also note that by Yao's minimax principle, now that we are working on a distributional input, the assumption that the protocol is deterministic is without loss of generality and any communication lower bound here 
readily extends to randomized protocols as well. 

We have the following well-known result---dating back to the introduction of the streaming model in~\cite{AlonMS96}---that relates communication cost lower bounds to streaming ones. 

\begin{proposition}\label{prop:cc-stream}
	Suppose for any $P \geq 2$, there is some distribution on $n$-vertex graphs as inputs to the $P$-player communication game of the approximate matching problem such that
	any protocol with communication cost $s(n)$ can only achieve an approximation ratio of $\alpha(n)$ with probability at most $1-\delta(n)$. Then, any streaming algorithm using $s(n)$ space
	can only achieve an approximation ratio of $\alpha(n)$ for the maximum matching problem with probability at most $1-\delta(n)$.  
\end{proposition}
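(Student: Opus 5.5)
We prove the contrapositive by simulation: a good streaming algorithm yields a good protocol. Fix $P\geq 2$ and the hard distribution $\mathcal{D}$ on partitioned inputs $E=\ES{1}\sqcup\cdots\sqcup\ES{P}$ from the hypothesis. Suppose some (possibly randomized) streaming algorithm $\mathcal{A}$ uses $s(n)$ space and, on every $n$-vertex graph and every stream order, outputs an $\alpha(n)$-approximate matching with probability greater than $1-\delta(n)$. We will build a protocol with communication cost $s(n)$ whose success probability on inputs from $\mathcal{D}$ also exceeds $1-\delta(n)$, contradicting the hypothesis.

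The simulation is the standard one. Fix the stream order in which all edges of $\ES{1}$ come first (in some canonical order), then those of $\ES{2}$, and so on. $\PS{1}$ runs $\mathcal{A}$ on $\ES{1}$ and writes the resulting memory state as $\msg_1$. For $p\geq 2$, $\PS{p}$ reads the memory state from $\msg_{p-1}$, continues running $\mathcal{A}$ on $\ES{p}$, and writes the new state as $\msg_p$. At the end, $\PS{P}$ also writes $\mathcal{A}$'s output matching. Each intermediate message is a memory snapshot of at most $s(n)$ bits. The final output is a matching, which can be written in $O(n\log n)$ bits; we either count it as part of the streaming algorithm's output, as the model implicitly allows, or note that $s(n)=\Omega(n\log n)$ in the semi-streaming regime. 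The transcript of $\mathcal{A}$ on this stream is exactly reproduced, so the protocol's output has the same distribution as $\mathcal{A}$'s output on the concatenated stream of $G$.

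To handle randomness, we note that $\mathcal{A}$ succeeds with probability above $1-\delta(n)$ on every input. Averaging over $G\sim\mathcal{D}$ and $\mathcal{A}$'s random string $r$ therefore gives success probability above $1-\delta(n)$. Hence there is a fixed $r^*$ for which the success probability over $\mathcal{D}$ alone exceeds $1-\delta(n)$. We hardwire $r^*$ into all players, which is valid because the players agree on the protocol in advance. This gives a deterministic protocol that contradicts the assumed distributional bound, consistent with the Yao remark above.

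The main point requiring care is the bookkeeping rather than anything deep. The state passed between players must be the full memory content, including any random bits $\mathcal{A}$ stores, which is why we fix $r^*$ first. We also need to make precise how the final matching output is charged against the $s(n)$ budget. Nothing else is an obstacle.
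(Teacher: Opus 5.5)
Your proposal is correct and takes the same route the paper intends. The paper gives no proof of its own; it cites the result as well known, going back to~\cite{AlonMS96}, and the underlying argument is the standard simulation in which players run the streaming algorithm on their edges in turn and pass its memory state on the blackboard. Your extra bookkeeping is what makes that argument rigorous: you fix the random string by averaging over the hard distribution, and you charge the final matching's $O(n\log n)$ bits against $s(n)$ in the semi-streaming regime.
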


\subsection{A Simple Lower Bound for Lossy Compression}\label{sec:comp-lem}

We also need a basic information theory result, which allows us---for the particular distribution of our input graphs---to effectively treat any communication protocol as a one wherein players are only communicating edges of the graph. 
In the following, think of a compression scheme as the protocol each player in a communication game runs to compute its messages based on their input and messages of prior players. Consider the following setup: 
\begin{ourbox}
\begin{itemize}[leftmargin=10pt]
	\item We have an arbitrary graph $\Gbase = (V, \Ebase)$ on a set $n$ of vertices $V$ and $m$ edges $\Ebase$. %
	\item For any $\delta \in (0,1/2)$, define the distribution $\GG := \GG(\Gbase,\delta)$ over subgraphs of $\Gbase$ by removing $\delta$ fraction of edges in $\Gbase$ uniformly at random to 
	obtain a sample $G = (V,E)$. 
	\item  A \textbf{compression scheme} of size $s \geq 1$ for $\GG$ is any mapping $\Phi : \supp{\GG} \rightarrow \set{0,1}^s$ that maps any graph $G$ in the support of $\GG$ to a $s$-bit string $\Phi(G)$. We refer 
	to each $\phi \in \set{0,1}^s$ in the image of $\Phi$ as a \textbf{summary}. 
	\item Fix a compression scheme $\Phi$. For any edge $e \in \Gbase$ and any summary $\phi \in \set{0,1}^s$, we say that the edge $e$ \textbf{belongs to} $\phi$ iff the following holds: 
	\[
		\Pr_{G \sim \GG}\paren{\text{$e$ is an edge in $G$} \mid \Phi(G) = \phi} \geq 1 - \delta/2. 
	\]
	Informally speaking, this means that an edge $e$ belongs to $\phi$ if ``a vast majority'' of the graphs mapped to $\phi$ under $\Phi$ contain the edge $e$. 
	For any summary $\phi$, we use $\belong{\Ebase}{\phi}$ to denote the set of edges in $\Ebase$ that belong to $\phi$. 
\end{itemize}
\end{ourbox}

The following result is the only information-theoretic tool we need in this paper. Roughly speaking, it states that if we sample a graph $G$ from some $\GG(\Gbase,\cdot)$ and compute its summary
using any fixed compression scheme of size $s$, then, in expectation, only $O(s)$ edges can belong to this summary; in other words, given a summary we can only ``recover'' $O(s)$ edges of 
the input graph in expectation. 

\begin{proposition}\label{prop:compression}
	Fix any graph $\Gbase$ on $m$ edges, any $\delta \in (0,1/2)$, and any $s \geq \log{(m+1)}$. Consider any compression scheme $\Phi$ of size $s$ for $\GG = \GG(\Gbase,\delta)$. 
	For $G$ sampled from $\GG$, 
	\[
		\Exp_{G \sim \GG} \card{\belong{\Ebase}{\Phi(G)}} \leq  \frac{8s}{\delta \cdot \log{(1/\delta)}}. 
	\]
\end{proposition}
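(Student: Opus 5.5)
The approach is an entropy-counting argument. Let $X_e \in \set{0,1}$ indicate whether edge $e$ survives in $G$, so $G$ is uniform over subsets of $\Ebase$ of size $(1-\delta)m$ and $\en{G} = \log\binom{m}{\delta m}$. We compare this with $\en{G \mid \Phi(G)} \geq \en{G} - s$. The goal is to show that each edge belonging to $\phi$ forces a large entropy deficit, of order $\delta\log(1/\delta)$ bits, relative to the unconditioned distribution.

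\textbf{Step 1 (per-summary upper bound).} Fix a summary $\phi$ and let $B = \belong{\Ebase}{\phi}$ with $k := \card{B}$. Conditioned on $\Phi(G)=\phi$, write $G$ as the pair $(G\cap B,\, G \setminus B)$. The number of removed edges inside $B$ is a random variable $Z$ with $\Exp[Z \mid \phi] \leq k\delta/2$, because each $e\in B$ is missing with probability at most $\delta/2$. Hence
\[
\en{G \mid \Phi=\phi} \leq \en{Z} + \Exp\Big[\log\binom{k}{Z}\Big] + \Exp\Big[\log\binom{m-k}{\delta m - Z}\Big].
\]
We have $\en{Z} \leq \log(m+1)$, and this is where the assumption $s \geq \log(m+1)$ is used to absorb the term. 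By concavity of $z\mapsto \log\binom{k}{z}\binom{m-k}{\delta m-z}$ (at least approximately, via $\log\binom{a}{b}\approx a\,h(b/a)$), the sum is at most its value at $z=k\delta/2$.

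\textbf{Step 2 (the deficit).} Compare $\log\binom{k}{k\delta/2}+\log\binom{m-k}{\delta m - k\delta/2}$ with $\log\binom{m}{\delta m}$. Using the binary entropy approximation, the deficit is
\[
m\,h(\delta) - k\,h(\delta/2) - (m-k)\,h\Big(\tfrac{\delta(m-k/2)}{m-k}\Big).
\]
This should be at least $k\cdot c\,\delta\log(1/\delta)$ for a constant $c$ such as $1/4$. The intuition is that moving density from $\delta$ down to $\delta/2$ on $k$ coordinates loses about $k(h(\delta)-h(\delta/2)) \approx k\cdot\frac{\delta}{2}\log(1/\delta)$. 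Meanwhile, the compensating increase of density on the remaining $m-k$ coordinates gains only $\frac{k\delta}{2}\,h'(\delta)\approx \frac{k\delta}{2}\log(1/\delta)$ to first order. These two leading terms cancel. A direct first-order estimate is therefore not enough, and the argument has to keep the concavity term: $h(\delta)-h(\delta/2)-\frac{\delta}{2}h'(\delta) = \frac{\delta}{2}\log_2 e - \ldots$.

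Step 2 is the main obstacle. Because the leading terms cancel, the true deficit may only be of order $k\delta$ rather than $k\delta\log(1/\delta)$. If so, I would redo the count directly rather than through entropy. Condition on $\phi$ and bound the number of graphs in which at least $k$ specified edges are mostly present, via a Markov-style argument: with probability at least $1/2$ at most $k\delta$ edges of $B$ are missing. Then compare $\binom{m-k}{\delta m}$-type counts against $\binom{m}{\delta m}$. The ratio is about $(1-\delta)^{k}$, and the missing $\log(1/\delta)$ factor would come from the $\binom{k}{k\delta}$ slack. I would keep whichever version gives $\Omega(\delta\log(1/\delta))$ per edge, accepting a weaker constant if needed.

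\textbf{Step 3 (averaging).} Average over $\phi$ and use concavity once more, since the deficit is linear or convex in $k$. This gives $\Exp_\phi[k]\cdot c\,\delta\log(1/\delta) \leq \en{G}-\en{G\mid\Phi} + \log(m+1) \leq 2s$. With $c=1/4$ this yields the claimed $8s/(\delta\log(1/\delta))$.
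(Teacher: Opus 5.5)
Your diagnosis in Step 2 is correct, and it is fatal. The gap cannot be closed by any sharper estimate, because the stated bound is false for small $\delta$.

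Run your Step 1 exactly: use $\log\binom{a}{b}\le aH_2(b/a)$, Jensen in $Z$, and concavity of $H_2$ on the complement. The deficit it certifies is $k\,[H_2(\delta)-H_2(\delta/2)-\tfrac{\delta}{2}H_2'(\delta)]$. This equals $k\,D(\delta/2\,\|\,\delta)$, where $D$ is the binary KL divergence, and $D(\delta/2\,\|\,\delta)\approx\tfrac{\delta}{2}(\log e-1)=\Theta(\delta)$. In fact $D(q\,\|\,\delta)\le\log\frac{1}{1-\delta}=O(\delta)$ for every $q\le\delta$: learning that a likely edge is present carries only $O(\delta)$ bits.

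This rate is achievable, which is why no per-edge bound of order $\delta\log(1/\delta)$ can exist:
\begin{itemize}
\item Split $sg$ edges into $s$ groups of $g=4/\delta$ edges. Let bit $j$ of $\Phi(G)$ record whether at most $2$ edges of group $j$ were removed.
\item Condition on $\Phi(G)=\phi$. By exchangeability inside a group, every edge of a group with bit $1$ is removed with probability at most $2/g=\delta/2$, so it belongs to $\phi$.
\item Take $s$ and then $m$ large with $sg\le m\le 2^s-1$. The removed count in a group is essentially $\mathrm{Bin}(g,\delta)$ with mean $4$, so $\Pr(\text{bit } j=1)\approx 13e^{-4}$.
\item Hence $\Exp\card{\belong{\Ebase}{\Phi(G)}}\approx 13e^{-4}\cdot gs\approx 0.95\,s/\delta$. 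This exceeds $8s/(\delta\log(1/\delta))$ once $\log(1/\delta)>8.5$, e.g.\ for $\delta=2^{-10}$.
\end{itemize}
So your Markov-style fallback cannot recover the $\log(1/\delta)$ factor either.

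The paper's proof has the same skeleton as yours: $\en{X\mid S}\ge\log\binom{m}{\delta m}-s$ against a per-summary upper bound. Its $\log(1/\delta)$ gain comes from bounding $\en{X_{\overline{B(\phi)}}\mid S=\phi}$ by $(m-\card{B(\phi)})\,H_2(\delta)$ via a support-size count. That count only gives $(m-k)\,H_2\!\left(\frac{\delta m}{m-k}\right)$, which is larger by about $k\delta\log(1/\delta)$. This is the same compensating first-order term you identified, so the paper's step is not valid. Your decomposition by $Z$ couples the removed edges inside and outside $B(\phi)$, and it is the correct refinement.

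Complete your argument with the KL identity above and your Step 3. Note that there are two $\log(m+1)$ losses, one from $\en{Z}$ and one from $\log\binom{m}{\delta m}\ge mH_2(\delta)-\log(m+1)$, so the right-hand side is $3s$, not $2s$. The argument then proves $\Exp\card{\belong{\Ebase}{\Phi(G)}}\le 3s/D(\delta/2\,\|\,\delta)=O(s/\delta)$. This is the true order. It suffices wherever the proposition is used, since $\delta$ is a fixed constant there; only the constant in \Cref{lem:special-count} and the choice of $s$ in the proof of \Cref{thm:framework} change.
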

\noindent
We present the proof of this result, using standard tools, in~\Cref{app:info}.

\clearpage


\newcommand{\MIXY}[3]{\ensuremath{M_{{#1},{#2},{#3}}}\xspace}
\newcommand{\Mixy}{\MIXY{i}{x}{y}}
\newcommand{\MI}[1]{\ensuremath{M_{#1}}}
\newcommand{\Mi}{\MI{i}}

\newcommand{\weight}{\ensuremath{weight}}
\newcommand{\ccolor}{\ensuremath{color}}
\newcommand{\hweight}{\ensuremath{\widehat{\weight}}}

\section{Extended \rs Graphs}\label{sec:ers}

We start the technical part of the paper by presenting a key ingredient needed for our main results, an extremal graph family inspired by \rs (RS) graphs. 
Recall that an RS graph is a graph whose edges can be partitioned into \emph{induced} matchings, each with at least a certain size~\cite{RuzsaS78}. Here, by induced matching we mean that the subgraph induced on vertices of the matching
only contains the edges of this matching. 

We introduce a new family of graphs inspired by RS graphs, which we call \emph{ERS} graphs (short for \emph{extended} \rs graphs). It helps to provide some context first. 

\paragraph{Hamming graphs.} Consider the following \emph{bipartite hamming} graph $G=(L,R,E)$ where $L,R$ are labeled by $[C]^d$ and there is an edge from $x \in L$ to $y \in R$ iff $x,y$ differ in exactly one coordinate. 
In this graph, fixing any $i \in [d]$ partitions $L$ and $R$ into $C$ groups 
\[
L_{i,a}:= (\underbrace{*,\ldots,*}_{i-1},a,\underbrace{*,\ldots,*}_{d-i}) \quad \text{and} \quad R_{i,a} := (\underbrace{*,\ldots,*}_{i-1},a,\underbrace{*,\ldots,*}_{d-i})
\]
 for $a \in [C]$. It can be easily verified that these graphs have two very interesting properties (the reason behind our interest in these properties will become clear in~\Cref{sec:expand,sec:blueprint-lower}): 
 \begin{itemize}
 	\item For any $i \in [d]$ and $a \neq b \in [C]$, there is a perfect matching $M_{i,a,b}$ between $L_{i,a}$ and $R_{i,b}$; 
	\item For any $i \in [d]$ and $a \neq b \in [C]$, the matching $M_{i,a,b}$ is an \emph{induced} matching. 
 \end{itemize}
 The catch here is that these graphs are very sparse and can only have $\Theta(n\log{n})$ edges for any constant $C \geq 2$. Thus, we will not be able 
 to use them directly when designing our hard instances for semi-streaming algorithms (as the algorithm can just store such a graph directly in its memory). 

This is where our definition of ERS graphs comes into picture. A reader familiar with \rs graphs, specifically the ones of~\cite{FischerLNRRS02,GoelKK12}, may know that they can be seen as a somewhat-dense \emph{relaxation} of bipartite hypercube graph (namely, a hamming graph with $C=2$). In the same vein, we define our ERS graphs as a relaxation of Hamming graphs; this relaxation allows us to create a somewhat-dense version of them in~\Cref{prop:base-c-rs}; see also~\Cref{fig:base-rs}.

\begin{definition}\label{def:base-rs}
For any $r,t,C \geq 1$, a \textnormal{\textbf{$(C,r,t)$-ERS graph}} is a bipartite graph $G=(L, R, E)$ with $t \cdot C^2$ matchings indexed by $\Mixy$ for $i \in [t]$ and $x,y \in [C]$ satisfying the following conditions (throughout, we define $\Mi := \bigcup_{x,y \in [C]} \Mixy$ as a \textbf{matching-group}): 

\begin{enumerate}
    \item \textnormal{\textbf{Size:}} For each $i \in [t]$ and $x, y \in [C]$, we have $\card{M_{i,x,y}} = r$.
    \item \textnormal{\textbf{Disjointness:}} For any $i \neq j \in [t]$, we have $M_i \cap M_j = \emptyset$.
    \item \textnormal{\textbf{Decomposition:}} For each $i \in [t]$, there exist $C$ disjoint groups
    \[
   L_i := L_{i,1} \sqcup \ldots \sqcup L_{i,C} \qquad \text{and} \qquad R_i := R_{i,1} \sqcup \ldots \sqcup R_{i,C}
    \]
    where for each $x, y \in [C]$, $M_{i,x,y}$ is a matching between $L_{i,x}$ and $R_{i,y}$. %
    \item \textnormal{\textbf{Inducedness:}} For any $i \in [t]$, any edge of $E \setminus M_i$ that is between vertices $L_i$ and $R_i$ of $M_i$ can only be between some pair $L_{i,x}$ and $R_{i,x}$ for $x \in [C]$.
\end{enumerate}
\end{definition}

\begin{figure}[h!]
	\centering
	\subcaptionbox{A matching-group $M_i$ consisting of disjoint matchings $M_{i,x,y}$ for $x,y \in [4]$, each denoted by a single line.}%
	[.48\linewidth]{
		\includegraphics[scale=0.4]{Figs/base-rs-Mi.png}
	} 
	\hspace{0.2cm} 
	\subcaptionbox{All other edges of $G \setminus M_i$ between $L_i$ and $R_i$, denoted by dashed lines.}%
	[.48\linewidth]{
	\includegraphics[scale=0.4]{Figs/base-rs-minusMi.png}
	}
	\caption{An illustration a $(4,r,t)$-ERS graph and a single matching-group $M_i$ inside it.}
	\label{fig:base-rs}
\end{figure}

We emphasize that in~\Cref{def:base-rs} each matching-group $M_i$ is not a matching but rather a union of $C^2$ disjoint matchings, and moreover---unlike Hamming graphs mentioned earlier---each 
$M_{i,x,y}$ is \emph{not} necessarily a perfect matching between $L_{i,x}$ and $R_{i,y}$. 
We also note that these ERS graphs are \emph{not} \rs graphs\footnote{Although a $(C,r,t)$-ERS graph can be turned into a \rs graph by removing
matchings $M_{i,x,x}$ for any $x \in [C]$: any remaining matching $M_{i,x,y}$ for $x \neq y \in [C]$ will now be an induced matching in the entire graph.} but satisfy the following similar property: each $M_i$ is a group of large matchings instead of a single matching; and, the induced subgraph of 
$G$ on vertices of $M_i$ may contain (many) other edges, but those edges can only be between certain pairs of vertices. 

We present a construction of $(C,r,t)$-ERS graphs for all constant values of $C$, using ideas inspired by existing RS graph constructions in~\cite{FischerLNRRS02,GoelKK12}.

\begin{theorem}\label{prop:base-c-rs}
	Fix any integer $C \geq 2$ and parameter $\delta \in (0,1/(40C))$. There exists an integer $n_0 := n_0(C,\delta)$ such that for infinitely many values of $n \geq n_0$, 
	there are $(C,r,t)$-ERS graphs with $n$ vertices on each side of the bipartition with parameters $t = n^{\Omega(1/\log\log{n})}$ and $r = n/C - \delta \cdot n$. 
\end{theorem}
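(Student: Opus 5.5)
Following the route of \cite{FischerLNRRS02,GoelKK12}, I will build the graph on a Hamming-type vertex set and use \emph{balanced-weight} constraints to get inducedness. Take vertex set $L = R = [C]^d$, with $d$ large and $C$ fixed, so $n = C^d$. For a coordinate set $S \subseteq [d]$ and $x \in [C]$, let $w_x(u|_S)$ be the number of coordinates in $S$ where $u$ takes value $x$.

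The matching-groups are indexed by subsets $S \subseteq [d]$ of size $|S| = s$. I take a family $\mathcal{S}$ of such sets with pairwise intersections at most $s/2$ (or some other constant fraction); a greedy code argument gives $|\mathcal{S}| = t = 2^{\Omega(s)}$ when $s \approx \sqrt{d}$, or thereabouts. For each $S$ the decomposition is as follows:
\[
L_{S,x} := \{u : u|_S \text{ has all weights balanced} \pm \sqrt{s}\text{, and } u_{i_S} = x\},
\]
where $i_S$ is a designated coordinate. More faithfully to \cite{FischerLNRRS02}, $x$ would instead be encoded by a weight shift: $M_{S,x,y}$ matches $u$ to $v$, where $v$ equals $u$ outside $S$ and equals $\pi_{x\to y}(u|_S)$ on $S$, for a fixed coordinatewise permutation moving value $x$ to $y$.

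The key check is inducedness. Any edge of a different group $S'$ changes coordinates in $S'$ and shifts the weight profile on $S'$ by about $s$. Restricted to $S$, it therefore changes only $|S \cap S'| \le s/2$ coordinates. By concentration, such an edge can connect $L_S$ to $R_S$ only between the "same label" classes; this is exactly the weakened inducedness allowed in \Cref{def:base-rs}, which permits $L_{i,x}$–$R_{i,x}$ edges. The labels must be defined by a weight statistic that an $S'$-edge cannot move across classes.

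Sizes: restricting to vertices whose weight profile on $S$ is typical loses only a $\delta$ fraction (Chernoff), giving $r \approx n/C - \delta n$. The $\delta < 1/(40C)$ slack absorbs this loss together with the boundary losses between classes.

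Counting: $t = 2^{\Omega(\sqrt d)}$ alone is too few, since that is only $n^{\Omega(1/\sqrt{\log n})}$. To reach $n^{\Omega(1/\log\log n)}$, I instead use the Behrend-style/FLNRRS construction. Vertices are $[m]^k$-type vectors; an edge $(u, u+\lambda\cdot\mathbf{1}_S$-type shift$)$ is used, and inducedness comes from the strict convexity of the sphere (vectors on a fixed-norm shell). With $C$ "shift amounts" indexed by $y - x$, the groups are indexed by directions, and the shell restriction costs only a $\poly(d)$ factor, which the $\delta$ loss handles by grouping. This yields $t = n^{\Omega(1/\log\log n)}$.

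The main obstacle is making inducedness hold \emph{exactly} across all $C^2$ label pairs. Cross-group edges must land only in diagonal pairs $(x,x)$, and within a group, edges from other matchings $M_{S,x',y'}$ must not create off-diagonal non-group edges. I would try first to define the label of $u$ in group $S$ as a coarse rounding of a linear functional $\langle u, \mathbf{1}_S\rangle \bmod$ (scale). A foreign edge then perturbs this functional by less than the rounding width, while a group edge shifts it by exactly $y - x$ widths. The bulk of the work is checking boundary vertices, which are discarded at cost $\delta n$.
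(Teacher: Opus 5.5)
Your final paragraph names the right mechanism, and it is exactly the paper's: label $u$ in group $i$ by the window of $\sum_{k\in S_i}u_k \bmod Q$ it falls in, leave buffer zones between windows, and match by adding a multiple of $\mathbb{1}_{S_i}$. However, the proposal never turns this into a construction, and the two constructions you do commit to both fail.

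On $[C]^d$ with $\card{S\cap S'}\le s/2$, a foreign edge can change half the coordinates of $S$. Its effect on any statistic of $u|_S$ is therefore of the same order as that of a group edge. Appealing to ``concentration'' does not help, because inducedness is a worst-case condition on \emph{every} edge between $L_i$ and $R_i$, not a statement about typical vertices.

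The Behrend/sphere-shell route belongs to the other RS regime. Restricting to a norm shell keeps only a $1/\mathrm{poly}$ fraction of vertices. That is a multiplicative loss, which forces $r=o(n)$ and cannot be ``handled by the $\delta$ loss''. The $n^{\Omega(1/\log\log n)}$ bound in \cite{FischerLNRRS02,GoelKK12} and here does not come from Behrend at all.

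What is missing is the parameter regime that makes the linear-functional idea work, and each piece is load-bearing. The paper uses:
\begin{itemize}
\item $L=R=[p]^d$ with $p=d^2$;
\item sets $S_1,\dots,S_t$ of size $\delta d$ with pairwise intersections at most $4\delta^2 d$, chosen at random so that $t=2^{\Omega(d)}$;
\item windows of width $B=d$ separated by buffers of width $W=10C\delta d$, with period $P=B+W$ and $Q=CP$, where $Q\mid p$;
\item the shift $u\mapsto u+\mathbb{1}_{S_i}\cdot\big((y-x)(10C+1/\delta)+Q/(\delta d)\big)$.
\end{itemize}
This shift moves the $S_i$-weight by $(y-x)P+Q$, while an edge of another group moves it by less than $W$.

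Here is why each choice matters:
\begin{itemize}
\item Intersections must be an $O(\delta)$ fraction of $\card{S_i}$, not a constant fraction like $1/2$. Only then does the foreign perturbation fit inside a buffer while the windows still cover a $B/Q\ge(1-10C\delta)/C$ fraction of vertices.
\item $p=\mathrm{poly}(d)$ is needed so that only an $O_{C,\delta}(d/p)=o(1)$ fraction of vertices leave the grid under a shift. It also makes the weight equidistributed mod $Q$.
\item $n=p^d=d^{2d}$ is precisely what converts $t=2^{\Omega(d)}$ into $n^{\Omega(1/\log\log n)}$.
\item The extra full period $+Q$ in the shift matters. With a shift of ``exactly $y-x$ widths'', $M_{i,x,x}$ would consist of pairs $(u,u)$ shared by different groups, violating disjointness. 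With the $+Q$, $v-u$ is supported exactly on $S_i$.
\end{itemize}

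Finally, edges of $M_{i,x',y'}$ lie in $M_i$ and are not constrained by inducedness. So the within-group obstacle you list is not actually a requirement.
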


We note that in~\Cref{prop:base-c-rs}, we think of $C$ and $\delta$ as being constants (albeit $C$ can be arbitrarily large and $\delta$ can be arbitrarily small), and then let $n$ go to infinity and thus we hide dependence on $C$ and $\delta$ in the asymptotic notation. We should also point out that in the ERS graphs of~\Cref{prop:base-c-rs}, each matching-group spans all but $\delta$ fraction of vertices of the graph. 

The proof of~\Cref{prop:base-c-rs} is orthogonal to the rest of our ideas in the paper and follows the RS graph constructions of~\cite{FischerLNRRS02,GoelKK12}. As such, we postpone 
this proof to~\Cref{app:ers}, but we note that it does \emph{not} follow directly from~\cite{FischerLNRRS02,GoelKK12} and requires further modifications. 

\medskip
Before moving on, we note that in recent years, multiple other variants and generalizations of RS graphs have also been introduced in streaming and related areas, 
including \emph{lopsided RS graphs}~\cite{GoelKK12}, \emph{perfectly matchable RS graphs}~\cite{KonradN21}, \emph{full RS graphs}~\cite{PyneSW25}, \emph{ordered RS graphs}~\cite{BehnezhadG24,Pratt25}, \emph{cluster packing graphs}~\cite{AssadiKNS24,AssadiSY26}, and \emph{matching contractors}~\cite{AssadiBLLW25}. Our ERS graphs are orthogonal to all these and extend RS graphs in yet another direction (the techniques behind their constructions also seem to be closest to
those of the original RS graph constructions in~\cite{FischerLNRRS02,GoelKK12} compared to these more recent variants).

\clearpage


\section{Blueprints}\label{sec:blueprints}

We now present the central concept of our work: \emph{blueprints}. %
In the following, we first define these blueprints formally and then show how we can ``expand'' them into larger graphs. We conclude this section by providing some additional properties and equivalences for blueprints.

\subsection{Blueprints Definition}\label{sec:blueprints-def}

We start by defining the blueprints formally below. We note that the definition of blueprints may seem daunting at first due to the number of parameters involved.
To build intuition, we encourage the reader to first consider the case where $B_L = B_R = 1$, $\sizes_L$ and $\sizes_R$ are the all-ones tuples, and the fractional assignment maps every edge to one;
this is what we call a \emph{simple} blueprint below\footnote{We will later show (in~\Cref{sec:equiv-blueprints}) that
such blueprints can capture the entire power of all blueprints and thus without loss of generality, one can always work with those; that being said, specifying blueprints at that level will be quite cumbersome for
our constructions and we need the flexibility of the more general, albeit daunting, definition.}.

\begin{definition}\label{def:blueprint}
    For integers $P,C,B_L,B_R \geq 1$ and tuples $\sizes_L,\sizes_R$ of dimensions $B_L,B_R$, respectively,
    we define a \textbf{blueprint} with parameters $(P,C,B_L,B_R,\sizes_L,\sizes_R)$ as any bipartite graph $\bG=(\bL,\bR,\bE,\bw)$ with a fractional assignment $\bw$ to edges $\bE$
    with the following properties:
    \begin{itemize}
    \item Vertices in $\bL$ (resp. $\bR$) are partitioned into $B_L$ (resp. $B_R$) equal-size \textbf{blocks}:
    \[
    	\bL = \bLL{1} \sqcup \cdots \sqcup \bLL{B_L}, \quad \bR = \bRR{1} \sqcup \cdots \sqcup \bRR{B_R},
    \]
    where a block consists of $C^P$ vertices labeled by $[C]^P$. Specifically, for $b \in [B_L]$, $b' \in [B_R]$, we denote these vertices as:
    \begin{align*}
  	  \bLL{b} = \set{\bLL{b}(x) \mid x \in {[C]^P}} \quad \text{and} \quad \bRR{b'} := \set{\bRR{b'}(x) \mid x \in {[C]^P}}.
    \end{align*}
    We refer to $\sizes_L(b)$ (resp. $\sizes_R(b')$) as the \textbf{relative size} of $\bLL{b}$ (resp. $\bRR{b'}$).
    \item Edges of $\bG$ are also partitioned into $P$ groups:
    $
      \bE = \bEE{1} \sqcup \cdots \sqcup \bEE{P}.
    $
    \end{itemize}
    We say a blueprint is \textbf{simple} iff $B_L = B_R = 1$, $\sizes_L = \sizes_R = (1)$, and $\bw(\be) = 1$ for all $\be \in \bE$.  Thus, a simple blueprint can be identified
    with $\bG=(\bL,\bR,\bE)$ and $(P,C)$. Moreover, we denote vertices of a simple blueprint directly with $\bL(x),\bR(x)$ for $x \in [C]^P$ (i.e., drop the superscript). 
    
    \noindent
    \textbf{Note:} We require parameters and fractional assignments of blueprints to be \underline{rational} numbers. 
\end{definition}

These definitions may sound quite abstract at this point, but let us provide a brief intuition. The key parameter of a blueprint is $P$: a
 blueprint with parameter $P$ is designed for a $(P+1)$-player communication game. The remaining parameters are all there to provide further flexibility in the design of the blueprint and its eventual expansion into an actual graph (roughly speaking, and
 not that accurately, each pair $\bLL{b},\bRR{b'}$ will be expanded into $P$ many $(C,\cdot,\cdot)$-ERS graphs defined in \Cref{def:base-rs} using the edges of $\bG$ between these two blocks).

In order for us to be able to use a blueprint, we need it to satisfy additional properties captured by the following definition. We again encourage the reader, for the sake of intuition, to first
consider simple blueprints in the following definition which in particular simplifies the fractional assignment $\bw$ below to an integral matching that is the same as $\bE$.

\begin{definition}\label{def:proper-blueprint}
	We say a blueprint $\bG=(\bL,\bR,\bE, \bw)$ with parameters $(P,C,B_L,B_R,\sizes_L,\sizes_R)$ is \textbf{proper} if it additionally satisfies the following constraints:
    \begin{itemize}
        \item \textnormal{\textbf{Matching constraint:}} For any $b \in [B_L]$ (resp. $b' \in [B_R]$) and any vertex $\bu \in \bLL{b}$ (resp. $\bv \in \bRR{b'}$), either $\bu$ (resp. $\bv$) is
        \underline{not} incident on any edge of $\bE$ \underline{or}
        \[
        		\sum_{\be \ni \bu} \bw(\be) = \sizes_L(b) \qquad \text{(resp.} \qquad \sum_{\be \ni \bv} \bw(\be) = \sizes_R(b')~).
        \]
        In other words, the assignment $\bw$ forms a fractional matching wherein each vertex is either not matched at all or is matched equal to the relative size of the block it belongs to.

        \item \textnormal{\textbf{Ban constraints:}} For any $b \in [B_L],b' \in [B_R]$ and $x,y \in [C]^P$ and vertices $\bLL{b}(x)$ and $\bRR{b'}(y)$ incident on the same edge $\be \in \bEE{p}$ for some $p \in [P]$:
	\begin{quote}
		for any $z \in [C]^{P-p+1}$, at least one of vertices $\bLL{b}(x_{<p} \conc z)$ or $\bRR{b'}(y_{<p} \conc z)$ has no edges in $\bG$; we
		say these two vertices are \textbf{banned together} by the edge $\be$.
	\end{quote}
    \end{itemize}
\end{definition}

We note that the ban constraint is the most important part of the definition of a proper blueprint and yet we need to postpone explaining the intuition behind it to later when
we will be in a better position to say something intelligible about it. But, see~\Cref{fig:bans} for an example of ban constraints.

\begin{figure}[h!]
	\centering
	\subcaptionbox{An edge in $\bEE{1}$ will ban all $(\bLL{1}(x),\bRR{1}(x))$ pairs.}%
	[.48\linewidth]{
		\includegraphics[scale=0.35]{Figs/ban-p1.png}
	}
	\hspace{0.2cm}
	\subcaptionbox{The same edge in $\bEE{2}$ will ban a different but half the fraction of pairs.}%
	[.48\linewidth]{
	\includegraphics[scale=0.35]{Figs/ban-p2.png}
	}
	\caption{A simple blueprint with parameters $P=2$ and $C=2$. Solid (black) lines show the single edge in each blueprint and dashed (red) lines show the pairs of vertices banned by the edge.}
	\label{fig:bans}
\end{figure}

Finally, the following quantities determine how ``good'' a blueprint is for our purpose.
\begin{definition}\label{def:blueprint-value-approx}
	Define the \textnormal{\textbf{value}} of a blueprint $\bG = (\bL,\bR,\bE,\bw)$ as:
 	 \[
    		value(\bG) := \frac{2 \cdot \sum_{\be \in \bE} \bw(\be)}{C^{P} \cdot \paren{\sum_{b \in [B_L]} \sizes_L(b) + \sum_{b' \in [B_R]} \sizes_R(b')}}.
    	\]
	For simple blueprints, this translates to $value(\bG) = \card{\bE}/\card{\bL} = \card{\bE}/\card{\bR} = \card{\bE}/C^P$.

	\smallskip
	\noindent
	Similarly, define the \textnormal{\textbf{approximation (ratio)}} of the blueprint as:
 	 \[
 		   \alpha(\bG) := \frac{2-2\,value(\bG)}{2-value(\bG)}.
    	\]
\end{definition}
We note that since all parameters and fractional weights of blueprints are rational numbers, values and approximation ratios are also always rational. 

The approximation ratio of a blueprint directly  translates to the eventual approximation ratio lower bound we can prove using this blueprint; thus, after establishing how to prove lower bounds using blueprints,
our task simply becomes finding proper blueprints with smallest approximation ratio, or equivalently, largest value.
See~\Cref{fig:blueprints} for examples of some proper blueprints. 

\medskip

\begin{figure}[h!]
	\centering
	\subcaptionbox{A simple blueprint with $P=1$ and $C=2$ with value $1/2$ and approximation ratio $2/3$. This blueprint can be expanded to reconstruct the lower bound of~\cite{GoelKK12}.\label{fig:2/3-blueprint}}%
	[.48\linewidth]{
		\includegraphics[scale=0.4]{Figs/blueprint1.png}
	}
	\hspace{0.2cm}
	\subcaptionbox{A blueprint with $P=B_L=B_R=C=2$ and $s_L = s_R = (1,1)$ which has approximation ratio $2/3$.}%
	[.48\linewidth]{
	\includegraphics[scale=0.3]{Figs/blueprint2.png}
	}
	
	\vspace{0.5cm}
	\subcaptionbox{
	A blueprint with $P=B_L=B_R=2$, $C=4$, and $s_L = s_R = (1,2/3)$. The fractional assignment on each of edges of both $\bEE{1}$ and $\bEE{2}$ is $1/6$.
Here, instead of drawing all $16$ vertices in each block, the tuples next to each part lists indices of vertices in $[C]^P = [4]^2$ belonging to that part. The value of
this blueprint is $(4 \cdot 1/8 + 3/8) / (1+2/3) = 21/40$ and the approximation ratio is $38/59 \sim 0.644$. This blueprint can be used to prove a better-than-$2/3$ approximation lower bound
for semi-streaming matching.\label{fig:blueprint3}}%
	[.9\linewidth]{
	\includegraphics[scale=0.4]{Figs/blueprint3.png}
	}
	\caption{Examples of different proper blueprints, their edges (solid lines) and the pairs each edge bans (dashed lines). One can verify that all blueprints are maximal, namely, no other edge can be added to them without violating one of ban or matching constraints.}
	\label{fig:blueprints}
\end{figure}

\subsection{Expanding Simple Proper Blueprints}\label{sec:expand}

We now describe how to \emph{expand} a proper blueprint, namely, turn it into a ``large'' graph (with number of vertices that goes to infinity for any fixed choice of parameters of the blueprint)
with the help of ERS graphs defined earlier. 

We will solely focus on expanding \emph{simple} proper blueprints. Our techniques can also be used to expand any proper blueprint (not only simple ones) but that generalization is quite tedious. 
Instead, in~\Cref{sec:equiv-blueprints}, we show that any proper blueprint can be turned into a simple one without any loss in its value (but different parameters $P$ and $C$);
thus, we can expand any non-simple proper blueprint by first turning it simple and then using the results in this subsection.

\paragraph{A high-level connection between blueprints and ERS graphs.} To motivate our expansion, let us relate a simple proper blueprint $\bG$ with parameters $P=1$ and $C \geq 2$ to a $(C,r,t)$-ERS graph. 
For any matching-group $M_i$ of the ERS graph, when looking at the decomposition of $M_i$, we ``see'' a $[C] \times [C]$ grid that can be thought of as vertices of $\bG$. 
Then, the ban constraint of the blueprint coincides nicely with the inducedness property of the ERS graphs: the pairs of vertices banned by each other in $\bG$ 
correspond precisely the pair of groups, for every $i \in [t]$, in the ERS graph that may have an edge between them from outside of edges in $M_i$, namely, pairs that 
do \emph{not} satisfy an inducedness property in the ERS graph. 

Now, suppose for any $i \in [t]$, we only pick matchings of $\Grs$ that correspond to the edges of the blueprints between the groups. Then, whenever we look at the decomposition of a matching-group $M_i$, 
we ``see'' a copy of the blueprint (with edges of $\bG$ replaced by matchings of $\Grs$) with only extra ``noisy'' edges between the banned pairs. In our lower bound instances in~\Cref{sec:blueprint-lower}, we will show that 
one can handle these noisy edges, for a randomly chosen $i^* \in [t]$, through an external gadget\footnote{By basically connecting the endpoint of these pairs of groups that correspond to a singleton vertex in the blueprint to
some external vertices. This ensures that in \emph{every large} matchings of the instance, 
these vertices always match to external ones and can be effectively ignored in the original graph.}. 
This allows us to essentially ignore these noisy edges and see our hard instance as one ``large copy'' of the blueprint -- the edges of the blueprint will correspond precisely to the edges of the graph
that a low space algorithm will not be able to find on our hard instances.

\medskip

We now formally define the expansions. In addition to the blueprint, our expansions use a ``host'' graph---which is an ERS graph defined in~\Cref{def:base-rs}---as well as an auxiliary tuple.

\begin{definition}\label{def:expansion}
	Let:
	\begin{itemize}
	\item $\bG = (\bL,\bR,\bE)$ be a simple {proper} blueprint with parameters $(P,C)$;
	\item $\Grs = (\Lrs,\Rrs,\Ers)$ be a $(C,r,t)$-ERS graph for some parameters $r,t$;
	\item $J = (J_1,\ldots,J_P) \in [t]^P$ be a tuple.
	\end{itemize}
	We define the \textbf{expansion} of $\bG$ with respect to $(\Grs,J)$ as the graph $G=(L,R,E)$ such that:
	\begin{itemize}
		\item Vertices of $G$ are $L := ({\Lrs})^P$ and $R := ({\Rrs})^P$;
		\item Edges of $G$ are partitioned into $P$ sets $E := \EE{1} \sqcup \ldots \sqcup \EE{P}$ to be defined in~\Cref{def:expansion-edge}.
	\end{itemize}
	We emphasize that vertices of $G$ do not depend on the choice of $J$.
\end{definition}

The main part of the construction is to define the edges of $G$, which requires some auxiliary definitions first. We first make 
some general definitions about the edges \emph{without} respect to $J$ (hence, in the following, we instead use $K \in [t]^P$ to avoid confusion with $J$). We emphasize that
not all these edges will appear in $G$ and for now, we are simply defining certain \emph{types} of edges.

\begin{itemize}[leftmargin=10pt]
	\item Fix some edge $\be$ of the blueprint, say, $\be = (\bL(x),\bR(y))$ for $x,y \in [C]^P$.
	Moreover, let $K \in [t]^P$ be any tuple. Define:
	\begin{align}
		E(\be,K) := &~\text{all edges $((\Lrs(e_1), \ldots, \Lrs(e_P)), (\Rrs(e_1), \ldots, \Rrs(e_P)))$} \notag \\
		& \qquad \text{where} \qquad e_{p} \in M_{K_{p},x_{p},y_{p}}~\text{for all}~p \in [P]. \label{eq:E-K}
	\end{align}
	(Recall that $\Lrs(e)$ and $\Rrs(e)$ for $e \in \Grs$ denote the endpoints of $e$ in $\Lrs$ and $\Rrs$.)

	Note that here the choice $K$ dictates which matching-group will be picked from each $\Grs$ and the choice of $\be$ dictates the choice of matchings inside these matching groups.
\end{itemize}
\noindent
We claim that $E(\be,K)$ is always a matching.
 \begin{claim}\label{clm:same-edge-matching}
For any choice of $\be \in \bE$ and $K \in [t]^P$, $E(\be, K)$ is a matching.
\end{claim}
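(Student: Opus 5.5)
The plan is to check directly from the definition in \eqref{eq:E-K} that no two distinct edges of $E(\be,K)$ share an endpoint. The key observation is that an edge of $E(\be,K)$ is determined coordinate-wise by a tuple $(e_1,\ldots,e_P)$ with $e_p \in M_{K_p,x_p,y_p}$. Each $M_{K_p,x_p,y_p}$ is a matching in $\Grs$ by \Cref{def:base-rs}. So the endpoints of $e_p$ determine $e_p$ uniquely within that matching.

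First I would show that the map $(e_1,\ldots,e_P) \mapsto$ edge of $G$ is injective, with each edge of $E(\be,K)$ arising from exactly one tuple. This holds because the left endpoint $(\Lrs(e_1),\ldots,\Lrs(e_P))$ already recovers each $\Lrs(e_p)$. Since $M_{K_p,x_p,y_p}$ is a matching, $e_p$ is the unique edge of that matching incident to $\Lrs(e_p)$.

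Next, take two edges of $E(\be,K)$, coming from tuples $(e_1,\ldots,e_P)$ and $(e'_1,\ldots,e'_P)$, that share their left endpoint. Then $\Lrs(e_p)=\Lrs(e'_p)$ for every $p \in [P]$. Both $e_p$ and $e'_p$ lie in the same matching $M_{K_p,x_p,y_p}$, so $e_p=e'_p$ for all $p$, and hence the two edges of $G$ coincide. The argument for a shared right endpoint is symmetric, using $\Rrs(e_p)=\Rrs(e'_p)$. Therefore every vertex of $G$ is incident to at most one edge of $E(\be,K)$, which is exactly the claim.

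I do not expect a real obstacle; the only care needed is to note that the matching used in coordinate $p$ is fixed by $(K_p,x_p,y_p)$, that is, by $\be$ and $K$ alone and not by the chosen edges. This is what lets the coordinatewise uniqueness argument go through. Neither properness of $\bG$ nor the inducedness property of $\Grs$ is needed here.
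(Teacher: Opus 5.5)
Your proof is correct and is essentially the paper's argument in contrapositive form. The paper notes that two distinct edges must differ in some coordinate $p$; since $M_{K_p,x_p,y_p}$ is a matching, the differing $e_p$ and $f_p$ are vertex-disjoint, and hence the two edges of $E(\be,K)$ are vertex-disjoint.
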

  \begin{proof}
        Any two separate edges $e=(e_1,\ldots,e_P),f=(f_1,\ldots,f_P) \in E(\be, K)$ must have at least one value of $p \in [P]$ in which $e_{p} \neq f_p$ (otherwise $e$ and $f$ will be the same edge).
        Since $M_{K_p,x_p,y_p}$ is a matching (where $(x,y)$ is defined by $\be$), we obtain that $e$ and $f$ are vertex disjoint. \Qed{clm:same-edge-matching}
        
 \end{proof}

Using~\Cref{clm:same-edge-matching}, we can next prove that in fact for a fixed $K \in [t]^P$, the entire
set of edges $E(\be,K)$ for $\be \in \bE$ will be a matching.

\begin{claim}\label{clm:different-edge-matching}
	For any choice of $K \in [t]^P$, the set of edges $\bigcup_{\be \in \bE} E(\be,K)$ is a matching.
    \end{claim}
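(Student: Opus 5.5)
By \Cref{clm:same-edge-matching}, each $E(\be,K)$ is already a matching. It therefore suffices to show that for two distinct blueprint edges $\be \neq \be'$, no edge of $E(\be,K)$ shares an endpoint with an edge of $E(\be',K)$. By symmetry, I will only treat left endpoints; the right side is identical.

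Write $\be=(\bL(x),\bR(y))$ and $\be'=(\bL(x'),\bR(y'))$ for $x,y,x',y' \in [C]^P$. Suppose $e=(e_1,\ldots,e_P)\in E(\be,K)$ and $f=(f_1,\ldots,f_P)\in E(\be',K)$ share their left endpoint. Then $\Lrs(e_p)=\Lrs(f_p)$ for every $p\in[P]$.

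The key step uses the decomposition property of \Cref{def:base-rs}. For a fixed matching-group index $K_p$, the sets $L_{K_p,1},\ldots,L_{K_p,C}$ are disjoint. We have $e_p\in M_{K_p,x_p,y_p}$, which has left endpoints in $L_{K_p,x_p}$, and $f_p\in M_{K_p,x'_p,y'_p}$, which has left endpoints in $L_{K_p,x'_p}$. Since both edges have the same left endpoint, disjointness of the groups forces $x_p=x'_p$. This holds for every $p$, so $x=x'$, and the two blueprint edges share the left vertex $\bL(x)$.

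Because $\bG$ is simple and proper, the matching constraint with all weights equal to $1$ and $\sizes_L=(1)$ says each vertex is incident on at most one edge of $\bE$. Hence $\be=\be'$, contradicting distinctness. The same argument on the right side uses the disjoint groups $R_{K_p,y}$ and gives $y=y'$, which leads to the same contradiction.

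I do not expect a real obstacle here. The only care needed is to use the disjointness of the groups $L_{K_p,\cdot}$ inside a single matching-group. That is exactly why it matters that the same $K$ is used for both blueprint edges. The ban constraints are not needed.
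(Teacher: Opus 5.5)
Your proof is correct and matches the paper's argument, only phrased as a contrapositive. The paper starts from the matching constraint, which gives distinct left endpoints $\bL(x_e) \neq \bL(x_f)$ for distinct blueprint edges. It then picks a coordinate $p$ with $(x_e)_p \neq (x_f)_p$ and uses the decomposition condition of $\Grs$ to separate the $p$-th coordinates of the left endpoints; this is the same step as yours, run in the other direction.
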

    \begin{proof}
    It suffices to prove that for any pairs of edges $\be \neq \bff \in \bE$, the vertices of $E(\be,K)$ and $E(\bff,K)$ are vertex-disjoint. The rest follows immediately from~\Cref{clm:same-edge-matching} as a union of vertex-disjoint matchings is
    also clearly a matching.

    Consider any pairs of edges $e=(e_1,\ldots,e_P) \in E(\be,K)$ and $f = (f_1,\ldots,f_P) \in E(\bff,K)$. We only prove that $e$ and $f$ do not intersect in $L$, i.e., $L(e) \neq L(f)$; the other case for $R$ can be proven by symmetry.
    Let $\bL(x_e)$ and $\bL(x_f)$ be the endpoints of $\be$ and $\bff$ in $\bL$ respectively. By the matching constraint of proper blueprints (\Cref{def:proper-blueprint}), we know that these
    two vertices should be different. Thus, we should have $x_e \neq x_f$ and suppose they differ on coordinate $p \in [P]$.
    Then, by the decomposition condition of $\Grs$ (\Cref{def:base-rs}), we also have that $L(e)$ and $L(f)$ are disjoint on the $p$-th coordinate. \Qed{clm:different-edge-matching}
    
    \end{proof}

  Using the definition of edges in~\Cref{eq:E-K}, we can now further define the following types of edges that also take into account the partitioning $\EE{1} \sqcup \ldots \sqcup \EE{p}$ of $E$ described in~\Cref{def:expansion}:
  \begin{itemize}[leftmargin=10pt]
  	\item For any $K \in [t]^P$, we define
	\begin{align}
		\EE{p}(K) := \bigcup_{\be \in \bEE{p}} E(\be,K), \label{eq:p-K}
	\end{align}
	namely, $\EE{p}(K)$ collects all edges $E(\be,K)$ for all $\be$ in the $p$-th partition $\bEE{p}$ of the blueprint.
	\item For any $p \in [P]$ and $K_{<p} \in [t]^{p-1}$, we further define
	\begin{align}
		\EE{p}(K_{<p}) := \bigcup_{\substack{K' \in [t]^P \\ K'_{<p}=K_{<p}}} \EE{p}(K'), \label{eq:p-K<p}
	\end{align}
	namely, $\EE{p}(K_{<p})$ collects every $E(\be,K')$ for $\be \in \bEE{p}$ and $K'$ with the same $(p-1)$-prefix as $K$.
  \end{itemize}

  We now show that edges produced by different choices of $(p, K)$ cannot collide.

\begin{claim}\label{clm:disjoint-partition}
	For any $(p, K) \neq (p', K')$ with $p,p' \in [P]$ and $K, K' \in [t]^P$, the edge sets $\EE{p}(K)$ and $\EE{p'}(K')$ are disjoint.
\end{claim}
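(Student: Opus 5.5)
Suppose for contradiction that some edge $e=(e_1,\ldots,e_P)$ lies in both $\EE{p}(K)$ and $\EE{p'}(K')$ with $(p,K)\neq(p',K')$. By \Cref{eq:p-K}, $e\in E(\be,K)$ for some $\be\in\bEE{p}$ and $e\in E(\bff,K')$ for some $\bff\in\bEE{p'}$. Unwinding \Cref{eq:E-K}, each coordinate edge $e_q$ of $\Grs$ is determined by the endpoints $\Lrs(e_q)$ and $\Rrs(e_q)$ of $e$'s $q$-th coordinate. It must therefore lie in $M_{K_q,x_q,y_q}$ and in $M_{K'_q,x'_q,y'_q}$ for every $q\in[P]$, where $\be=(\bL(x),\bR(y))$ and $\bff=(\bL(x'),\bR(y'))$.

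The first step is to show $K=K'$. Fix any $q$. We have $e_q\in M_{K_q}\cap M_{K'_q}$, and by the disjointness condition of \Cref{def:base-rs}, matching-groups with distinct indices share no edge. Hence $K_q=K'_q$ for every $q$, so $K=K'$.

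The second step is to show $\be=\bff$. With $K_q=K'_q=:i$ fixed, the edge $e_q$ lies in $M_{i,x_q,y_q}\cap M_{i,x'_q,y'_q}$. By the decomposition condition, $M_{i,x_q,y_q}$ runs between $L_{i,x_q}$ and $R_{i,y_q}$, and the groups $L_{i,1},\ldots,L_{i,C}$ are disjoint (likewise on the right). Since $\Lrs(e_q)$ lies in both $L_{i,x_q}$ and $L_{i,x'_q}$, we get $x_q=x'_q$, and similarly $y_q=y'_q$. Doing this for all $q$ gives $x=x'$ and $y=y'$, so $\be$ and $\bff$ are the same blueprint edge.

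Since the edge sets $\bEE{1},\ldots,\bEE{P}$ partition $\bE$, a single edge $\be=\bff$ belongs to exactly one group, so $p=p'$. Together with $K=K'$ this contradicts $(p,K)\neq(p',K')$. The only point needing care is the implicit fact that a vertex of $\Grs$ lies in at most one group $L_{i,x}$ for fixed $i$, which is exactly the disjointness in the decomposition condition. Beyond that the argument is direct bookkeeping, with no real obstacle.
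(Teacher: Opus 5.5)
Your proof is correct and follows the paper's argument. Disjointness of the matching-groups forces $K=K'$, then $\be=\bff$, and the partition $\bE=\bEE{1}\sqcup\cdots\sqcup\bEE{P}$ forces $p=p'$. The only difference is cosmetic: you get $\be=\bff$ directly from the disjointness of the groups $L_{i,x}$ (and $R_{i,y}$) in the decomposition condition, whereas the paper cites \Cref{clm:different-edge-matching}, whose proof uses the same decomposition fact together with the blueprint's matching constraint.
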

\begin{proof}
	Consider any edge $e = (e_1,\ldots,e_P) \in E(\be, K)$ for some $\be \in \bEE{p}$. By~\Cref{eq:E-K}, at each coordinate $q \in [P]$, the ERS edge $e_{q}$ belongs to the matching-group $M_{K_{q}}$. By the disjointness condition of $\Grs$ (\Cref{def:base-rs}), $e$ can only belong to $E(\bff, K')$ for some $\bff$ if $K = K'$. By~\Cref{clm:different-edge-matching}, this further implies $\be = \bff$. Since $\bE = \bEE{1} \sqcup \ldots \sqcup \bEE{P}$ is a partition, $\be \in \bEE{p}$ and $\bff \in \bEE{p'}$ with $\be = \bff$ forces $p = p'$. \Qed{clm:disjoint-partition}
\end{proof}

  With these definitions and claims at hand, we can now conclude the definition of the edges that appear in the expansion.

\begin{definition}[Continuation of~\Cref{def:expansion}]\label{def:expansion-edge}
	For any $p \in [P]$, the edge-set $\EE{p}$ in the graph $G=(L,R,\EE{1} \sqcup \ldots \sqcup \EE{P})$ which is an expansion of $\bG$ with respect to $(\Grs,J)$ is
	\begin{align*}
		\EE{p} :=  \EE{p}(J_{<p}).
	\end{align*}
\end{definition}

We note that the partition $\EE{1} \sqcup \ldots \sqcup \EE{P}$ is indeed disjoint by~\Cref{clm:disjoint-partition}.
It is worth reiterating that the edges in $\EE{p}$ only depend on the $(p-1)$-prefix of $J$; put differently, even after fixing $\EE{p}$, the indices $J_{p},\ldots,J_{P}$ can still be chosen independently.

\bigskip

Finally, we define a \emph{canonical matching} in the expansions which plays a crucial role in our lower bound proofs -- roughly speaking, this is the ``only'' input matching the players need to find but it is well ``hidden'' inside
the expansion.

\begin{definition}\label{def:canonical}
	For an expansion $G$ of $\bG$ with respect to some $(\Grs,J)$, we define the \textbf{canonical matching} of $G$ as
	\[
		\Mstar := \Mstar(G) = \bigcup_{p=1}^P \EE{p}(J).
	\]
\end{definition}
It is easy to see $\Mstar$ indeed belongs to the graph $G$ given that $\EE{p}(J) \subseteq \EE{p}(J_{<p})$ by~\Cref{eq:p-K<p} and that $\EE{p}(J_{<p})$ is in $G$ by~\Cref{def:expansion-edge}.
In the following, we present the main properties of the canonical matching $\Mstar$. We note that the ban constraint of blueprints is only defined to satisfy the third property (weak inducedness) below.

\begin{proposition}\label{prop:Mstar}
	Let $G=(L,R,E)$ be the expansion of $\bG$ with respect to $(\Grs,J)$ where $\bG$ is a simple proper blueprint with parameters $(P,C)$, $\Grs$ is a $(C,r,t)$-ERS graph, and $J$ is in $[t]^P$.
	The canonical matching $\Mstar$ of $G$ satisfies the following properties:
	\begin{itemize}
	\item \textnormal{\textbf{Matching:}} the edges in $\Mstar$ form a matching and $\Mstar$ does not have duplicate edges;
	\item \textnormal{\textbf{Size:}} the number of edges of $\Mstar$ is
	\[
		\card{\Mstar} = value(\bG) \cdot \paren{\frac{r \cdot C}{\card{\Lrs}}}^P \cdot \card{L}.
	\]
	\item \textnormal{\textbf{Weak inducedness:}} let $e$ be an edge in $G$ belonging to some $\EE{p}(K)$ such that $p \in [P]$ and $K_{<p}=J_{<p}$ but for all $p' \geq p$, we have $J_{p'} \neq K_{p'}$.
	Then, $e$ cannot be part of the induced subgraph of $G$ on $\Mstar$.
	\end{itemize}
\end{proposition}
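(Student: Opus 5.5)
The plan is to prove the three properties in order. All of them come from writing $\Mstar = \bigcup_{p} \EE{p}(J) = \bigcup_{\be \in \bE} E(\be,J)$ and using the claims and ERS properties already established.

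\textbf{Matching.} By~\Cref{eq:p-K}, and since $\bE = \bEE{1}\sqcup\cdots\sqcup\bEE{P}$, we have $\Mstar = \bigcup_{\be\in\bE} E(\be,J)$. This is a matching by~\Cref{clm:different-edge-matching} applied with $K=J$. The sets $\EE{p}(J)$ for different $p$ are disjoint by~\Cref{clm:disjoint-partition}. Within each set no edge is produced twice, since distinct tuples $(e_1,\ldots,e_P)$ give distinct edges of $G$. Hence $\Mstar$ has no duplicate edges.

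\textbf{Size.} For a fixed $\be=(\bL(x),\bR(y))$, the map $(e_1,\ldots,e_P)\mapsto$ edge of $G$ from~\Cref{eq:E-K} is injective, because the $e_p$ are matching edges and are determined by their left endpoints. So $\card{E(\be,J)}=\prod_p \card{M_{J_p,x_p,y_p}} = r^P$. The sets $E(\be,J)$ for different $\be$ are disjoint by the matching property. Therefore $\card{\Mstar}=\card{\bE}\, r^P$. Substituting $\card{\bE}=value(\bG)\,C^P$ for simple blueprints and $\card{L}=\card{\Lrs}^P$ gives exactly the stated formula.

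\textbf{Weak inducedness.} This is the main step and needs careful bookkeeping.
\begin{itemize}
\item Write $e=(e_1,\ldots,e_P)\in E(\bff,K)$ with $\bff=(\bL(x),\bR(y))\in\bEE{p}$. Assume for contradiction that both endpoints of $e$ lie on $\Mstar$.
\item Then there are blueprint edges $\bg=(\bL(a),\cdot)$ and $\bh=(\cdot,\bR(b))$ and edges $g\in E(\bg,J)$, $h\in E(\bh,J)$ with $L(g)=L(e)$ and $R(h)=R(e)$.
\item So for every coordinate $q$, the vertex $\Lrs(e_q)=\Lrs(g_q)$ lies in $L_{J_q,a_q}$ because $g_q\in M_{J_q,a_q,\cdot}$. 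Likewise $\Rrs(e_q)\in R_{J_q,b_q}$ by the decomposition property.
\end{itemize}
Now split the coordinates.
\begin{itemize}
\item For $q<p$ we have $K_q=J_q$, so $e_q\in M_{J_q,x_q,y_q}$ and $\Lrs(e_q)\in L_{J_q,x_q}$. Since the groups $L_{J_q,\cdot}$ are disjoint, $a_q=x_q$; similarly $b_q=y_q$.
\item For $q\geq p$ we have $K_q\neq J_q$, so by disjointness $e_q\in M_{K_q}$ is an edge of $\Ers\setminus M_{J_q}$. Its endpoints lie in $L_{J_q}$ and $R_{J_q}$, so the inducedness property forces $a_q=b_q=:z_q$.
\end{itemize}
Thus $a=x_{<p}\conc z$ and $b=y_{<p}\conc z$ for some $z\in[C]^{P-p+1}$. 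Both $\bL(a)$ and $\bR(b)$ are incident on edges of $\bG$ (namely $\bg$ and $\bh$). This contradicts the ban constraint of $\bG$ for the edge $\bff\in\bEE{p}$, which bans exactly these two vertices together.

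The step I expect to need the most care is this last one. One must check that the canonical edges $g,h$ place the endpoints of $e$ inside $L_{J_q}$ and $R_{J_q}$ in every coordinate, so that inducedness applies. One must also keep the prefix coordinates, where $K$ and $J$ agree, separate from the suffix coordinates, where they all differ.
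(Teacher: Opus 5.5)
Your proposal is correct and follows the paper's proof essentially step for step. The matching and size properties come from \Cref{clm:different-edge-matching} with $K=J$ and $\card{E(\be,J)}=r^P$. Weak inducedness is obtained by splitting coordinates into $q<p$, where decomposition forces the canonical edges' labels to agree with $x_{<p}$ and $y_{<p}$, and $q\geq p$, where disjointness plus inducedness of $\Grs$ forces the left and right labels to coincide. This contradicts the ban constraint of the blueprint edge in $\bEE{p}$.
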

\begin{proof}
	We prove each part separately.

	\paragraph{Matching:} This follows immediately from~\Cref{clm:different-edge-matching} since
	\[
		\Mstar = \bigcup_{p=1}^P \EE{p}(J) = \bigcup_{p=1}^{P} \bigcup_{\be \in \bEE{p}} E(\be,J) = \bigcup_{\be \in \bE} E(\be,J),
	\]
	and the RHS is a matching by the claim and all the sets are disjoint and contain no duplicates.

	\paragraph{Size:} We have
	\begin{align*}
		\card{\Mstar} &= \sum_{\be \in \bE} \card{E(\be,J)} = \card{\bE} \cdot r^P = value(\bG) \cdot C^P \cdot r^P;
	\end{align*}
	here, the first equality is by the disjointness of $E(\be,J)$ for each $\be$ (\Cref{clm:different-edge-matching}), the second equality is by the definition of
	$E(\be,J)$ and since each matching of $\Grs$ is of size $r$, and the third is by the definition of $value(\bG)$ in~\Cref{def:blueprint-value-approx}. Noting that $\card{L} = \card{\Lrs}^P$ concludes the proof.

	\paragraph{Weak inducedness:} Let $e=(e_1,\ldots,e_P)$ be as in the statement and suppose $e$ belongs to $E(\be,K)$ for some $\be=(\bL(x),\bR(y))$ with $x,y \in [C]^P$.

	Suppose towards a contradiction that
	$e$ belongs to the induced subgraph of $G$ on $\Mstar$ and let $f$ and $g$ be two edges in $\Mstar$ such that $L(e) = L(f)$ and $R(e) = R(g)$. Further, let $f$ (resp. $g$) be part of some $E(\bff,J)$ with
	$\bff = (\bL(x'),*)$ and $\bg = (*, \bR(y'))$ where $*$ simply ignores the name of the corresponding vertex as it is not relevant for this proof.

	To conclude the proof, we need the following two claims.

	\begin{claim}\label{clm:x>p=y>p}
		$x'$ and $y'$ are equal to each other on all coordinates $\geq p$.
	\end{claim}
	\begin{proof}
	Consider any $p' \geq p$. By~\Cref{eq:E-K}, we have $e_{p'}$ belongs to $M_{K_{p'}}$ of $\Grs$ and $f_{p'}$ and $g_{p'}$ belong to $M_{J_{p'}}$ of $\Grs$.
	Since $J_{p'} \neq K_{p'}$, by the disjointness condition of $\Grs$ (\Cref{def:base-rs}), $e_{p'} \notin M_{J_{p'}}$, and thus by the inducedness condition of $\Grs$ (\Cref{def:base-rs}),
	we have that edges in $M_{K_{p'}}$ cannot be between $({\Lrs})_{J_{p'},x'_{p'}}$ and $({\Rrs})_{J_{p'},y'_{p'}}$ unless $x'_{p'} = y'_{p'}$. \Qed{clm:x>p=y>p}

	\end{proof}

        	\begin{claim}\label{clm:x<p=x'<p}
		$x'$ and $x$ are equal to each other on all coordinates $< p$ and similarly for $y$ and $y'$.
	\end{claim}
	\begin{proof}
        Suppose $x_{p'} \neq x'_{p'}$ for some $p' < p$ (the other case for $y$ and $y'$ is symmetric).
        We know that $J_{p'} = K_{p'}$  since $e \in \EE{p}(K)$ and thus $K_{<p} = J_{<p}$. This implies that both $e_{p'}$ and $f_{p'}$ belong to $M_{J_{p'}}$ of $\Grs$. But, if $x_{p'} \neq x'_{p'}$, by the decomposition condition of
        $\Grs$ (\Cref{def:base-rs}), then $\Lrs(e_{p'}) \neq \Lrs(f_{p'})$, contradicting our earlier assumption that $L(e) = L(f)$. \Qed{clm:x<p=x'<p}

        \end{proof}

        Finally, putting together~\Cref{clm:x>p=y>p,clm:x<p=x'<p} and our earlier definitions, we get that $x_{<p} = x'_{<p}$, $y_{<p} = y'_{<p}$, and $x'_{\geq p} = y'_{\geq p}$ and there are edges
        $\be \in \bEE{p}$, $\bff$, and $\bg$ in the blueprint with $\bL$-endpoints of $\be$ and $\bff$ labeled by $x$ and $x'$ belonging to $\bL$,
        and $\bR$-endpoints of $\be$ and $\bg$ labeled by $y$ and $y'$ belonging to $\bR$. But this is a contradiction given that the ban constraint of $\bG$ (\Cref{def:proper-blueprint}) implies that
         pairs
         \[
         \bL(x') = \bL(x'_{<p} \conc x'_{\geq p}) = \bL(x_{<p} \conc x'_{\geq p}) \quad \text{and} \quad \bR(y') = \bR(y'_{<p} \conc y'_{\geq p}) =  \bR(y_{<p} \conc x'_{\geq p})
         \]
         are banned together because of $\be$, i.e., not both of them can have a non-zero degree in $\bG$. This concludes the proof. \Qed{prop:Mstar}

\end{proof}

Finally, before moving on, we note that using our construction of ERS graphs in~\Cref{prop:base-c-rs}, we can create a $(C,r,t)$-ERS graph $\Grs=(\Lrs,\Rrs,\Ers)$ where $r = \card{\Lrs} \cdot (1/C - o(1))$. Plugging in this bound in~\Cref{prop:Mstar} implies that
\[
	\frac{\card{\Mstar}}{\card{L}} = value(\bG) \cdot \paren{\frac{r \cdot C}{\card{\Lrs}}}^P = value(\bG) \cdot \paren{1-o(1)},
\]
using the fact that $P$ is a constant with respect to the size of the graph.
Thus, matching $\Mstar$ in $G$ will essentially have the same ``value'' as that of $\bE$ in the blueprint. This will be crucial in
relating the approximation of streaming algorithms on $G$ to approximation ratio of the blueprint $\bG$.

\subsection{Equivalence of Simple and Not-Simple Proper Blueprints}\label{sec:equiv-blueprints}

In this subsection, we show that we can turn any not-simple proper blueprint into a simple one without reducing its value.

\begin{proposition}\label{prop:simplify-blueprint}
Let $\bG = (\bL,\bR,\bE,\bw)$ be a proper blueprint with parameters $(P,C,B_L,B_R,\sizes_L,\sizes_R)$.
Then, there exists a \underline{simple} proper blueprint $\hbG=(\hbL,\hbR,\hbE)$ with parameters $(P+1,\hC)$
such that $value(\hbG) = value(\bG)$.
\end{proposition}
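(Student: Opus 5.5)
The plan is to encode all of the non-simple structure (blocks, relative sizes, fractional weights, and imbalance between the two sides) into one new \emph{leading} coordinate, and to keep every original edge in a shifted group so that its bans never cross between different values of that coordinate. I assume, as the paper does implicitly, that the $P$ old coordinates come first in $[\hC]^{P+1}$ and that the phrase ``$P+1$'' refers to this one added coordinate, placed in front.

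\textbf{Step 1: clearing denominators.} Since all parameters are rational, fix an integer $D$ such that $D\bw(\be)=:k_\be$, $D\sizes_L(b)=:m^L_b$ and $D\sizes_R(b')=:m^R_{b'}$ are all integers. By the matching constraint, a vertex $\bu\in\bLL{b}$ that is incident on some edge carries exactly $m^L_b$ ``units'' of weight, split among its incident edges; the same holds on the right.

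\textbf{Step 2: balancing by a mirror copy.} To make both sides have the same total, I take the disjoint union of $\bG$ with its mirror image $\bG'$, obtained by swapping the roles of $\bL$ and $\bR$. The ban constraint in \Cref{def:proper-blueprint} is symmetric in $\bL$ and $\bR$, so $\bG'$ is again proper. The union has $B_L+B_R$ blocks on each side with equal total relative size $S:=\sum_b\sizes_L(b)+\sum_{b'}\sizes_R(b')$, and total weight $2\sum_\be\bw(\be)$. Hence its value is unchanged: it is still $2\sum_\be\bw(\be)/(C^P S)$.

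\textbf{Step 3: the construction.} Let $T:=DS$, which is the common number of copies on each side, and let $\hC:=C\cdot T$.
\begin{itemize}
\item The new leading coordinate $a\in[\hC]$ is split into $T$ consecutive ranges of length $C$. Each range is labeled by a pair (block of the union, copy index $i\le m_b$), so that every value $a$ is used on each side.
\item To use all labels, I fix a surjection $\pi:[\hC]\to[C]$ with equal fibers of size $T$ and apply it coordinatewise to the old coordinates, replacing each original vertex $x$ by all $\hat x$ with $\pi(\hat x)=x$. The leading coordinate is split into $C$ sub-labels inside each range, to match the count.
\item Edges: for each original edge $\be=(\bu,\bv)$ in group $p$ and each of its $k_\be$ units, I assign that unit to a distinct copy of $\bu$ and a distinct copy of $\bv$. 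This is possible because unit counts equal copy counts. I then add a perfect matching between the corresponding fiber-sets, lifting coordinatewise by matching $\hat x_q$ to $\hat y_q$ along fixed bijections of the fibers. These new edges are placed in group $\hbEE{p+1}$, and $\hbEE{1}=\emptyset$.
\end{itemize}
Every vertex of $\hbG$ then has degree at most one, so the matching constraint holds.

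\textbf{Step 4: counting.} The number of edges is $|\hbE|=\sum_\be k_\be\cdot(\text{lift factor})$. Dividing by $\hC^{P+1}$ should give exactly $2\sum\bw/(C^PS)$, because every label is used by exactly one copy on each side. I will verify this bookkeeping carefully, since the fiber sizes must come out consistently between the new leading coordinate and the old coordinates; I expect this to be routine.

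\textbf{Step 5, the main obstacle: ban constraints.} An edge of group $p+1$ with endpoints $\hat a\conc\hat x$ and $\hat a'\conc\hat y$ bans the pairs $(\hat a\conc\hat x_{<p}\conc z,\ \hat a'\conc\hat y_{<p}\conc z)$ for $z\in[\hC]^{P-p+1}$. I want to show that in each such pair one endpoint is isolated. Projecting through $\pi$ and through the map from $a$ to its copy, this pair corresponds to the original pair $\bLL{b}(x_{<p}\conc\pi(z))$ and $\bRR{b'}(y_{<p}\conc\pi(z))$, which is banned by $\be$. So one of these original vertices is isolated in $\bG$, and therefore every copy of it is isolated in $\hbG$. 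The remaining care is that the ban fixes the specific copies $\hat a$ and $\hat a'$, and it only ever fixes \emph{more} coordinates than the original ban did, so it never bans anything new. The risky point is whether the equal-fiber lifting of the old coordinates interacts correctly with the ``$x_{<p}$ fixed'' prefix. If it does not, I would drop the fiber lifting and instead absorb the padding entirely into the leading coordinate.
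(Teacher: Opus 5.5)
This is correct and is essentially the paper's proof with \Cref{lem:integer-weights,lem:remove-sizes-blueprint,lem:increase-C-blueprint} inlined: the paper also mirrors $\bG$ to balance the sides, splits blocks into unit-size copies, encodes the copy in a new \emph{leading} coordinate, and moves each edge of $\bEE{p}$ to group $p+1$ with group $1$ empty. The only difference is that the paper enlarges $C$ just until $B \mid C$ (leading coordinate in $[C/B]\times[B]$), whereas you take $\hC = CT$ with a $T$-fold lift of the old coordinates; both of your hedges resolve favourably, since the count gives $2D\sum_{\be}\bw(\be)\cdot C\,T^{P}/(CT)^{P+1} = 2\sum_{\be}\bw(\be)/(C^{P}S)$ and your Step~5 projection argument is already a complete proof of the ban constraint, exactly as in the paper's proof of \Cref{lem:increase-C-blueprint}.
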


To prove~\Cref{prop:simplify-blueprint}, we also show some other transformation on blueprints that ``simplify'' them without reducing their values. The proofs of all these lemmas are simple but not particularly short exercises
in verifying blueprint constraints and we postpone them to~\Cref{app:omitted-blueprints}. 

The first lemma shows that we can multiply the parameter $C$ of a blueprint with any integer and obtain a blueprint with the same value (and keep all other parameters intact). This will be useful in future transformations where we will want to assume certain divisibility conditions on $C$.

\begin{lemma}\label{lem:increase-C-blueprint}
Let $\bG = (\bL,\bR,\bE,\bw)$ be a proper blueprint with parameters $(P,C,B_L,B_R,\sizes_L,\sizes_R)$ and $k \geq 1$ be an integer.
Then, there exists a proper blueprint $\hbG=(\hbL,\hbR,\hbE,\hbw)$ with parameters $(P,k \cdot C,B_L,B_R,\sizes_L,\sizes_R)$ such that $value(\hbG) = value(\bG)$.
\end{lemma}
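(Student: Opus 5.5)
The plan is a \emph{coordinate-wise blow-up}. Every label in $[kC]$ is split into an old label in $[C]$ and a copy index in $[k]$, and each old edge is replicated once for every tuple of copy indices.

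\textbf{Construction.} Fix the bijection $[kC] \cong [C] \times [k]$ given by $c \mapsto (\lceil c/k \rceil, c - k(\lceil c/k\rceil - 1))$. Applying it coordinate-wise, every $\hat x \in [kC]^P$ becomes a pair $(x,a)$ with $x \in [C]^P$ and $a \in [k]^P$. Prefixes and concatenations respect this identification coordinate by coordinate: $(x,a)_{<p} = (x_{<p},a_{<p})$ and $(x,a)\conc(z,c) = (x\conc z, a \conc c)$.

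The blueprint $\hbG$ keeps the same $B_L,B_R,\sizes_L,\sizes_R$. Its blocks are $\hbLL{b}$ and $\hbRR{b'}$, labeled by $[kC]^P$. For every edge $\be = (\bLL{b}(x),\bRR{b'}(y)) \in \bEE{p}$ and every $a \in [k]^P$, add to $\hbEE{p}$ the edge $\hbe_a := (\hbLL{b}(x,a), \hbRR{b'}(y,a))$ with weight $\hbw(\hbe_a) := \bw(\be)$. All weights and parameters remain rational.

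\textbf{Value and matching constraint.} The vertex $\hbLL{b}(x,a)$ is incident exactly to the edges $\hbe_a$ for $\be \ni \bLL{b}(x)$. Hence its weighted degree equals that of $\bLL{b}(x)$ in $\bG$: either $0$ (no edges) or $\sizes_L(b)$. The same holds on the right side, so the matching constraint carries over.

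The total weight is multiplied by $k^P$, and the denominator $C^P(\sum \sizes_L + \sum \sizes_R)$ is multiplied by $(kC)^P/C^P = k^P$. Therefore $value(\hbG) = value(\bG)$.

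\textbf{Ban constraint.} This is the only step needing care, and it reduces directly to the ban constraint of $\bG$. The key observation is that $\hbLL{b}(u,c)$ has an edge in $\hbG$ if and only if $\bLL{b}(u)$ has an edge in $\bG$, for every copy index $c$. The forward direction is by construction; the converse holds because all copies $a \in [k]^P$ are included. The same equivalence holds on the right side.

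Now take an edge $\hbe_a \in \hbEE{p}$ between $\hbLL{b}(x,a)$ and $\hbRR{b'}(y,a)$, and any $\hat z = (z,c) \in [kC]^{P-p+1}$. The pair it must ban is
\[
\hbLL{b}(x_{<p}\conc z,\; a_{<p}\conc c) \quad\text{and}\quad \hbRR{b'}(y_{<p}\conc z,\; a_{<p}\conc c).
\]
Suppose, towards a contradiction, that both of these vertices have edges. By the observation, $\bLL{b}(x_{<p}\conc z)$ and $\bRR{b'}(y_{<p}\conc z)$ would both have edges in $\bG$. But $\be \in \bEE{p}$ bans exactly this pair, contradicting the properness of $\bG$.

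The main point to double-check is that the copy index travels consistently with the prefix-and-suffix structure of the ban. This is why the copy index must be split coordinate-wise rather than attached to the whole label. With that choice, $\hbG$ is a proper blueprint with parameters $(P,kC,B_L,B_R,\sizes_L,\sizes_R)$ and the same value as $\bG$.
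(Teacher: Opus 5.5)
Your proposal is correct and follows essentially the same route as the paper. The paper also identifies $[kC]$ with $[C]\times[k]$ coordinate-wise and replicates each edge of $\bEE{p}$ once for every copy tuple in $[k]^P$, using the same tuple on both endpoints and the unchanged weight. It then checks the value, matching, and ban constraints by reducing each one directly to the corresponding property of $\bG$, exactly as you do.
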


The next lemma shows that we may assume, without loss of any generality, that the relative sizes $\sizes_L$, $\sizes_R$ and the fractional assignment $\bw$ are integral.

\begin{lemma}\label{lem:integer-weights}
    Let $\bG = (\bL,\bR,\bE,\bw)$ be a proper blueprint with parameters $(P,C,B_L,B_R,\sizes_L,\sizes_R)$ and $k \geq 1$ be a rational. Then, the blueprint $\hbG$, obtained from $\bG$ by changing nothing except scaling $\bw$, $\sizes_L$, and $\sizes_R$ each by $k$, is proper and satisfies $value(\hbG) = value(\bG)$.
\end{lemma}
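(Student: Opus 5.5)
This is a direct verification. Let $\hbG$ have the same vertex set, blocks, edge partition $\bEE{1}\sqcup\cdots\sqcup\bEE{P}$ and parameters $P,C,B_L,B_R$ as $\bG$. Its fractional assignment is $\hbw := k\cdot\bw$, and its relative sizes are $k\cdot\sizes_L$ and $k\cdot\sizes_R$. Since $k$ is rational and all original quantities are rational, every parameter of $\hbG$ is rational, so $\hbG$ is a blueprint in the sense of \Cref{def:blueprint}. It remains to check the two constraints of \Cref{def:proper-blueprint} and then compare values.

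\textbf{Matching constraint.} Fix a vertex $\bu\in\bLL{b}$. If $\bu$ has no incident edge in $\bE$, the condition holds trivially, because the edge set is unchanged. Otherwise properness of $\bG$ gives $\sum_{\be\ni\bu}\bw(\be)=\sizes_L(b)$. Multiplying by $k$ gives $\sum_{\be\ni\bu}\hbw(\be)=k\,\sizes_L(b)$, which is exactly the new relative size of block $b$. The same argument applies to $\bR$.

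\textbf{Ban constraints.} These depend only on the labels of vertices, the block structure, which group $\bEE{p}$ each edge lies in, and which vertices have nonzero degree. None of these change, since we never add or remove edges. Hence every banned pair in $\hbG$ is a banned pair in $\bG$, and it is respected there. (We may assume $k>0$, as is implicit in $k\ge1$, so that no edge acquires zero weight and degrees are genuinely unchanged.)

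\textbf{Value.} In \Cref{def:blueprint-value-approx} the numerator $2\sum_{\be}\hbw(\be)$ equals $k$ times the old numerator. The denominator $C^P\bigl(\sum_b k\,\sizes_L(b)+\sum_{b'}k\,\sizes_R(b')\bigr)$ equals $k$ times the old denominator, since $C$ and $P$ are unchanged. The factor $k$ cancels, so $value(\hbG)=value(\bG)$.

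There is no real obstacle here. The only point needing care is to note explicitly that the ban constraint and the "no edges" condition refer to the unchanged edge set and not to the weights, so scaling cannot affect them.
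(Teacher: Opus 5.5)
Your proof is correct and follows essentially the same route as the paper's. The matching constraint is the original one multiplied by $k$, the ban constraints depend only on the unchanged edges and labels, and both the numerator and the denominator of $value(\cdot)$ scale by $k$; your remarks on rationality and $k>0$ just make explicit what the paper leaves implicit.
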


We are now ready to ditch two components of non-simple blueprints, the relative sizes of blocks and the assignment to edges, and obtain an essentially simple blueprint with the exception that parameters $B_L,B_R$ may still 
not be equal to one. 

\begin{lemma}\label{lem:remove-sizes-blueprint}
Let $\bG = (\bL,\bR,\bE,\bw)$ be a proper blueprint with parameters $(P,C,B_L,B_R,\sizes_L,\sizes_R)$.
Then, there exists a proper blueprint $\hbG=(\hbL,\hbR,\hbE,\hbw)$ with parameters $(P,C,\hB_L,\hB_R,\hsizes_L,\hsizes_R)$, where $\hsizes_L = (1, \ldots, 1)~,~ \hsizes_R = (1, \ldots, 1)$, and $\hbw(\hbe) = 1$ for all edges $\hbe \in \hbE$, such that 
$value(\hbG) = value(\bG)$.
\end{lemma}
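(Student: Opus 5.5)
First, I will use \Cref{lem:integer-weights} to make every relative size $\sizes_L(b),\sizes_R(b')$ and every weight $\bw(\be)$ a positive integer. Multiplying all of them by a common denominator $k$ keeps the blueprint proper and does not change its value. Then I will replace each block by copies of itself, and each edge by a family of unit-weight edges among those copies. Concretely, block $\bLL{b}$ becomes $\sizes_L(b)$ new blocks $\hbL^{(b,1)},\ldots,\hbL^{(b,\sizes_L(b))}$, each again a full copy of $[C]^P$; the right side is handled the same way. An edge $\be=(\bLL{b}(x),\bRR{b'}(y))\in\bEE{p}$ of weight $\bw(\be)$ becomes $\bw(\be)$ unit-weight edges in $\hbEE{p}$. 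Each of these runs between some copy $\hbL^{(b,i)}(x)$ and some copy $\hbR^{(b',j)}(y)$, always with the same labels $x,y$ and in the same group $p$.

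The copies have to be chosen so that every non-isolated vertex receives exactly one unit edge. At vertex $\bLL{b}(x)$ the incident weights sum to $\sizes_L(b)$. On the left I therefore split the incident edges among the $\sizes_L(b)$ copies of $x$, one unit per copy. This works because the multigraph of units is bipartite between the copies of $x$ and the copies of $y$ for each pair, and it is $\sizes$-regular in the appropriate sense. Equivalently, I form the bipartite multigraph whose vertices are the left and right blueprint vertices, with $\bw(\be)$ parallel copies of each edge $\be$. By K\"onig's edge-colouring theorem it decomposes into perfect-ish matchings, and these matchings then assign the copy indices. A simpler route is to give each unit of $\be$ a distinct slot index at $\bLL{b}(x)$ from $[\sizes_L(b)]$, and independently a distinct slot index at $\bRR{b'}(y)$ from $[\sizes_R(b')]$. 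The unit then joins copy $i$ to copy $j$ according to its two slots. This already produces a matching with no colouring argument needed, since every copy gets exactly one unit at each endpoint.

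The value is unchanged. The new number of edges is $\sum_{\be}\bw(\be)$. The new vertex count per side is $C^P\sum_b\sizes_L(b)$ on the left and $C^P\sum_{b'}\sizes_R(b')$ on the right, and all the new sizes equal $1$, so the formula in \Cref{def:blueprint-value-approx} gives exactly the same ratio.

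The main obstacle is checking the ban constraint. Take a new edge $\hbe$ between $\hbL^{(b,i)}(x)$ and $\hbR^{(b',j)}(y)$ in group $p$. It must ban every pair $\hbL^{(b,i)}(x_{<p}\conc z)$ and $\hbR^{(b',j)}(y_{<p}\conc z)$. Suppose both vertices of such a pair had an edge in $\hbG$. Copies of a non-isolated vertex are non-isolated, and copies of an isolated vertex are isolated, so $\bLL{b}(x_{<p}\conc z)$ and $\bRR{b'}(y_{<p}\conc z)$ would then both have edges in $\bG$. This contradicts the ban imposed by the original edge $\be$. The key point is that non-isolation is preserved for every copy: a vertex with incident weight $\sizes_L(b)$ spreads exactly one unit onto each of its copies. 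Only because of this does the argument go through without any care about which particular copies get paired.
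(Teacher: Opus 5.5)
Your proposal is correct and is essentially the paper's proof. Your ``simpler route'' (give each unit of $\be$ distinct slots from $[\sizes_L(b)]$ and $[\sizes_R(b')]$ at its two endpoints, and join the copies named by those slots) is exactly the paper's construction. The paper writes it as partitions $I_L(\be)$, $I_R(\be)$ together with an arbitrary pairing $I(\be)$, and it uses the same value count and the same reduction of each new ban to the ban of the original $\be$.

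The K\"onig edge-colouring detour is unnecessary and can be dropped. As stated it would still need a per-vertex relabelling of colours into $[\sizes_L(b)]$, which is the slot argument anyway.
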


Finally, armed with the reductions above, we are ready to prove \Cref{prop:simplify-blueprint}. We shall note that in all the above lemmas, the reduction still kept the parameter $P$ untouched. 
However, for the next proof, as advertised in~\Cref{prop:simplify-blueprint}, we inevitably have to increase $P$ by one. 

\begin{proof}[Proof of~\Cref{prop:simplify-blueprint}]
    Let us first assume that $B_L = B_R$. This is without loss of generality, as if $B_L \neq B_R$, we can duplicate the graph of $\bG$ and swap the left and right sides of the copy, and this new template will satisfy $B_L = B_R$ and have the same value as $\bG$. Define $B = B_L = B_R$.

    By \Cref{lem:increase-C-blueprint,lem:remove-sizes-blueprint}, we may assume that $B \mid C$ and that $\sizes_L = (1, \ldots, 1)$, $\sizes_R = (1, \ldots, 1)$, and $\bw(\be) = 1$ for any $\be \in \bE$. If $\bG$ did not satisfy these conditions to begin with, we can replace $\bG$ with another proper blueprint that does satisfy these conditions and has the same value.

    Now we begin to define the simple proper blueprint $\hbG = (\hbL, \hbR, \hbE)$ with parameters $(\hP, \hC)$ that will have same value as $\bG$. The parameters we use will be $\hP = P + 1$ and $\hC = C$; we will use $C$ for $\hC$ throughout the proof. Define the integer $k := C / B$. We use integers in $[C]$ and pairs in $[B] \times [k]$ interchangeably for the ease of notation (using any arbitrary bijection). For every $p \in [P]$ define
    \begin{align*}
		\hbEE{p + 1} &:= ~\text{all edges $\paren{\hbL((i, b) \conc x), \hbR((i, b') \conc y)}$} \\
		& \qquad \text{where} \quad i \in [k], \quad x, y \in [C]^P, \quad b, b' \in [B],~\text{and} \quad (\bLL{b}(x), \bRR{b'}(y)) \in \bEE{p}.
	\end{align*}
    In addition, define $\hbEE{1} = \emptyset$. For each edge of $\bEE{p}$, we are adding $k$ edges to $\hbEE{p + 1}$. As $\hbG$ also has exactly $k$ times more vertices than $\bG$, we have $value(\hbG) = value(\bG)$.

    Under this definition, $\hbG$ is a simple blueprint, so it remains to verify that $\hbG$ is a proper.

    \begin{itemize}
        \item \textnormal{\textbf{Matching constraint:}} In a simple blueprint, the matching constraint is equivalent to asserting the edges form a matching in the standard sense. For each $\be \in \bE$, let $\hbE_{\be}$ be the set of edges in $\hbG$ that are added as a result of $\be$ in the definition above. It is easy to see $\hbE_{\be}$ is a matching in $\hbG$. Moreover, by the matching constraints of $\bG$, any distinct $\be_1, \be_2 \in \bE$ share no endpoints, which also means that $\hbE_{\be_1}$ and $\hbE_{\be_2}$ share no endpoints. Hence, $\hbE$ is a union of vertex disjoint matchings, and thus it itself is a matching.
        \item \textnormal{\textbf{Ban constraints:}} Consider any edge $\hbe \in \hbEE{p + 1}$ for $p \in [P]$. Suppose
		\[
		\hbe = (\hbL((i, b) \conc x), \hbR((i, b') \conc y)),
		\]
        where $b, b' \in [B]$ and $x, y \in [C]^P$ and $i \in [k]$. The ban constraints induced by $\hbe$ is that for any values of $z_p, \ldots, z_P \in [C]$, the vertices
        \begin{align*}
			&\hbL((i, b), x_1, \ldots, x_{p-1}, z_p \ldots, z_P) \text{ and } \\
			&\hbR((i, b'), y_1, \ldots, y_{p-1}, z_p \ldots, z_P)
		\end{align*}
        are banned together. The former vertex has an incident edge in $\hbG$ if and only if $\bLL{b}(x_{<p} \conc (z_p, \ldots, z_P))$ has an incident edge in $\bG$, and the latter vertex has an incident edge in $\hbG$ if and only if $\bRR{b'}(y_{<p} \conc (z_p, \ldots, z_P))$ has an incident edge in $\bG$. These cannot happen simultaneously due to the ban constraint in $\bG$ from $(\bLL{b}(x), \bRR{b'}(y)) \in \bE$.
    \end{itemize}
    This proves that $\hbG$ is a simple proper blueprint, as desired. \Qed{prop:simplify-blueprint}
\end{proof}

\clearpage


\section{Communication and Streaming Lower Bounds from Blueprints}\label{sec:blueprint-lower}

We are finally at a stage that can reduce proving lower bounds for the maximum matching problem to constructing blueprints. The following theorem captures our lower bound framework. 

\begin{theorem}\label{thm:framework}
     Suppose there is a proper blueprint $\bG$ with approximation ratio $\alpha(\bG)$. 
    Then, for any fixed $\delta > 0$, and any sufficiently large $n$, as a function of $\bG$ and $\delta$, the following holds. Any single-pass semi-streaming algorithm 
    for maximum bipartite matchings on $n$-vertex graphs cannot achieve an $(\alpha(\bG)+\delta)$-approximation with probability more than $1-\delta$. 
\end{theorem}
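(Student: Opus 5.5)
The plan is to build a $(P+1)$-player hard distribution from $\bG$ and then apply \Cref{prop:cc-stream}.

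\textbf{Instance.} By \Cref{prop:simplify-blueprint} we may assume $\bG$ is simple and proper with parameters $(P,C)$; this does not change $value(\bG)=:v$. Take a $(C,r,t)$-ERS graph $\Grs$ from \Cref{prop:base-c-rs}, with a tiny constant $\delta'$ so that $r=\card{\Lrs}(1/C-\delta')$ and $t=n^{\Omega(1/\log\log n)}$. Sample $J\in[t]^P$ uniformly and form the expansion $G$ of $\bG$ with respect to $(\Grs,J)$.
\begin{itemize}
\item For $p\in[P]$, player $\PS{p}$ receives $\EE{p}=\EE{p}(J_{<p})$, after deleting a $\delta'$ fraction of its edges uniformly at random. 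This edge set is a function of $J_{<p}$ only, so $J_p,\ldots,J_P$ are independent of $\PS{p}$'s input.
\item Player $\PS{P+1}$ receives a fresh copy $L',R'$ of the vertex set, with $\card{L'}=\card{L}$ and $\card{R'}=\card{R}$. It also receives a perfect matching between every vertex of $L\cup R$ \emph{not} covered by $\Mstar$ and its private copy in $R'\cup L'$.
\end{itemize}

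\textbf{Counting the two matchings.} Let $N:=\card{L}$. By \Cref{prop:Mstar}, $\card{\Mstar}=v(1-O(\delta'))N$, up to the deleted edges.
\begin{itemize}
\item The graph contains the matching $\Mstar$ together with the external edges. Its size is about $\card{\Mstar}+2(N-\card{\Mstar})=(2-v)N$.
\item Now take any matching $M$ of the final graph. Split the edges of $G$ in $M$ into three types: edges of $\Mstar$; edges of some $\EE{p}(K)$ with $K_{<p}=J_{<p}$ and $K_{p'}\neq J_{p'}$ for all $p'\geq p$; and the remaining ``collision'' edges.
\item By the weak inducedness in \Cref{prop:Mstar}, every edge of the second type uses a vertex not covered by $\Mstar$. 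That vertex would otherwise be matched to an external vertex, so such edges give no gain.
\item The collision edges have $K_{p'}=J_{p'}$ for some $p'\ge p$. Each fixed $p'$ has probability $1/t$ over the free coordinates, so these edges make up an $O(P/t)=o(1)$ fraction of $\EE{p}(J_{<p})$.
\end{itemize}
Hence $\card{M}\le 2(N-\card{\Mstar})+\card{M\cap \Mstar}+o(N)$. If the protocol recovers only $o(N)$ edges of $\Mstar$, the ratio is at most $(2-2v)/(2-v)+o(1)=\alpha(\bG)+o(1)$.

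\textbf{Key step: players cannot learn $\Mstar$.} Fix $p$ and condition on $J_{<p}$ and on all other players' inputs. The message $\msg_p$ is then a compression scheme of size $s=\Ot(N)$ for $\GG(\EE{p}(J_{<p}),\delta')$, so \Cref{prop:compression} says at most $O(s/\delta')$ edges belong to $\msg_p$ in expectation. The sets $\EE{p}(K)$, over the $t^{P-p+1}$ completions $K$ of $J_{<p}$, are disjoint and equal-sized by \Cref{clm:disjoint-partition}. The index $J_{\geq p}$ is uniform and independent of $\msg_p$, so the expected number of belonging edges that fall in $\EE{p}(J)$ is at most $O(s/\delta')/t^{P-p+1}=o(N)$, because $t=n^{\Omega(1/\log\log n)}\gg\poly\log n$.

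\textbf{Main obstacle: the final player.} $\PS{P+1}$ effectively learns $J$, and hence $\Mstar$, from its own gadget, so the belonging argument must be used carefully. The plan is as follows.
\begin{itemize}
\item Suppose the output contains an edge $e\in\Mstar$ that does not belong to any $\msg_p$. Conditioned on the full transcript, $e$ is then absent with probability at least $\delta'/2$; the gadget input is a function of $J$ only, so it gives no information about which edges were deleted.
\item Since the output must be a valid matching of the input, the expected number of such output edges is $O(1/\delta')$ times the number of non-belonging $\Mstar$ edges that the output uses and that are actually present. This is a constant-factor loss, which I will absorb by choosing $\delta'$ small relative to $\delta$.
\item I expect this conditioning to need care: the dependence between the deletions in different players' inputs and $J$ must be handled by conditioning on $J$ and on the other players' inputs before applying \Cref{prop:compression}.
\end{itemize}
Finally, Markov's inequality turns the expectation bounds into failure probability greater than $\delta$, and \Cref{prop:cc-stream} transfers the result to streaming.
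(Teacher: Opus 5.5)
Your overall architecture matches the paper's: the same hard distribution, weak inducedness, \Cref{prop:compression}, and averaging over $J_{\ge p}$. However, the counting step has a genuine gap. You claim that \emph{every} matching $M$ of the final graph satisfies $\card{M}\le 2(N-\card{\Mstar})+\card{M\cap\Mstar}+o(N)$, because collision edges are an $O(P/t)$ fraction of $\EE{p}(J_{<p})$. That fraction says nothing about how many collision edges a matching can use. $\EE{p}(J_{<p})$ is a disjoint union of $t^{P-p+1}$ matchings $\EE{p}(K)$, each of size $\card{\bEE{p}}\cdot r^P=\Theta(N)$. So for $p<P$ the collision edges number $\Theta(t^{P-p}N)\gg N$, and they contain entire matchings $\EE{p}(K)$ with $K\neq J$.

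Many of these edges can lie in the induced subgraph on $V(\Mstar)$. For instance, let $P=2$, $\be=(\bL(x),\bR(y))\in\bEE{1}$, and $K=(J_1,K_2)$ with $K_2\neq J_2$. ERS inducedness only forces the second coordinate of an edge of $E(\be,K)$ to run between some $L_{J_2,z}$ and $R_{J_2,z}$. Hence the edge joins copies of $\bL(x_1,z)$ and $\bR(y_1,z)$. The ban of $\be$ only concerns identically labelled pairs, so it does not stop both from being matched when $x_1\neq y_1$; an example is $C=2$ with $\bEE{1}=\{(\bL(1,1),\bR(2,1)),(\bL(1,2),\bR(2,2))\}$. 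Thus a single $\EE{1}(K)$ can contribute $\Theta(N)$ special edges outside $\Mstar$, and such ``noise'' edges cannot be bounded combinatorially.

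This is exactly why the paper bounds all \emph{special} edges, not just $\Mstar$. By \Cref{lem:irrelevant}, a fixed edge of ${\Ebase}^{(p)}$ is special with probability at most $P/t$ over $J_{\ge p}$. Multiplying by the $O(s/\delta)$ bound on belonging edges gives \Cref{lem:special-count}. Your key step only bounds how many belonging edges fall in $\EE{p}(J)$. You need to bound the belonging edges that are special (or collision) edges instead. This costs a factor $P/t$ rather than $t^{-(P-p+1)}$, which is still $o(N)$.

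A second, smaller issue is the conditioning you propose. The paper's argument rests on the following identity for $e\in{\Ebase}^{(p)}$:
\[
\Pr\paren{e\in G\mid \msg_1,\ldots,\msg_P,J}=\Pr\paren{e\in\EE{p}\mid \msg_{<p},J_{<p},\msg_p},
\]
It holds because, given $J$, the players' inputs are independent, and $\EE{p}$ is independent of $J_{\ge p}$. Consequently, the edges of ${\Ebase}^{(p)}$ the last player can safely output are exactly those belonging to $\msg_p$ under the compression scheme $(\pi,\msg_{<p},J_{<p})$. That set is fixed before $J_{\ge p}$ is drawn, which is what makes the averaging over $J_{\ge p}$ legitimate.

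You suggest conditioning on $J$, or on ``all other players' inputs'', before invoking \Cref{prop:compression}. That destroys this independence, because later inputs depend on $J_{\ge p}$; in particular the final gadget reveals $\Mstar$. With the identity in hand, the last player needs no $O(1/\delta')$ accounting. Its output is a deterministic function of $(\msg_1,\ldots,\msg_P,J)$, so any non-belonging output edge makes the protocol err with conditional probability at least $\delta'/2$. Markov's inequality on the belonging special edges then finishes. Note that the failure probability obtained this way is of order $\delta'$, so the deletion rate should not be taken small relative to $\delta$.
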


To prove~\Cref{thm:framework}, we establish a lower bound on the communication game defined for matching in~\Cref{sec:cc-game}. By~\Cref{prop:simplify-blueprint}, it suffices to focus on simple proper blueprints. 

\begin{lemma}\label{lem:framework}
    Suppose there is a \underline{simple} proper blueprint $\bG$ with parameters $(P,C)$ and approximation ratio $\alpha(\bG)$. 
    Then, for any fixed $\delta > 0$, and any sufficiently large $n$ as a function of $(P,C,\delta)$, the following holds. 
    There is a distribution on $n$-vertex graphs in the one-way $(P+1)$-player communication game for approximate matching, wherein any protocol with $n^{1+o(1/\log\log{n})}$ communication 
    cannot achieve an $(\alpha(\bG)+\delta)$-approximation with probability more than $1-\delta$. 
\end{lemma}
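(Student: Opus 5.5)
We will build the hard distribution from the expansion of \Cref{def:expansion}, using a $(C,r,t)$-ERS graph $\Grs$ from \Cref{prop:base-c-rs} with $r = \card{\Lrs}(1/C - \delta')$ for a small constant $\delta'$, and $t = N^{\Omega(1/\log\log N)}$, where $N = \card{\Lrs}$. The construction is as follows. Sample $J \in [t]^P$ uniformly. Let $G$ be the expansion of $\bG$ with respect to $(\Grs,J)$. Player $p \in [P]$ receives $\EE{p} = \EE{p}(J_{<p})$, after subsampling by deleting a $\delta'$-fraction of its edges uniformly at random, as in $\GG(\cdot,\delta')$ of \Cref{sec:comp-lem}. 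The crucial point is that player $p$'s input depends only on $J_{<p}$, so $J_p,\ldots,J_P$ remain uniform and hidden from the point of view of player $p$.

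The last player $\PS{P+1}$ receives a gadget: a perfect matching from every vertex of $L \cup R$ that is \emph{not} covered by $\Mstar$ to fresh external vertices. Vertex counts are then $n = \Theta(\card{L}) = N^{P}$, and the total communication is $n^{1+o(1/\log\log n)}$.

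The optimum is at least $\card{\Mstar}$ plus the gadget size. By \Cref{prop:Mstar}, $\card{\Mstar} \approx value(\bG)\card{L}$, and the gadget covers about $(1-value(\bG))\card{L}$ vertices on each side. Hence
\[
\mathrm{OPT} \gtrsim (2-value(\bG))\card{L}.
\]

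For the upper bound on what a protocol can output, we argue in two steps.

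\textbf{Edges of $\Mstar$ the protocol can know.} For each player $p$, the edges of $\Mstar$ inside $\EE{p}$ are $\EE{p}(J)$. This is one block among the $t^{P-p+1}$ sets $\EE{p}(K)$ with $K_{<p}=J_{<p}$, and it is disjoint from the others by \Cref{clm:disjoint-partition}. Applying \Cref{prop:compression} to player $p$'s message, conditioned on earlier messages and on $J_{<p}$, shows that only $O(s)$ edges of $\EE{p}$ belong to the summary in expectation. Since $J_{\geq p}$ is uniform and independent of that message, the expected number of edges of $\EE{p}(J)$ that belong to it is
\[
O\!\left(s/t^{P-p+1}\right) = o(\card{L}).
\]

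\textbf{Reduction to stored edges.} We next argue that the output matching may be assumed to consist essentially of belonging edges. Any output edge from player $p$'s input that does not belong to the final transcript is absent from the graph with probability above $\delta'/2$ conditioned on that transcript. So any protocol outputting many such edges errs (outputs a non-edge) with noticeable probability. Formalizing this with Markov's inequality and a union bound over $P$ players is routine bookkeeping.

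\textbf{Bounding the output.} The output is then, up to $o(n)$ edges, a matching $M$ using gadget edges, non-$\Mstar$ edges, and $o(\card{L})$ edges of $\Mstar$. The main combinatorial step uses weak inducedness (\Cref{prop:Mstar}). Any non-$\Mstar$ edge $e \in \EE{p}(K)$ with $K_{<p}=J_{<p}$ that has some $K_{p'}\ne J_{p'}$ for all $p' \geq p$ has an endpoint outside $V(\Mstar)$. Edges with $K_{p'}=J_{p'}$ for some $p'\ge p$ are a $O(P/t)$ fraction and again contribute $o(n)$ in expectation, by the same hiding argument. Each such edge therefore uses a vertex that the gadget would otherwise match. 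A counting argument then gives
\[
\card{M} \le \max_{k}\big[\,k + (\text{gadget edges on uncovered vertices}) + o(n)\,\big] \le (2-2\,value(\bG))\card{L} + o(n).
\]
Concretely, matching both sides of $\Mstar$ internally via non-$\Mstar$ edges is impossible, so non-$\Mstar$ edges just trade one-for-one against gadget edges, and the output is at most the gadget size, which is about $(1-value)\card{L}$ per side. Dividing by OPT yields
\[
\frac{2-2\,value(\bG)}{2-value(\bG)} + \delta = \alpha(\bG)+\delta.
\]

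The step I expect to be the main obstacle is this last accounting. Naively, non-$\Mstar$ edges between two $\Mstar$-uncovered vertices could beat the gadget bound, so the charging must be set up carefully. The intended argument charges each output edge to a distinct $\Mstar$-uncovered endpoint, of which there are about $2(1-value)\card{L}$. Each gadget edge and each non-$\Mstar$ edge uses at least one of these, giving at most $2(1-value)\card{L}$ edges. This is tight only if the gadget is attached on both sides. The remaining work is to verify that this count matches the claimed ratio, with OPT normalized accordingly.
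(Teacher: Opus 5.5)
Your proposal is correct and follows essentially the same route as the paper. It uses the same expansion-plus-gadget distribution and the same belonging/compression argument, which relies on $J_{\geq p}$ staying hidden from player $p$ (\Cref{lem:irrelevant,lem:special-count}); your final charging to $\Mstar$-uncovered vertices is exactly the vertex-cover bound of \Cref{lem:large-matching}. Your separate bound on recoverable $\Mstar$ edges is already subsumed by the special-edge bound, and the worry about non-$\Mstar$ edges between two uncovered vertices is a non-issue, since such an edge simply uses two cover vertices.
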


\Cref{thm:framework} follows immediately from~\Cref{lem:framework},~\Cref{prop:cc-stream}, and~\Cref{prop:simplify-blueprint}. In the rest of this section, we focus on proving~\Cref{lem:framework}. 

\subsection{A Hard Input Distribution}

Let $\bG$ be the proper blueprint with parameters $(P,C)$ specified in~\Cref{lem:framework}. Pick $\Grs$ to be a $(C,r,t)$-ERS graph  on $2\nrs$ vertices with parameters 
\[
	r = \nrs/C - \delta \cdot \nrs \qquad \text{and} \qquad t = {\nrs}^{1/\Omega(\log\log{\nrs})},
\]
which is guaranteed to exists for infinitely many choices of $\nrs$ by~\Cref{prop:base-c-rs}. 
We now define our hard input distribution for $(P+1)$ players as follows (see~\Cref{fig:dist-graph} also). 

\begin{ourbox}
	\textbf{A hard input distribution $\GG$ for approximate matching given a blueprint $\bG$.} 
	\begin{enumerate}
		\item Sample a tuple $J \in [t]^P$ uniformly at random. 
		\item Let $\Gbase=(L,R,\Ebase)$ be the expansion of $\bG$ with respect to $(\Grs,J)$ (\Cref{def:expansion}). 
		\item Let $G=(L,R,E)$ be a subgraph of $\Gbase$ obtained by removing $\delta$-fraction of edges chosen uniformly and independently of the randomness of $J$. 
		\item Let the input of $\PS{p}$ for $p \in [P]$ be:
		\begin{quote}
			$\EE{p}:$ remaining edges of ${\Ebase}^{(p)}$ where ${\Ebase}^{(p)}$ is as defined in~\Cref{def:expansion} for the expansion $\Gbase$. 
		\end{quote}
		\item Let $\Mbase$ be the canonical matching of $\Gbase$ (\Cref{def:canonical}). Add $\card{L}=\card{R}$ new vertices to $L$ and $R$ each, denoted by $\Lnew$ and $\Rnew$. Let 
		the input of $\PS{p+1}$ be $\EE{p+1}$ which is some arbitrary matching that matches $L \setminus L(\Mbase)$ to $\Rnew$ and $R \setminus R(\Mbase)$ to $\Lnew$. %
		\item The final input graph of the players will become $G=(L \sqcup \Lnew, R \sqcup \Rnew, \EE{1} \sqcup \ldots \sqcup \EE{p+1})$. 
	\end{enumerate}
\end{ourbox}

We emphasize that the choice of the graph $\Grs$ in the distribution is fixed and deterministic and is known to all players of the communication game. 

\begin{figure}[t]
	\centering
	\includegraphics[scale=0.5]{Figs/dist-graph.png}
	\caption{An illustration of the graph $G$ sampled from our hard distribution. A $\delta$-fraction of edges in $\Gbase$ will be removed in the graph $G$.}\label{fig:dist-graph}
\end{figure}

Let us first examine ``large'' matchings and their structure in graphs of the distribution $\GG$. 

\begin{lemma}\label{lem:large-matching}
	For a graph $G$ sampled from the distribution $\GG$: 
	\begin{enumerate}
		\item With probability at least $1-o(1)$, $G$ contains a matching of size at least
		\[
			\paren{2-value(\bG)} \cdot \card{L} - 2\cdot \delta \cdot \card{L}. 
		\]
		\item Any matching in $G$ that does not use any edge from the induced subgraph of $G$ on $\Mbase$ has size at most 
		\[
			\paren{2-2\cdot value(\bG)} \cdot \card{L} + P \cdot \delta \cdot \card{L}. 
		\]
	\end{enumerate}
\end{lemma}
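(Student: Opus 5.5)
The plan is to exhibit an explicit large matching for the first part, and to use a simple vertex-counting (charging) argument for the second. The common ingredient is the size of the canonical matching. By the size property of \Cref{prop:Mstar} and $r = \nrs/C - \delta\nrs$,
\[
\card{\Mbase} = value(\bG)\cdot (1-\delta C)^P \cdot \card{L}.
\]
This gives two bounds: $\card{\Mbase} \leq value(\bG)\cdot\card{L}$, and $\card{\Mbase} \geq value(\bG)\,(1-P C\delta)\,\card{L}$.

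\textbf{Part 1.} Take the edges of $\Mbase$ that survive in $G$. Add the whole matching given to the last player, which matches $L\setminus L(\Mbase)$ to $\Rnew$ and $R\setminus R(\Mbase)$ to $\Lnew$. The resulting edge set is a matching, because the last player's edges touch only vertices outside $V(\Mbase)$ together with new vertices. If no edge of $\Mbase$ were deleted, its size would be
\[
\card{\Mbase} + 2\paren{\card{L}-\card{\Mbase}} = 2\card{L}-\card{\Mbase} \geq \paren{2-value(\bG)}\card{L}.
\]
The deleted edges of $\Mbase$ form a hypergeometric random variable: a uniformly random $\delta$-fraction of $\Ebase$ is removed, independently of $J$. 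Its mean is $\delta\card{\Mbase} \leq \delta\card{L}$. Since $\card{\Mbase} \to \infty$, standard concentration shows that at most $2\delta\card{L}$ edges of $\Mbase$ are deleted with probability $1-o(1)$, which gives the first bound.

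\textbf{Part 2.} Let $M$ be a matching in $G$ using no edge of the induced subgraph on $V(\Mbase)$. Set $U := (L\setminus L(\Mbase)) \cup (R\setminus R(\Mbase))$, so $\card{U} = 2(\card{L}-\card{\Mbase})$. I claim every edge of $M$ has an endpoint in $U$, by cases on the edge type:
\begin{itemize}
\item an edge incident to $\Lnew$ goes to $R\setminus R(\Mbase)$;
\item an edge incident to $\Rnew$ goes to $L\setminus L(\Mbase)$;
\item any other edge lies in $\Gbase$ and is not induced on $V(\Mbase)$, so one endpoint is outside $V(\Mbase)$.
\end{itemize}
Since $M$ is a matching, this gives $\card{M} \leq \card{U} = 2\card{L} - 2\card{\Mbase}$. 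Plugging in the lower bound on $\card{\Mbase}$ gives
\[
\card{M} \leq \paren{2-2\,value(\bG)}\card{L} + 2PC\delta\,value(\bG)\,\card{L}.
\]

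\textbf{Main obstacle.} The only delicate point is the bookkeeping of the error term. The natural loss is $2PC\delta\cdot value(\bG)$, not $P\delta$. I would handle this by instantiating the ERS graph of \Cref{prop:base-c-rs} with accuracy parameter $\delta/(2C)$ in place of $\delta$; this is permitted since $C$ and $\delta$ are constants and \Cref{prop:base-c-rs} accepts any small parameter. Then $(1-\delta/2)^P \geq 1 - P\delta/2$, and the additive error becomes at most $P\delta\card{L}$ as claimed. The same rescaling only helps Part 1. I would also check that the vertex count $n$ of the final graph grows with $\nrs$, which holds for infinitely many $n$ as in \Cref{prop:base-c-rs}.
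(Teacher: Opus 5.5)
Your proposal is correct and follows the paper's proof. In Part~1 you combine the surviving edges of $\Mbase$ with the last player's matching and apply a concentration bound. In Part~2 you observe that every allowed edge touches $(L\setminus L(\Mbase))\cup(R\setminus R(\Mbase))$, which gives the bound $2\card{L}-2\card{\Mbase}$.

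Your error bookkeeping is more careful than the paper's. The paper uses $(1-\delta)^P$ where the true factor is $(rC/\nrs)^P=(1-C\delta)^P$, and it drops a factor $2\,value(\bG)$. With the stated $r$ the claimed bound actually fails: for $P=1$, $C=2$ and value $1/2$, the last player's matching alone has size $(1+2\delta)\card{L}$. So your rescaling of the ERS parameter to $\delta/(2C)$ is a needed fix, and it is harmless because $\delta$ is reparametrized anyway in the proof of \Cref{thm:framework}.
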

\begin{proof}
	We prove each part separately. 
	\begin{enumerate}
		\item Let $\Mstar$ be the matching in $G$ consisting of the edges in $\Mbase$ which are still part of $G$ (so $\card{\Mbase} - 2\delta \card{\Mbase}$ edges with high probability using concentration bounds for sampling without replacement), 
		plus the matching $\EE{p+1}$ which is vertex-disjoint from $\Mbase$ by construction.  The size of this matching, with high probability, is at least 
		\begin{align*}
			\card{\Mbase} - 2\delta \card{\Mbase} + \card{\EE{p+1}} &= \card{\Mbase} - 2\delta \card{\Mbase} + 2 \cdot \paren{\card{L} - \card{\Mbase}} \\
			&= 2\card{L} - \card{\Mbase} - 2\delta \card{\Mbase} \\
			&\geq 2\card{L} - \paren{value(\bG) \cdot \paren{\frac{r \cdot C}{\card{\Lrs}}}^P \cdot \card{L}} - 2\delta \card{\Mbase} \tag{by~\Cref{prop:Mstar}} \\
			&\geq 2\card{L} - {value(\bG) \cdot \card{L}} - 2\delta \card{\Mbase}, \tag{given the value of $r$ in~\Cref{prop:base-c-rs}}
			\end{align*} 
			which implies the desired bound given $\card{\Mbase} \leq \card{L}$. 
		\item Consider the subgraph of $G$ that does not contain any edge between $L(\Mbase)$ and $R(\Mbase)$. All edges of this subgraph (including the ones in $\EE{p+1}$) are incident on vertices $L \setminus L(\Mbase)$ and $R \setminus R(\Mbase)$
		and thus the largest matching in this subgraph is of size 
		\begin{align*}
			2\card{L} - 2\card{\Mbase} &= 2\card{L} - 2\paren{value(\bG) \cdot \paren{\frac{r \cdot C}{\card{\Lrs}}}^P \cdot \card{L}} \tag{by~\Cref{prop:Mstar}}\\
			&\leq  2\card{L} - 2\paren{value(\bG) \cdot \paren{1-\delta}^P \cdot \card{L}} \tag{given value of $r$ in~\Cref{prop:base-c-rs}} \\
			&\leq 2\paren{\card{L} - value(\bG) \cdot \card{L}} + P \cdot \delta \cdot \card{L}, \tag{as $(1-x)^a \geq 1-a x$ for $x \in [0,1]$} \\
		\end{align*}
		concluding the proof. \Qed{lem:large-matching}
	\end{enumerate}
\end{proof}

As established in~\Cref{lem:large-matching}, graphs $G \sim \GG$ have a large matching and additionally, the edges in the induced subgraph of $G$ between $L(\Mbase)$ and $R(\Mbase)$
play a crucial role in \emph{every} large matchings of $G$. With this in mind, we define: 
\begin{itemize}
	\item An edge $e$ in $\Gbase$ is called \textbf{special} iff $e$ is between some vertices in $L(\Mbase)$ and $R(\Mbase)$. 
\end{itemize}
The lower bound strategy is to now prove that very few special edges will be discovered by the players in any low communication cost protocol.

Observe that---as pointed out after~\Cref{def:expansion-edge}---the set of edges ${\Ebase}^{(p)}$ for any $\PS{p}$ is only a function of $J_{<p}$ and not all of $J$; thus, 
we can meaningfully talk about the inputs of the first $p$-players with respect to only $J_{<p}$ and conditioned on this, $J_{\geq p}$ is still chosen uniformly at random. 
We use this to bound the fraction of special edges from the perspective of each player (the decision of whether an edge becomes special or not is only a function of $J$). 

\begin{lemma}\label{lem:irrelevant}
Fix any $p \in [P]$ and $J_{<p} \in [t]^{p-1}$. Let $e$ be any edge in ${\Ebase}^{(p)}$. Over a uniformly at random choice of $J_{\geq p}$,
\[
	\Pr_{J_{\geq p}}\paren{\text{$e$ is a special edge} \mid J_{<p}} \leq \frac{P}{t}.
\]
\end{lemma}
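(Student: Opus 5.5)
Fix $p \in [P]$, $J_{<p}$, and an edge $e \in {\Ebase}^{(p)} = \EE{p}(J_{<p})$. By~\Cref{eq:p-K<p}, there is some $K \in [t]^P$ with $K_{<p} = J_{<p}$ and $e \in \EE{p}(K)$. By~\Cref{clm:disjoint-partition}, the pair $(p,K)$ is uniquely determined by $e$. In particular $K_{\geq p}$ is a fixed tuple depending only on $e$, and not on the random choice of $J_{\geq p}$. The plan is to show that $e$ can be special only when the random suffix $J_{\geq p}$ collides with $K_{\geq p}$ in at least one coordinate, and then to union bound.

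\textbf{Step 1 (reduction to weak inducedness).} Suppose $J_{p'} \neq K_{p'}$ for all $p' \geq p$. Together with $K_{<p} = J_{<p}$, these are exactly the hypotheses of the weak inducedness property in~\Cref{prop:Mstar}, applied to the expansion $\Gbase$ with respect to $(\Grs,J)$. That property says $e$ is not part of the induced subgraph of $\Gbase$ on $\Mbase$, i.e., $e$ is not an edge between $L(\Mbase)$ and $R(\Mbase)$. So $e$ is not special.

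One subtlety needs checking here: "special" concerns whether both endpoints of $e$ lie in $L(\Mbase)$ and $R(\Mbase)$. The proof of weak inducedness assumes edges $f,g \in \Mstar$ with $L(e)=L(f)$ and $R(e)=R(g)$ and derives a contradiction. This is precisely the negation of specialness, so the property applies verbatim.

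\textbf{Step 2 (union bound).} Hence
\[
\Pr_{J_{\geq p}}\paren{\text{$e$ special} \mid J_{<p}} \leq \Pr\paren{\exists\, p' \geq p : J_{p'} = K_{p'}} \leq \sum_{p'=p}^{P} \Pr\paren{J_{p'} = K_{p'}}.
\]
Conditioned on $J_{<p}$, the coordinates $J_{p'}$ for $p' \geq p$ are uniform on $[t]$ and independent of $J_{<p}$, while $K_{p'}$ is fixed. So each term on the right equals $1/t$, and the sum is at most $(P-p+1)/t \leq P/t$.

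The only delicate point is the determination of $K$. The argument needs $K_{\geq p}$ to be determined by $e$ alone, independently of $J_{\geq p}$, and this is what~\Cref{clm:disjoint-partition} supplies. Once that is in place, the rest is immediate from~\Cref{prop:Mstar}.
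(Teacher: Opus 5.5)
Your proof is correct and is essentially the paper's own argument. You take the $K$ with $K_{<p}=J_{<p}$ and $e \in \EE{p}(K)$, use weak inducedness from \Cref{prop:Mstar} to show that $e$ can be special only if $J_{p'}=K_{p'}$ for some $p' \geq p$, and union bound over the uniformly random suffix $J_{\geq p}$. Your appeal to \Cref{clm:disjoint-partition}, so that $K_{\geq p}$ depends only on $e$, and your check that ``special'' is exactly what the weak-inducedness proof rules out, simply spell out details the paper leaves implicit.
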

\begin{proof}
Fix any edge ${\Ebase}^{(p)}$ and suppose $e \in \bEE{p}(K)$ for some $K \in [t]^P$. Given that $e$ belongs to ${\Ebase}^{(p)}$, we know that $J_{<p} = K_{<p}$ (by~\Cref{def:expansion-edge}). 
By the weak inducedness property of expansions in~\Cref{prop:Mstar}, for $e$ to become a special edge, we need to have that for some $p' \in [p:P]$, $K_{p'} = J_{p'}$. But as stated, the choice of $J_{\geq p}$ 
is independent and uniform at this point; in particular, each $J_{p'}$ is chosen uniformly from $[t]$. Thus, the probability of this event happening is at most $P/t$ by union bound, concluding the proof. \Qed{lem:irrelevant}
\end{proof}

\subsection{The Lower Bound Proof} 

We are now ready to finalize the proof of the lower bound. For this part, it helps to recall the definitions in~\Cref{sec:comp-lem}. 

\begin{lemma}\label{lem:special-count}
	Any protocol $\pi$ with communication cost $s \geq \log\paren{\card{L}+1}$, cannot output more than 
	\[
		 \frac{16s}{\delta^2 \cdot \log{(1/\delta)}} \cdot \frac{P^2}{t} 
	\]
	special edges as part of its returned matching with probability more than $1-\delta/2$. 
\end{lemma}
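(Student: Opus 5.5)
Fix the protocol $\pi$. The plan is to bound special edges in the output player by player, using \Cref{prop:compression} for what each player can ``know'' and \Cref{lem:irrelevant} for how few of those edges turn out special. The last player's output is a function of $\msg_1,\ldots,\msg_P$ and of its own input $\EE{P+1}$, and the latter contains no special edges. The output is a valid matching of $G$, so every special edge it reports must lie in $E$. I will split such an edge $e \in {\Ebase}^{(p)}$ into two cases: either $e$ belongs to the summary of $\PS{p}$, or it does not and the output player ``guesses'' it.

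For the first case, fix $p$ and $J_{<p}$, which determines ${\Ebase}^{(p)}$. Conditioned on $J_{<p}$, the input of $\PS{p}$ is ${\Ebase}^{(p)}$ with a uniformly random $\delta$-fraction of edges removed. Strictly speaking the $\delta$-removal is global, but I will treat it per player, or note that it suffices to condition on the removals in other parts. I view $\msg_p$ together with the earlier messages (which depend only on $J_{<p}$ and other players' independent edges) as a compression scheme of size at most $P s$ for $\GG({\Ebase}^{(p)},\delta)$. \Cref{prop:compression} then gives an expected number of belonging edges of at most $8Ps/(\delta\log(1/\delta))$. These edges are determined by the messages, and the messages are independent of $J_{\geq p}$ given $J_{<p}$, since $J_{\geq p}$ only affects later players' inputs. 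Hence \Cref{lem:irrelevant} applies to each belonging edge, so the expected number of special belonging edges is at most $8P^2 s/(t\,\delta\log(1/\delta))$. Summing over $p$ would cost another factor $P$. To avoid that, I will instead apply the compression bound once to the whole transcript viewed as a summary of size $Ps$ of the full graph $\Gbase$, with $J$ fixed; then only a single $P$ factor comes from \Cref{lem:irrelevant}. Which accounting is used decides the exact $P^2$ in the statement.

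For the second case, take an output edge $e$ that does not belong to the summary. Conditioned on the summary, $e$ is present with probability less than $1-\delta/2$, so outputting it risks an invalid answer. I will argue that a protocol succeeding with probability at least $1-\delta/2$ cannot output, in expectation, many non-belonging edges. The cleanest route is to assume the last player outputs only edges it is sure of, or edges of its own input. Any output edge outside its input and outside the belonging set is absent with probability at least $\delta/2$, which contradicts correctness if such edges appear too often. I expect this step to be the main obstacle, because the contradiction must be made quantitative without losing more than constant factors.

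Finally, Markov's inequality applied to the expected count from the first case, with threshold $(2/\delta)$ times the expectation, gives failure probability at most $\delta/2$. This yields the stated bound $16s P^2/(\delta^2\log(1/\delta)\,t)$.
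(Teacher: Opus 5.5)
Your proposal has the right ingredients: \Cref{prop:compression}, \Cref{lem:irrelevant}, the split into belonging and non-belonging output edges, and Markov with threshold $2/\delta$ times the mean. However, the step that produces the $P^2$ is not justified, and the second case is left open.

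\textbf{The accounting step.} In your per-player version you fold $\msg_{<p}$ into the summary. That object is not a function of $\EE{p}$, so \Cref{prop:compression} does not apply as stated. It also inflates the summary size to $Ps$, which gives $P^3$ after summing, as you note. Your fix is one compression bound for the whole transcript with $J$ fixed, followed by \Cref{lem:irrelevant}. But once $J$ is fixed, no randomness is left in $J_{\geq p}$. Moreover, the belonging status of $e \in {\Ebase}^{(p)}$ under the whole transcript a priori depends on $J_{\geq p}$, both through conditioning on $J$ and through $\msg_{>p}$: the later inputs ${\Ebase}^{(q)} = \EE{q}(J_{<q})$ involve $J_p,\ldots,J_{q-1}$.

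The missing idea is the following identity, valid for every $e \in {\Ebase}^{(p)}$:
\[
\Pr\paren{e \in G \mid \msg_1,\ldots,\msg_P,J} = \Pr\paren{e \in \EE{p} \mid \msg_{<p},J_{<p},\msg_p}.
\]
It holds because, given $J$, the players' inputs are independent, so $\EE{p} \perp \msg_{>p} \mid \msg_{\leq p},J$; and $\EE{p}$ is independent of $J_{\geq p}$ given the rest.

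The paper uses this identity as follows. It conditions on $(\msg_{<p},J_{<p})$, which is legitimate because $\msg_{<p}$ carries no information about $\EE{p}$, so $\EE{p}$ is still distributed as $\GG({\Ebase}^{(p)}(J_{<p}),\delta)$. It then treats $\msg_p$ alone as a size-$s$ compression scheme. This gives:
\begin{itemize}
\item at most $8s/(\delta\log(1/\delta))$ belonging edges per player in expectation;
\item a factor $P/t$ from \Cref{lem:irrelevant}, since this belonging set is a function of $(\msg_{\leq p},J_{<p})$ and hence independent of $J_{\geq p}$;
\item the bound $8P^2 s/(t\,\delta\log(1/\delta))$ after summing over $p \in [P]$.
\end{itemize}
Your global route can also be completed with the same identity. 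It shows that the whole-transcript belonging set restricted to ${\Ebase}^{(p)}$ depends only on $(\msg_{\leq p},J_{<p})$, and then $\Exp\bracket{\card{B} \mid J} \leq 8Ps/(\delta\log(1/\delta))$ yields the same bound. Without the identity, the $P/t$ factor is unjustified.

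\textbf{The second case.} You flag this as the main obstacle, but it needs no quantitative counting of non-belonging edges, only the right conditioning. Reveal $J_{<p}$ to $\PS{p}$ and all of $J$ to $\PS{P+1}$. This only helps the protocol, and it makes $\EE{P+1}$ redundant since $J$ determines it. The output is then a deterministic function of $(\msg_1,\ldots,\msg_P,J)$, and belonging should be defined with respect to exactly this conditioning.
\begin{itemize}
\item If the output contains even one non-belonging edge, then conditioned on the transcript that edge is missing from $G$ with probability at least $\delta/2$, so the protocol errs.
\item Otherwise every special edge it outputs belongs, and Markov's inequality on the expectation above bounds their number.
\end{itemize}
In your formulation, the last player's view contains $\EE{P+1}$, which reveals $L(\Mbase)$ and $R(\Mbase)$ and hence information about $J$, as well as the later messages. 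So ``$e$ does not belong to $\PS{p}$'s summary'' does not by itself imply that the last player's posterior on $e$ is below $1-\delta/2$. The identity above is what links the two notions.
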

\begin{proof}
	Suppose in the communication game, to each $\PS{p}$ for $p \in [P+1]$, we also present the choice of $J_{< p}$ additionally; any lower bound proven with this extra information clearly holds against the original version of the problem as well. We partition
	the analysis from the perspective of the last player versus all other players. 
	
	\paragraph{Player $P+1$.} This player is responsible for outputting the answer as a function of $\msg_1,\ldots,\msg_P$, $J$, and $\EE{P+1}$. Note that $\EE{P+1}$ is deterministically fixed by $J$ so it does not provide any further information. 
	We say that an edge $e \in \Ebase$ \textbf{belongs} to $\msg_1,\ldots,\msg_P,J$ iff 
	\[
		\Pr_{G}\paren{e \in G \mid \msg_1,\ldots,\msg_P,J} \geq 1-\delta/2. 
	\]
	Given that the output of $\PS{P+1}$ is a deterministic function of $(\msg_1,\ldots,\msg_P,J)$, if this player outputs an edge that does \emph{not} belong to these messages, then, with probability 
	at least $\delta/2$ over the choice of the input (conditioned on everything), this edge will not be part of the graph. Thus, whenever $\PS{P+1}$ outputs an edge that does not belong to received messages and $J$, 
	the protocol errs with probability at least $\delta/2$. Hence, in the following, we only need to focus on the special edges that belong to the messages and bound their total number. 
	
	Fix some $p \in [P]$ and consider an edge $e \in {\Ebase}^{(p)}$. We have 
	\begin{align*}
		\Pr_{G}\paren{e \in G \mid \msg_1,\ldots,\msg_P,J} &= \Pr_{\EE{p}} \paren{e \in \EE{p} \mid \msg_1,\ldots,\msg_P,J} \tag{because $e$ is in ${\Ebase}^{(p)}$, we only need to consider randomness of $\EE{p} \subseteq {\Ebase}^{(p)}$} \\
		&=  \Pr_{\EE{p}} \paren{e \in \EE{p} \mid \msg_{1},\ldots,\msg_p,J} \tag{conditioned on $J$, players' inputs are independent and thus $\EE{p} \perp \msg_{p+1},\ldots,\msg_{P} \mid \msg_{p}, J$} \\
		&= \Pr_{\EE{p}} \paren{e \in \EE{p} \mid \msg_{1},\ldots, \msg_p,J_{<p}} \tag{because again $\EE{p} \perp J_{\geq p}$ conditioned on the rest}.
	\end{align*}
	Hence, for any edge $e \in {\Ebase}^{(p)}$, the edge $e$ belongs to $\msg_1,\ldots,\msg_P,J$ iff the following equivalent condition holds: 
	\begin{align}
		\Pr_{\EE{p}} \paren{e \in \EE{p} \mid \msg_{<p},J_{<p},\msg_p} \geq 1-\delta/2. \label{eq:our-task-belong}
	\end{align}
	In the following, we focus on bounding the number of special edges that satisfy~\Cref{eq:our-task-belong} for each player separately. 
	
	\paragraph{Player $p \in [P]$.} Conditioned on $\msg_{<p},J_{<p}$, the input of $\PS{p}$ is a random subgraph of ${\Ebase}^{(p)}$ obtained by removing $\delta$-fraction of edges uniformly. 
	Consider the message $\msg_p$ written by $\PS{p}$ on the blackboard. We can think of $(\pi,\msg_{<p},J_{<p})$ as being a compression scheme for subgraphs of ${\Ebase}^{(p)}(J_{<p})$, 
	which uses $\msg_p$ as its summary. 
	
	Recall the definition of an edge belonging to a summary in~\Cref{sec:comp-lem} and note that it is equivalent with~\Cref{eq:our-task-belong}. We can thus 
	define $B(\msg_{<p},J_{<p},\msg_p)$ to be the set of edges that belong to $\msg_p$ under the compression scheme $(\pi,\msg_{<p},J_{<p})$, and 
	note that this is the same as $\belong{{\Ebase}^{(p)}(J_{<p})}{\msg_p}$ defined in~\Cref{sec:comp-lem}. Thus, by~\Cref{prop:compression}, 
	\[
		\Exp_{\EE{p}} \card{B(\msg_{<p},J_{<p},\msg_p) \mid \msg_{<p},J_{<p}} \leq  \frac{8s}{\delta \cdot \log{(1/\delta)}}. 
	\]
	At the same time, an $e \in {\Ebase}^{(p)}(J_{<p})$ can only become special given the randomness of $J_{\geq P}$, which is entirely independent of $\msg_{<p},J_{<p},\msg_p$ as argued earlier. Thus, 
	for any choice of $\msg_{<p},J_{<p},\msg_p$ and even $\EE{p}$, 
	\[
		\Exp_{J \geq p}\bracket{\text{\# of special edges in $B(\msg_{<p},J_{<p},\msg_p)$} \mid \EE{p},\msg_{<p},J_{<p},\msg_p} \leq \frac{P}{t} \cdot \card{B(\msg_{<p},J_{<p},\msg_p)}. 
	\]
	by~\Cref{lem:irrelevant}. Combining the above two equations implies that 
	\begin{align}
		\Exp_{\EE{p},J_{\geq p}} \bracket{~\text{\# of special edges in $B(\msg_{<p},J_{<p},\msg_p)$} \mid \msg_{<p},J_{<p}} \leq  \frac{8s}{\delta \cdot \log{(1/\delta)}} \cdot \frac{P}{t}. \label{eq:our-task-final}
	\end{align}
	
	\paragraph{Concluding the proof.} By linearity of expectation and~\Cref{eq:our-task-final}, we obtain that over the randomness of the entire input, 
	\[
		\Exp\bracket{~\text{\# of special edges that belong to messages of players}~} \leq P \cdot \frac{8s}{\delta \cdot \log{(1/\delta)}} \cdot \frac{P}{t}. 
	\]
	Thus, by Markov bound, 
	\[
		\Pr\paren{\text{$\PS{P+1}$ can output more than $\frac{16s}{\delta^2 \cdot \log{(1/\delta)}} \cdot \frac{P^2}{t}$ special edges that belong}} \leq \frac{\delta}{2}. 
	\]
	Thus, either the protocol outputs an edge that does not belong to some message and errs with probability at least $\delta/2$, or cannot find enough special edges to output with probability at least $\delta/2$, 
	concluding the proof. \Qed{lem:special-count}
\end{proof}

Finally, we are ready to conclude the proof of~\Cref{thm:framework}. 

\begin{proof}[Proof of~\Cref{thm:framework}]
	Consider the distribution defined above and let 
	\[
	s = \frac{\delta^3 \cdot \log{(1/\delta)} \cdot n \cdot t}{16P^2}
	\]
	 be the communication cost of the protocol. Let $M_\pi$ denote the returned matching of the protocol. By~\Cref{lem:special-count}, with probability at least $\delta/2$, the matching $M_{\pi}$
	can only have
	\[
		 \frac{16s}{\delta^2 \cdot \log{(1/\delta)}} \cdot \frac{P^2}{t} =  \frac{16\delta^3 \cdot \log{(1/\delta)} \cdot n \cdot t}{16P^2 \cdot \delta^2 \cdot \log{(1/\delta)}} \cdot \frac{P^2}{t} = \delta n
	\]
	special edges. By~\Cref{lem:large-matching}, this implies that 
	\[
		\card{M_\pi} \leq \delta n + \paren{2-2\cdot value(\bG)} \cdot \card{L} + P \cdot \delta \cdot \card{L} \leq  \paren{2-2\cdot value(\bG)} \cdot \card{L} + (P+1) \cdot \delta n. 
	\]
	On the other hand, again by~\Cref{lem:large-matching}, with probability $1-o(1)$, graph $G$ has a matching of size 
	\[
	\paren{2-value(\bG)} \cdot \card{L} - 2\delta \cdot \card{L} \geq \paren{2-value(\bG)} \cdot \card{L} - 2\delta n. 
	\]
	Thus, with probability at least $\delta/2-o(1)$, the approximation ratio of the protocol is at most 
	\[
		\frac{{2-2\cdot value(\bG)}}{2-value(\bG)} + 10P \delta = \alpha(\bG) + 10P \delta,
	\]
	by the definition of approximation ratio $\alpha(\bG)$ of the blueprint in~\Cref{def:blueprint-value-approx}. 
	
	Finally, we can re-parameterize $\delta \leftarrow 10P\delta$ and since both parameters $P$ and $\delta$ are constant with respect to $n$, and using the value of $t=n^{\Omega(1/\log\log{n})}$ in~\Cref{prop:base-c-rs}, 
	we get that having 
	\[
		s = n^{1+o(1/\log\log{n})}
	\]
	communication implies that success probability of the protocol in outputting a $(\alpha(\bG)+\delta)$-approximation is at most $1-\delta$, concluding the proof. \Qed{thm:framework}
\end{proof}

\clearpage


\newcommand{\Old}{\ensuremath{\textsc{Old}}}
\newcommand{\New}{\ensuremath{\textsc{New}}}

\newcommand{\Top}{\ensuremath{\textsc{Top}}}
\newcommand{\Bot}{\ensuremath{\textsc{Bot}}}

\newcommand{\Free}{\ensuremath{\textsc{Free}}}
\newcommand{\Match}{\ensuremath{\textsc{Match}}}

\newcommand{\bA}{\mathbb{A}}
\newcommand{\bB}{\mathbb{B}}

\newcommand{\bAL}{\mathbb{AL}}
\newcommand{\bBL}{\mathbb{BL}}
\newcommand{\bCL}{\mathbb{CL}}
\newcommand{\bDL}{\mathbb{DL}}

\newcommand{\bAR}{\mathbb{AR}}
\newcommand{\bBR}{\mathbb{BR}}
\newcommand{\bCR}{\mathbb{CR}}
\newcommand{\bDR}{\mathbb{DR}}
\newcommand{\bOL}{\mathbb{OL}}
\newcommand{\bOR}{\mathbb{OR}}

\newcommand{\bX}{\mathbb{X}}
\newcommand{\bY}{\mathbb{Y}}

\section{Blueprint Constructions}\label{sec:construct-blueprints}

Equipped with \Cref{thm:framework}, we
can now switch to constructing blueprints with small approximation ratio or alternatively large value. This is the topic of this section. 

Let us define $\tvalue$ as the largest possible 
$value$ in proper blueprints, i.e.,\footnote{Considering $\set{value(\bG) : \text{a proper blueprint $\bG$}}$ as a subset of \emph{real} numbers in $[0,1]$, 
by the least upper bound property of bounded real intervals,  this set does indeed admit a supremum and thus $\tvalue$ exist and is well-defined.}
\begin{align}
	\tvalue := \sup\set{value(\bG):\text{a proper blueprint $\bG$}}. \label{eq:target-value}
\end{align}

By~\Cref{thm:framework} and \Cref{def:blueprint-value-approx}, the best approximation ratio possible by single-pass semi-streaming algorithms for the maximum matching problem is $\tapprox + \delta$ for any fixed $\delta > 0$, 
where 
\begin{align}
	\tapprox := \frac{2-2\cdot \tvalue}{2-\tvalue} \qquad \paren{=\inf\set{\alpha(\bG):\text{a proper blueprint $\bG$}}}; \label{eq:target-approx}
\end{align}
the existence of $\delta$ also ``takes care'' of the supremum term to obtain a \emph{finite} proper blueprint\footnote{We note that we could have directly worked with $\tapprox$ instead of $\tvalue$ but in general
 $value$ of a blueprint is a more natural property of the blueprint and is easier to work with in the calculations.}. 

Our main result in this section is the following. 

\begin{theorem}\label{thm:main-blueprint}
	For $\tvalue = \sup\set{value(\bG):\text{a proper blueprint $\bG$}}$ defined in~\Cref{eq:target-value}, 
	\[
		\tvalue \geq \frac{5-\sqrt{10}}{3}. 
	\]
	Alternatively, for $\tapprox$ defined in~\Cref{eq:target-approx}, we have $\tapprox \leq (8-2\sqrt{10})/3 \sim 0.558$. 
\end{theorem}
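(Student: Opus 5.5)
Since $\tapprox = (2-2\tvalue)/(2-\tvalue)$ is decreasing in $\tvalue$, it suffices to prove the value bound $\tvalue \geq (5-\sqrt{10})/3 \approx 0.612$. The second form then follows by substitution: $2-2V = (2\sqrt{10}-4)/3$ and $2-V = (1+\sqrt{10})/3$, so the ratio is $(2\sqrt{10}-4)/(1+\sqrt{10})$, which rationalizes to $(8-2\sqrt{10})/3$.

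I will argue by \emph{bootstrapping}: assume for contradiction that $\tvalue < V_0 := (5-\sqrt{10})/3$. Fix any proper blueprint $\bG$ with value $v$ close to $\tvalue$; by \Cref{prop:simplify-blueprint} and \Cref{lem:increase-C-blueprint,lem:integer-weights} we may freely rescale it. I then build a new proper blueprint whose value is a function $F(v) > v$ whenever $v < V_0$. Taking $v$ close enough to $\tvalue$ gives $F(v) > \tvalue$, which is a contradiction.

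The construction goes as follows. Introduce one new first group of edges, meaning I raise $P$ by one (as in the proof of \Cref{prop:simplify-blueprint}, where $\hbEE{1}$ was left empty). This new group, $\bEE{1}$, holds a "top" matching between new blocks with relative sizes that are free parameters (say $a$ and $b$). The edges of $\bEE{1}$ ban, across the first coordinate, the pairs that later edges would otherwise use. I then place a scaled copy of the recursive blueprint $\bG$ on the vertices left unmatched or freed by this first stage, shifting its edges to groups $2,\ldots,P+1$. Ban constraints of the copy only involve coordinates $\geq 2$, so the copy stays proper internally. What must be checked is that \textbf{(i)} the copy's edges do not create bans that kill the first-stage edges, and \textbf{(ii)} the first-stage edges, which ban along all coordinates $\geq 1$ with a fixed prefix, do not hit vertices used by the copy. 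I would separate the copy's vertices from the new blocks by reserving distinct values of the first coordinate (using $C$ divisible by enough, via \Cref{lem:increase-C-blueprint}).

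Next I compute the value of the composite blueprint from the matching weights and block sizes, obtaining something like $F(v) = \big(2\,\mathrm{mass}_1 + v \cdot \mathrm{size}_{\mathrm{rec}}\big)/\mathrm{size}_{\mathrm{total}}$. I then optimize the free sizes and weights, which must be rational, so I take rational approximations. The fixed-point equation $F(V)=V$ should reduce to a quadratic; for the target $(5-\sqrt{10})/3$ it is $3V^2 - 10V + 5 = 0$. The condition $F(v) > v$ for $v < V_0$ is what is needed. As a warm-up, I would first verify that a simpler gadget gives the fixed point $2-\sqrt{2}$ for the approximation, i.e.\ value $2-\sqrt2$ solving $V^2-4V+2=0$, and then refine that gadget.

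The main obstacle is designing the gadget so that its ban pattern is exactly compatible with an arbitrary embedded blueprint. I must choose which vertices the new $\bEE{1}$ edges touch and which banned pairs are sacrificed so that the banned mass is as small as possible. Getting the constants in $F$ right so the quadratic has root exactly $(5-\sqrt{10})/3$ is where I expect most of the work and trial.
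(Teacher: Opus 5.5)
Your reduction to the value bound is right: the algebra giving $(8-2\sqrt{10})/3$ checks out. The outer bootstrapping is also the same frame as the paper's. The paper proves a quantitative amplification step, \Cref{lem:main-amplification}: if $value(\bG)<\frac{5-\sqrt{10}}{3}-\delta$, then some proper $\hbG$ has $value(\hbG)>value(\bG)+\delta^3$. You need such a uniform gain, or continuity of $F$, for ``$F(v)>v$'' to contradict the supremum.

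The genuine gap is that the amplification gadget, which is the entire content of the theorem, is never constructed. You state the target quadratic $3V^2-10V+5=0$ and leave finding a gadget that realizes it to trial. Moreover, the skeleton you do sketch cannot realize it. Suppose $P$ is raised by one and all new edges sit in $\hbEE{1}$. Then every new edge between a left block $b$ and a right block $b'$ bans $\hbLL{b}(z)$ against $\hbRR{b'}(z)$ for \emph{every} $z\in[C]^{P+1}$. A ban forbids one of the two vertices from having any edge at all. So a block receiving group-$1$ edges from a block with matched fraction $f$ can be matched, by any edges whatsoever, in at most a $1-f$ fraction. In your skeleton the sending block hosts the copies of $\bG$ plus the $\eps$-fraction that sends the new mass, so $f=(1-\eps)v+\eps$ with $v:=value(\bG)$. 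The receiving block is therefore capped at $(1-\eps)(1-v)$. This is precisely the construction of \Cref{lem:2-sqrt2-amplification}, which already saturates that cap. Its gain condition is $v^2-4v+2>0$, so its fixed point is $2-\sqrt2$, not $(5-\sqrt{10})/3$.

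The missing idea is to make the bans between the receiving block and the $\bG$-hosting block prefix-local, which needs many new groups, not one. The paper raises $P$ by $K+1$, with $K=(1-v)t$ and $t$ large.
\begin{itemize}
\item Copies of $\bG$ sit in block $K+2$ on the $\Old$ values of coordinate $K+1$, using groups $K+2,\ldots,K+1+P$.
\item The $\New$ part of block $K+2$ is matched to $\bCL=\hbLL{K+1}(\Top_1,\ldots,\Top_K,\Old,\Free_R)$ (and symmetrically) by edges of group $K+1$. These bans only reach vertices of block $K+1$ whose first $K$ coordinates lie in $\Top_1\times\cdots\times\Top_K$.
\item This frees the parts $\bBL_i$ of block $K+1$ (prefix $\Top_1,\ldots,\Top_{i-1},\Bot_i$, total fraction $K/(K+t)$). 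They are matched by group-$i$ edges to sets $\bAR_i$ in $K$ further blocks. These bans are again prefix-local, and are verified by complementarity ($\Top$ versus $\Bot$, $\Match$ versus $\Free$).
\item The relative sizes are $s_i=\frac{K+t-i+1}{t\,v}$ for $i\in[K]$, $s_{K+1}=K+t$, and $s_{K+2}=\frac{(1-\eps)(1-v)t}{\eps}$.
\end{itemize}
With these choices, the value exceeds $v$ as soon as $3v^2-10v+5$ beats error terms of order $\eps+1/t$. Your proposal would have to supply some construction of this nested-prefix kind; as written, it only supports $\tvalue\ge 2-\sqrt2$.
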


We start with a warm-up that proves $\tvalue \geq 2-\sqrt{2}$ or equivalently $\tapprox \leq 2-\sqrt{2} \sim 0.585$; this already improves the state-of-the-art lower bounds for the streaming matching problem in~\cite{Kapralov21}. We then build on this to prove~\Cref{thm:main-blueprint} which concludes the proof of~\Cref{res:main}.

\subsection{Warm-up: A $(2-\sqrt{2})$ Approximation Blueprint}\label{sec:2-sqrt-2}

We start with our warm-up construction. 

\begin{proposition}\label{thm:2-sqrt2-blueprint}
	Let $\tvalue = \sup\set{value(\bG):\text{a proper blueprint $\bG$}}$ be as in~\Cref{eq:target-value}. Then, 
	\[
		\tvalue \geq 2-\sqrt{2}. 
	\]
\end{proposition}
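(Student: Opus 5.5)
The plan is to argue by a \emph{bootstrapping} (gadget) argument on the supremum $\tvalue$, using the flexibility of non-simple blueprints from \Cref{def:blueprint}. By \Cref{prop:simplify-blueprint} nothing is lost in allowing blocks, relative sizes and fractional weights, so I work in that generality throughout. The target is a transformation that takes any proper blueprint $\bG$ of value $v$ and produces a proper blueprint $\hbG$ of value $\varphi(v)$, for an explicit continuous increasing map $\varphi$ with $\varphi(v) > v$ for all $v < 2-\sqrt{2}$. A natural candidate, whose fixed-point equation is $v^2-4v+2=0$, is
\[
\varphi(v) = \frac{2+v^2}{4}.
\]
Given such a $\varphi$, the conclusion is immediate. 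Suppose $\tvalue < 2-\sqrt{2}$. Since $\varphi(\tvalue) > \tvalue$ and $\varphi$ is continuous, there is a proper blueprint $\bG$ with $value(\bG)$ close enough to $\tvalue$ that $\varphi(value(\bG)) > \tvalue$. The blueprint $\hbG$ then contradicts the definition of $\tvalue$ in~\Cref{eq:target-value}. As a sanity check, the base case $v=1/2$ (\Cref{fig:2/3-blueprint}) already gives $\varphi(1/2)=9/16$, and iterating converges to $2-\sqrt{2}$.

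For the gadget, I would increase $P$ by one and let the new group $\hbEE{1}$ correspond to the first player. The defining property is that an edge of $\hbEE{1}$ between blocks $\bLL{b}$ and $\bRR{b'}$ bans \emph{all} pairs $(\bLL{b}(z),\bRR{b'}(z))$ for $z\in[C]^{P+1}$, while edges in later groups $\hbEE{p+1}$ only ban pairs agreeing on the first coordinate. The layout is as follows.
\begin{itemize}
\item Put a fresh ``first-player'' matching, of some relative weight $1-\lambda$, between a new left block and a new right block, placed on first-coordinate values reserved for it.
\item Place scaled copies of $\bG$ (weight $\lambda$, via \Cref{lem:integer-weights}) on the remaining first-coordinate values, shifted into coordinates $2,\ldots,P+1$.
\item Arrange the copies so that each copy of $\bG$ re-matches vertices that are banned only diagonally relative to the first-player edges. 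This mimics the online preemptive instance, in which later players ``overwrite'' part of the first matching with a recursive hard instance.
\end{itemize}
Lifting a copy of $\bG$ into later coordinates preserves its internal bans, because the prefix $x_{<p}$ simply acquires a fixed first coordinate. The value of $\hbG$ is then a rational function of $v$ and $\lambda$ (and the chosen block sizes). I would optimise over $\lambda$ to obtain $\varphi(v)$; \Cref{lem:increase-C-blueprint} is used to make $C$ divisible as needed.

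The main obstacle is verifying properness of $\hbG$, especially the ban constraints of the new $\hbEE{1}$ edges across blocks of different relative sizes. These edges ban every diagonal pair between their two blocks, so the copies of $\bG$ must be placed on the two sides with mismatched first coordinates (or in different blocks) so that no banned pair has both endpoints matched. At the same time the matching constraint, which requires total weight exactly $\sizes(b)$ or zero at each vertex, must hold for the combined fractional assignment. I would first get this working with two blocks per side and sizes $(1,\lambda)$, in the style of \Cref{fig:blueprint3}. Then I would check the bans case by case according to which group the banning edge lies in, and only afterwards compute the value and optimise $\lambda$.
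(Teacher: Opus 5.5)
Your outer argument (a value-increasing transformation plus a contradiction at the supremum $\tvalue$) is the same as the paper's. However, the proposal has a genuine gap: the transformation, which is the entire substance of the proof, is never constructed. You give no edges, weights or block sizes, verify no constraint, and compute no value. The map $\varphi(v)=(2+v^2)/4$ is read off from the desired fixed-point equation $v^2-4v+2=0$, so invoking it assumes what has to be shown. The figure $\varphi(1/2)=9/16$ is likewise unsupported. The paper's gadget, optimized over its parameter, only approaches $4-2\sqrt{3}\approx 0.536$ in one step from value $1/2$. The paper in fact proves only a $\delta^3$ gain per step, which is why it argues through the supremum rather than through an explicit $\varphi$.

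Your sketched layout also hides the real difficulty. An edge of $\hbEE{1}$ between two blocks bans \emph{every} diagonal pair $(\hbLL{b}(z),\hbRR{b'}(z))$. Hence a first-group matching cannot cover the same reserved first-coordinate values on both sides of two fresh blocks. Taken by itself, a first-group matching between two blocks has value at most $1/2$. The gain can only come from placing the forced-unmatched diagonal partners of the new edges exactly on vertices that the embedded copy of $\bG$ already leaves unmatched. Your sketch does not say how to do this.

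That placement is the missing idea. The paper embeds $\bG$, with groups shifted by one, on the $\Old$ slice $\hbLL{1}(i,\cdot),\hbRR{1}(i,\cdot)$ with $i\in\Old$ of the original block. It then adds a second block of relative size $s=\eps/((1-\eps)(1-value(\bG)))$ whose only matched vertices are the ``mirror image'' of the free vertices of $\bG$:
\begin{itemize}
\item $\hbLL{2}(i,x)$ with $i\in\Old$ and $\bR(x)$ free in $\bG$;
\item $\hbRR{2}(i,x)$ with $i\in\Old$ and $\bL(x)$ free in $\bG$.
\end{itemize}
These vertices are joined by $\hbEE{1}$ edges of uniform weight $s/(\eps C^{P+1})$ to the whole $\New$ slice of block $1$ on the opposite side. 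The diagonal bans then hold because $\hbLL{1}(i,x)$ has an edge iff $i\in\New$ or $\bL(x)$ has an edge in $\bG$, whereas $\hbRR{2}(i,x)$ has an edge iff $i\in\Old$ and $\bL(x)$ is free. The choice of $s$ makes every fractional degree equal its block size. The resulting value is $((1-\eps)v+2\eps)/(1+s)$, whose derivative at $\eps=0$ is $(v^2-4v+2)/(1-v)$, positive for $v<2-\sqrt{2}$. You need some explicit construction of this kind, with properness checked and value computed, before the supremum argument can be invoked.
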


The key to proving~\Cref{thm:2-sqrt2-blueprint} the following \emph{amplification} step.

\begin{lemma}\label{lem:2-sqrt2-amplification}
	For any small $\delta \in (0,1/10)$ and any proper blueprint $\bG$ with $value(\bG) < 2-\sqrt{2} - \delta$, there exists another proper blueprint $\hbG$ such that 
	\[
	value(\hbG) > value(\bG) + \delta^3.
	\] 
\end{lemma}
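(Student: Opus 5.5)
The plan is to prove \Cref{lem:2-sqrt2-amplification} with an explicit \emph{gadget} that wraps a given proper blueprint $\bG$, with parameters $(P,C,B_L,B_R,\sizes_L,\sizes_R)$, into a blueprint $\hbG$ with $P+1$ rounds. Round $1$ of $\hbG$ is a fresh layer of edges. Rounds $2,\ldots,P+1$ contain scaled copies of $\bG$ with its rounds shifted by one. By \Cref{lem:increase-C-blueprint} and \Cref{lem:integer-weights}, I may first enlarge $C$ so that it has whatever divisibility I need, and rescale all weights and sizes so they are integral. I will write a vertex label of $\hbG$ as $a\conc x$ with $a\in[\hC]$ the new first coordinate and $x\in[C]^P$.

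The gadget is the following.
\begin{itemize}
\item \textbf{Round-$1$ layer.} Add edges from left blocks to right blocks of the form $(\hbL(a\conc x),\hbR(b\conc x'))$ with $a\neq b$, in the spirit of the $2/3$-blueprint of \Cref{fig:2/3-blueprint}. Since $p=1$, such an edge bans every pair $(\hbL(z),\hbR(z))$ within the two blocks it joins.
\item \textbf{Copies of $\bG$.} Place copies of $\bG$ in new blocks, using the same first coordinate $a$ on both sides: the left side is $\{a\}\times[C]^P$ and the right side is $\{a\}\times[C]^P$ of a different pair of blocks. Each edge of $\bEE{p}$ becomes an edge of $\hbEE{p+1}$.
\item \textbf{Compatibility.} For an edge of round $p+1\ge 2$, the ban compares prefixes of length $p$. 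These prefixes contain the shared coordinate $a$, so the bans of a copy are exactly the bans of $\bG$, shifted. This reuses the verification pattern in the proof of \Cref{prop:simplify-blueprint}. The round-$1$ bans only constrain diagonal pairs $\hbL(z),\hbR(z)$ inside the block pair touched by the round-$1$ layer. So I will arrange the blocks so that whenever a copy of $\bG$ uses $\hbL(z)$ in such a block, the vertex $\hbR(z)$ is unused; that is, the copies occupy complementary "halves" indexed by $a$.
\item \textbf{Weights.} The relative sizes of the round-$1$ blocks versus the copy blocks, and the fractional weight on the round-$1$ edges, are left as a rational parameter $\lambda$ to be tuned.
\end{itemize}

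Next I compute $value(\hbG)$ as an explicit function $f_\lambda(v)$ of $v=value(\bG)$ and $\lambda$, using \Cref{def:blueprint-value-approx}. The target is a recursion whose fixed point is $2-\sqrt2$. One natural form is $f(v)=(2+v^2)/4$, whose fixed points satisfy $v^2-4v+2=0$. This matches the intuition that the copies of $\bG$ fill the vertices left over by the round-$1$ layer, so $v$ enters quadratically.

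Given such an $f$, I then show that $f(v)-v$ is a positive constant multiple of $v^2-4v+2$ (or something with the same sign pattern) on $[0,1]$. When $v<2-\sqrt2-\delta$, we have $v^2-4v+2\ge c\delta$ for some absolute $c>0$, because the polynomial has a simple root at $2-\sqrt2$. This gives $f(v)>v+\Omega(\delta)$. The optimal $\lambda$ may be irrational, so I will take a rational $\lambda$ within $O(\delta)$ of it, losing at most $O(\delta^2)$ in the value. The result is still above $v+\delta^3$ once $\delta<1/10$.

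The main obstacle is designing the block layout so that two things hold at once. First, the round-$1$ bans must not kill the copies of $\bG$. Second, the matching constraint must hold: every used vertex carries total weight equal to its block's relative size, and this is delicate where round-$1$ edges and copy edges share a block. My first attempt is to keep the round-$1$ edges and the copies on disjoint blocks, and to let round-$1$ edges only touch first coordinates $a$ that the copies never use on the opposite side. If the value then comes out linear rather than quadratic in $v$, I will instead place a second copy of $\bG$ on the vertices the round-$1$ layer left free, to recover the $v^2$ term.
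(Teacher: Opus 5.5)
There is a genuine gap: you never define $\hbG$. Your observations about the ingredients are sound. Shifting the rounds of $\bG$ by one keeps its bans inside a fixed first coordinate, and a round-$1$ edge bans exactly the diagonal pairs $(\hbL(z),\hbR(z))$ of the two blocks it joins. But the block layout you defer as ``the main obstacle'' is the whole content of the lemma, and $f(v)=(2+v^2)/4$ is postulated rather than computed from any blueprint, so neither the constraints nor the value can be checked.

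The layouts you do sketch also cannot work:
\begin{itemize}
\item If the round-$1$ layer lives on blocks disjoint from the copies, $\hbG$ is a disjoint union, and its value is a weighted average of the parts' values. A one-round layer like the $2/3$ blueprint has value at most $1/2$, while one may assume $v=value(\bG)\ge 1/2$. So nothing is gained.
\item If you enforce the compatibility condition only at the level of the first coordinate (``complementary halves indexed by $a$''), whole first-coordinate slices of the block holding the copies must stay unmatched. That loss outweighs what the round-$1$ layer adds.
\item The $v^2$ term does not come from placing a second copy of $\bG$, so your fallback is aimed at the wrong mechanism.
\end{itemize}

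The missing idea is to satisfy the diagonal round-$1$ bans at the level of \emph{full labels}, by taking the ``mirror image'' of the unmatched vertices of $\bG$. The paper first reduces to a simple $\bG$ with $\eps C$ integral and $v\ge 1/2$. Then $\hbG$ has two blocks per side, with relative sizes $(1,s)$.
\begin{itemize}
\item Block $1$ carries a copy of $\bG$, with rounds shifted to $2,\ldots,P+1$, on every first coordinate $i\in\Old$ (a $1-\eps$ fraction).
\item The block-$1$ vertices with first coordinate in $\New$ are joined in round $1$ to $\Free_R=\set{\hbRR{2}(i,x): i\in\Old,\ \bL(x)\text{ unmatched in }\bG}$. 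Symmetrically, $\hbRR{1}(\New,\cdot)$ is joined to $\Free_L$.
\end{itemize}
The round-$1$ ban between $\hbLL{1}(i,x)$ and $\hbRR{2}(i,x)$ then holds automatically. On $\New$ coordinates, $\hbRR{2}$ is unused. On $\Old$ coordinates, $\hbRR{2}$ is used exactly where $\hbLL{1}$ is unmatched.

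As a result, a $(1-\eps)(1-v)$ fraction of block $2$ becomes matchable at no cost to block $1$. The matching constraint forces $s=\eps/((1-\eps)(1-v))=O(\eps)$. The value is $value(\hbG)=((1-\eps)v+2\eps)/(1+s)$, which exceeds $v$ by roughly $\eps(v^2-4v+2)/(1-v)$. So the quadratic comes from the factor $1/(1-v)$ in $s$, and the step is an $\eps$-perturbation rather than a fixed map. Taking $\eps=\Theta(\delta)$ together with $v^2-4v+2\ge 2\sqrt{2}\,\delta$ gives a gain of order $\delta^2>\delta^3$.
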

We note that the dependence on $\delta$ in~\Cref{lem:2-sqrt2-amplification} is inconsequential and the only important part is that there is some fixed function of $\delta$ in the RHS. 
\Cref{thm:2-sqrt2-blueprint} follows easily from~\Cref{lem:2-sqrt2-amplification}.  

\begin{proof}[Proof of~\Cref{thm:2-sqrt2-blueprint}]
	Suppose by contradiction that $\tvalue = 2-\sqrt{2}-\beta$ for some $\beta > 0$. Then, for any fixed $\theta > 0$, 
	there must exist a  (finite) proper blueprint with value at least $2-\sqrt{2}-\beta-\theta$. By~\Cref{lem:2-sqrt2-amplification}, we can find 
	another proper blueprint with value at least 
	\[
	2-\sqrt{2}-\beta-\theta + (\beta+\theta)^3 \geq 2-\sqrt{2}-\beta-\theta + \beta^3 > 2-\sqrt{2}-\beta,
	\] 
	by taking $\theta < \beta^3$. This leads to a contradiction, implying that $\tvalue \geq 2-\sqrt{2}$. \Qed{thm:2-sqrt2-blueprint}
\end{proof}

We now prove~\Cref{lem:2-sqrt2-amplification}. 
Throughout the proof, we fix $\eps > 0$ to be a sufficiently small constant as a function of $\delta$ such that $1/\eps$ is an integer.
We first mention several without-loss-of-generality assumptions that will be used in the proof.

\paragraph{Initial assumptions.}  By~\Cref{prop:simplify-blueprint}, we can assume without loss of generality that the starting blueprint $\bG$ in~\Cref{lem:2-sqrt2-amplification} is simple. Thus, 
the blueprint is $\bG = (\bL,\bR,\bE)$ with only two parameters $(P,C)$. We can also apply~\Cref{lem:increase-C-blueprint} to $\bG$ to ensure that its $C$-parameter is such that $\eps \cdot C$ is an integer (given $1/\eps$ is an integer). 
Finally, we also have $value(\bG) \geq 1/2$ because otherwise we can simply replace $\bG$ with any simple blueprint of that value (say, the one in~\Cref{fig:2/3-blueprint}) and use that one instead. 

\subsubsection{A high level intuition of the construction}

We first give a high-level intuition of what our construction ``attempts'' to achieve -- the actual construction requires some technical deviation but the conceptual ideas are the same. 
For this discussion, it will be helpful to follow~\Cref{fig:sqrt2-step-by-step} step by step when reading the construction below.

\begin{itemize}
	\item \textbf{Step 1}: Create $\hbG$ by starting from a copy of $\bG$ and its edges. 
	\item  \textbf{Step 2}: Increase the $P$-parameter of $\hbG$, which amounts to introducing a new dimension to the tuples defining the vertices. Consider vertices with new-dimension-index in
 $\set{1,\ldots,(1-\eps)C}$ as \emph{Old} vertices and the ones in $\set{(1-\eps)C+1,\ldots,C}$ as \emph{New} vertices\footnote{We note that implementing this step via blueprints is not directly possible and requires a different route.}. 
	\item \textbf{Step 3:} Add a new block of relative size $s = \Theta(\eps)$ to be fixed later; use the same partitioning between \emph{Old} and \emph{New} vertices for this block as well. 
	\item \textbf{Step 4:} Connect the \emph{New} vertices of the top blocks to the \emph{Old} vertices of the bottom blocks, using  the ``mirror image'' of unmatched vertices of top blocks to pick the bottom block vertices. 
\end{itemize}

\begin{figure}[h!]
	\centering
	\subcaptionbox{\textbf{Step 1}: start with the original blueprint $\bG$.}%
	[.22\linewidth]{
		\includegraphics[scale=0.4]{Figs/sqrt2-step1.png}
	} 
	\hspace{0.2cm} 
	\subcaptionbox{\textbf{Step 2}: add a new dimension, partition vertices by a $1-\eps$ to $\eps$ ratio.}%
	[.22\linewidth]{
		\includegraphics[scale=0.35]{Figs/sqrt2-step2.png}
	} 
	\hspace{0.2cm}
	\subcaptionbox{\textbf{Step 3}: add a  new block with (much) smaller relative size and partition it the same.}%
	[.22\linewidth]{
		\includegraphics[scale=0.35]{Figs/sqrt2-step3.png}
	} 
	\hspace{0.2cm} 
	\subcaptionbox{\textbf{Step 4}: connect New vertices at the top to the specified subset of Old vertices at the bottom.}%
	[.22\linewidth]{
		\includegraphics[scale=0.35]{Figs/sqrt2-step4.png}
	} 
	\caption{The intuition behind the blueprint $\hbG$ in~\Cref{lem:2-sqrt2-amplification}. 
	Shaded (blue) vertices at the top are matched in $\bG$ and the new shaded (red) vertices at the bottom are additionally matched in $\hbG$.}
	\label{fig:sqrt2-step-by-step}
\end{figure}

\noindent
To show $\hbG$ is a proper blueprint, we need to ensure it satisfies constraints of~\Cref{def:proper-blueprint}: 
\begin{itemize}
\item \textbf{Matching constraint:} the original edges of $\bG$ satisfy the requirement by default, so we only need to ensure the new edges can form a (fractional) matching satisfying the relative sizes: for this, 
we have relative size of \emph{New} vertices at top is $\eps$ and the \emph{Old} \underline{and} matched vertices at the bottom is $(1-value(\bG)) \cdot (1-\eps) \cdot s$. We thus need to pick 
\[
	s = \frac{\eps}{(1-\eps) \cdot (1-value(\bG))},
\]
to make these two quantities equal and ensure the matching constraint. 
\item \textbf{Ban constraint:} we add edges of $\bG$ to $\hbG$ as part of $\hbEE{2},\ldots,\hbEE{P+1}$ (i.e., by shifting their group-index by one); this ensures that the ``range'' of bans introduced by these edges will be localized to \emph{Old} vertices 
at the top; since $\bG$ was a proper blueprint, we obtain that the ban constraint of these edges are also satisfied; see~\Cref{fig:sqrt2-ban1}. 

The new edges will be added to $\hbEE{1}$ and so introduce ``global'' bans between the two blocks (by global, 
we mean not respecting \emph{Old} and \emph{New} partitions). But, by the choice of 
which vertices in the bottom blocks are matched (namely, the ``mirror image'' of unmatched vertices at the top for \emph{Old} vertices, and none of \emph{New} vertices of bottom), we ensure that at most one endpoint of each banned pair is matched as required; see~\Cref{fig:sqrt2-ban2}.  

\end{itemize}

\begin{figure}[h!]
	\centering
	\subcaptionbox{Bans introduced by edges of $\bG$ are localized to \emph{Old} vertices at the top.\label{fig:sqrt2-ban1}}%
	[.48\linewidth]{
		\includegraphics[scale=0.5]{Figs/sqrt2-ban1.png}
	} 
	\hspace{0.2cm} 
	\subcaptionbox{\textbf{Step 2}: Bans introduced by new edges are handled by the construction.\label{fig:sqrt2-ban2}}%
	[.48\linewidth]{
		\includegraphics[scale=0.5]{Figs/sqrt2-ban2.png}
	} 
	\caption{An intuitive explanation of the bans in the blueprint $\hbG$ of~\Cref{lem:2-sqrt2-amplification}. The dashed edges depict the  banned pairs. 
	The ban constraint requires at most one endpoint of each dashed edge to be matched.
	}
	\label{fig:sqrt2-ban}
\end{figure}

Finally, we should calculate $value(\hbG)$ and compare it with $value(\bG)$. By~\Cref{def:blueprint-value-approx}, the value of $\hbG$ is the fraction of its matched vertices over the relative sizes of the vertices. 
Since we need $value(\hbG) > value(\bG)$, but \emph{Old} top vertices achieve a value of exactly $value(\bG)$, we need to ensure the added vertices overall match at a higher rate; this translates 
to having 
\[
	\frac{4\eps}{2\eps+2s} > value(\bG);
\]
we have added $2(s+\eps)$ vertices in total to $\bG$ and have matched $4\eps$ of them ($2\eps$ of \emph{New} vertices in the top block and another $2\eps$ of their matched pairs). 
Plugging in the value of $s$  we get 
\[
	2\eps > \eps \cdot value(\bG) + \frac{\eps}{1-value(\bG)} \cdot value(\bG) + \Theta(\eps^2),  
\]
which can be re-written as 
\[
	value(\bG)^2 - 4 \cdot value(\bG) + 2 > \Theta(\eps^2). 
\]
Given the quadratic polynomial on the left is strictly positive for any $value(\bG) < 2-\sqrt{2}$, we can pick $\eps$ small enough to satisfy this inequality and obtain that $value(\hbG) > value(\bG)$, as desired.

\subsubsection{Construction of the blueprint $\hbG$ in~\Cref{lem:2-sqrt2-amplification}} 

We now formalize the blueprint $\hbG=(\hbL,\hbR,\hbE,\hbw)$ with parameters $(\hP,\hC,\hB_L,\hB_R,\hsizes_L,\hsizes_R)$. The reader may find it useful to refer to~\Cref{fig:2-sqrt2} throughout this proof. 

\medskip

\begin{figure}[h!]
	\centering
	\includegraphics[scale=0.5]{Figs/2-sqrt2.png}
	\caption{An illustration of the proper blueprint $\hbG$ in the proof of~\Cref{lem:2-sqrt2-amplification}. The shaded (red) vertices at the bottom part are $\Free_L$ and $\Free_R$.}\label{fig:2-sqrt2}
\end{figure}

\bigskip

We start with setting up the parameters.

\paragraph{Parameters of $\hbG$.} Define
\[
	\hP = P+1~,~\hC = C~,~\hB_L = \hB_R = 2~,~ \hsizes_L = \hsizes_R = (1,s), 
\]
where 
\begin{align}
	s := \frac{\eps}{(1-\eps) \cdot (1-value(\bG))},  \label{eq:2-sqrt2-s}
\end{align}
for the parameter $\eps$ chosen earlier at the beginning of the proof. 

\paragraph{Vertices of $\hbG$.} The choice of parameters fixes the vertices of $\hbG$ to also be
\[
	\hbL = \hbLL{1} \sqcup \hbLL{2} \quad\text{and}\quad \hbR = \hbRR{1} \sqcup \hbRR{2}, 
\]
where each of the blocks is now identified with $[C]^{P+1}$. We further define
\[
	\Old := \set{1,\ldots,(1-\eps)\cdot C}  \qquad \text{and} \qquad \New := \set{(1-\eps)\cdot C+1,\ldots,C} .
\]
These sets will be useful when defining the edges of the graph. 

\paragraph{Edges of $\hbG$ and the fractional assignments $\hbw$.} This the main step of the construction. We will have two different types of edges as defined below (see also~\Cref{fig:2-sqrt2}):  

\begin{itemize}
	\item \textbf{Type 1 edges:}  For every $i \in \Old$, connect 
	\[
		\text{$\hbLL{1}(i,x)$ to $\hbRR{1}(i,y)$ in $\hbEE{p+1}$ iff $(\bL(x),\bR(y))$ is an edge in $\bEE{p}$ of $\bG$}. 
	\]
	We further set the value of any edge $\be$ in this step to be
	\begin{align}
		\hbw(\be) := 1. \label{eq:sqrt2-we1}
	\end{align}
	When considering $\hbLL{1}(i,*)$ and $\hbRR{1}(i,*)$, the remaining labels of vertices are in $[C]^P$ and thus there is a natural bijection 
	between these two sets and $\bL$ and $\bR$. The edges in this step are simply ``embedding'' a copy of $\bG$ between each of $\hbLL{1}(i,*)$ and $\hbRR{1}(i,*)$ -- we only ``shift'' the partition of each 
	edge by one, i.e., edges in $\bEE{p}$ are added to $\hbEE{p+1}$ for $p \in [P]$. 

	\item \textbf{Type 2 edges:} Add an edge to $\hbEE{1}$ between every pairs of vertices in,\footnote{We note that the use of $\bR$ to define $\hbLL{2}$  in $\Free_L$ (and similarly $\bL$ in $\Free_R$) 
	is intentional and not a typo.} 
	\begin{align*}
		&\underbrace{\set{\hbLL{2}(i,x) \mid \text{$i \in \Old$ and $\bR(x)$ has no edge in $\bG$}}}_{:=\Free_L} \times \set{\hbRR{1}(j,y) \mid j \in \New}, \\
		\intertext{as well as between}
		&\underbrace{\set{\hbRR{2}(i,x) \mid \text{$i \in \Old$ and $\bL(x)$ has no edge in $\bG$}}}_{:=\Free_R} \times \set{\hbLL{1}(j,y) \mid j \in \New}.
	\end{align*}
	We set the fractional value of any edge $\be$ in this step to be
	\begin{align}
		\hbw(\be) := \frac{s}{\eps \cdot C^{P+1}}. \label{eq:sqrt2-we2}
	\end{align}
\end{itemize}

\noindent
This concludes the construction of the blueprint $\hbG$.

\subsubsection{Proving $\hbG$ is proper and concluding the proof of~\Cref{lem:2-sqrt2-amplification}}

We now need to ensure that the blueprint $\hbG$ we defined is indeed a proper blueprint, namely, verify its matching and ban constraints, and finally calculate its value to conclude the proof. 
We do so in the following three claims. 

\begin{claim}\label{clm:2-sqrt2-matching-constraint}
	Blueprint $\hbG$ satisfies the matching constraint. 
\end{claim}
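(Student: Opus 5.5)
The plan is to check the matching constraint of~\Cref{def:proper-blueprint} vertex by vertex. First I will observe that every vertex of $\hbG$ falls into exactly one of three classes: (i) \emph{Old} vertices of the first blocks, $\hbLL{1}(i,x)$ or $\hbRR{1}(i,y)$ with $i \in \Old$; (ii) \emph{New} vertices of the first blocks, with $j \in \New$; and (iii) vertices of the second blocks $\hbLL{2},\hbRR{2}$. Type 1 edges only touch class (i). Type 2 edges only join class (ii) to the sets $\Free_L,\Free_R \subseteq$ class (iii). So the two edge types never share an endpoint, and each class can be handled separately. The target is a weighted degree of $\hsizes(1)=1$ for matched vertices in the first blocks and $\hsizes(2)=s$ for matched vertices in the second blocks.

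\textbf{Class (i).} For a fixed $i \in \Old$, the type 1 edges between $\hbLL{1}(i,*)$ and $\hbRR{1}(i,*)$ form an isomorphic copy of $\bE$. The original blueprint $\bG$ is simple and proper, so $\bE$ is an ordinary matching. Hence each vertex of class (i) has either zero or exactly one incident edge, and that edge has weight $1$ by~\Cref{eq:sqrt2-we1}. Its weighted degree is therefore $0$ or $1 = \hsizes_L(1)$, as required.

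\textbf{Class (ii).} Take a vertex $\hbRR{1}(j,y)$ with $j \in \New$. It is adjacent to every vertex of $\Free_L$, and to nothing else. I will count
\[
\card{\Free_L} = \card{\Old} \cdot \card{\set{x : \bR(x)\text{ unmatched in }\bG}} = (1-\eps)C \cdot \paren{C^P - \card{\bE}} = (1-\eps)\paren{1-value(\bG)}C^{P+1},
\]
using $value(\bG)=\card{\bE}/C^P$ for simple blueprints. Multiplying by the weight from~\Cref{eq:sqrt2-we2} gives
\[
(1-\eps)(1-value(\bG))\cdot \frac{s}{\eps}=1
\]
by the choice of $s$ in~\Cref{eq:2-sqrt2-s}. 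The vertices $\hbLL{1}(j,y)$ with $j \in \New$ are handled symmetrically via $\Free_R$; the count uses $\bL$, and $\card{\bE}$ is the same.

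\textbf{Class (iii).} A vertex of $\Free_L$ is adjacent to all $\hbRR{1}(j,y)$ with $j\in\New$ and $y \in [C]^P$, which is $\eps C\cdot C^P$ vertices. Its weighted degree is thus $\eps C^{P+1}\cdot s/(\eps C^{P+1}) = s = \hsizes_L(2)$. Vertices of $\hbLL{2}$ outside $\Free_L$ have no edges at all. $\hbRR{2}$ is handled symmetrically.

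The only real obstacle is the class (ii) count. It hinges on correctly computing $\card{\Free_L}$ through $value(\bG)$, and on noting that $\eps C$ is an integer, which holds by the initial assumption, so that $\Old$ and $\New$ have the stated sizes. Everything else is bookkeeping.
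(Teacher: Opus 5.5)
Your proposal is correct and follows essentially the same vertex-by-vertex case analysis as the paper. You split vertices into Old first-block vertices, New first-block vertices, and second-block vertices, count $\card{\Free_L}$ (or $\card{\Free_R}$) as $(1-\eps)(1-value(\bG))C^{P+1}$, and use the choice of $s$ to get weighted degree $1$ for New first-block vertices and $s$ for matched second-block vertices.
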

\begin{proof}
We only prove the bounds for vertices in $\hbL$. By symmetry, the same exact bounds also hold for vertices in $\hbR$. 
There are four types of vertices in $\hbL$ (based on the block they belong to and the value of their first index); we go over each one separately: 
\begin{itemize}
	\item $\hbLL{1}(i,x)$ for $i \in \Old$ and $x \in [C]^P$: each vertex here has either one or zero edge depending on whether $x$ is matched or not in the simple blueprint $\bG$. Since 
	$\sizes_L(1)=1$ and the fractional value of each edge is $1$ by~\Cref{eq:sqrt2-we1}, we have that these vertices satisfy the matching constraint. 
	\item $\hbLL{1}(i,x)$ for $i \in \New$ and $x \in [C]^P$: each vertex here has edges to all vertices in $\Free_R$ defined in the construction. Thus, for a vertex $\bv = \hbLL{1}(i,x)$ for $i \in \New$, 
	given the fractional values of these edges in~\Cref{eq:sqrt2-we2}, we have 
	\begin{align*}
		\sum_{\be \ni \bv} \hbw(\be) &= \card{\Free_R} \cdot \frac{s}{\eps \cdot C^{P+1}} \\
		&= (1-value(\bG)) \cdot \card{\Old} \cdot C^{P} \cdot \frac{s}{\eps \cdot C^{P+1}} \tag{since $(1-value(\bG))$ fraction of vertices in $\hbLL{2}(i,*,\ldots,*)$ are in $\Free_L$ for any $i \in \Old$} \\
		&= (1-value(\bG)) \cdot (1-\eps) \cdot \frac{s}{\eps} \tag{as $\card{\Old} = (1-\eps) \cdot C$} \\
		&= (1-value(\bG)) \cdot (1-\eps) \cdot \frac{1}{\eps} \cdot \frac{\eps}{(1-\eps) \cdot (1-value(\bG))} \tag{by the choice of $s$ in \Cref{eq:2-sqrt2-s}} \\
		&= 1. 
	\end{align*}
	Since $\sizes_L(1) = 1$, these vertices also satisfy the matching constraint. 
	
	\item $\hbLL{2}(i,x)$  for $i \in \Old$ and $x \in [C]^P$: each vertex here either has no edges, if it is not in $\Free_L$, or has edges to all vertices $\hbRR{1}(j,y)$ for $j \in \New$ and $y \in [C]^P$. In the first case it satisfies the matching
	constraint immediately. For the second case, for any such vertex $\bv$, we have, 
	\begin{align*}
		\sum_{\be \ni \bv} \hbw(\be) &= \card{\set{\hbRR{1}(j,y) \mid j \in \New, y \in [C]^P}} \times \frac{s}{\eps \cdot C^{P+1}} \tag{by~\Cref{eq:sqrt2-we2} for the fractional weight of the edges} \\
		&= \eps \cdot C \cdot C^{P} \times \frac{1}{\eps \cdot C^{P+1}} \cdot s \tag{as $\card{\New} = \eps \cdot C$} \\ 
		&= s. 
	\end{align*}
	Since $\sizes_L(2) = s$, we have that these vertices also satisfy the matching constraint. 
	
	\item $\hbLL{2}(i,x)$  for $i \in \New$ and $x \in [C]^P$: these vertices have no edges in the construction and thus satisfy the matching constraint immediately. 
\end{itemize}

\noindent
In summary, the blueprint $\hbG$ satisfies the matching constraint. \Qed{clm:2-sqrt2-matching-constraint}

\end{proof}

\begin{claim}\label{clm:2-sqrt2-ban-constraint}
	Blueprint $\hbG$ satisfies the ban constraint. 
\end{claim}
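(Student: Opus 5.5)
We split the verification according to which group an edge of $\hbG$ lies in, since the ban constraint of \Cref{def:proper-blueprint} depends only on the group index $\hat p$ of the edge and on the labels of its endpoints.

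\textbf{Type 1 edges.} Consider an edge $\hbe=(\hbLL{1}(i,x),\hbRR{1}(i,y))\in\hbEE{p+1}$ with $i\in\Old$, coming from $\be=(\bL(x),\bR(y))\in\bEE{p}$. The banned pairs are $\hbLL{1}(i,x_{<p}\conc z)$ and $\hbRR{1}(i,y_{<p}\conc z)$ for $z\in[C]^{P-p+1}$. Here the prefix of length $p$ in the new labeling is $(i)\conc x_{<p}$, so the first coordinate stays equal to $i\in\Old$ on both sides.

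We first note which edges of $\hbG$ touch vertices of the form $\hbLL{1}(i,\cdot)$ with $i\in\Old$. Type 2 edges only touch $\hbLL{1}(j,\cdot)$ and $\hbRR{1}(j,\cdot)$ for $j\in\New$, and vertices of the second blocks. So the only edges at $\hbLL{1}(i,x')$ with $i\in\Old$ are Type 1 edges, and such a vertex has an edge iff $\bL(x')$ has an edge in $\bG$. The same holds on the right side.

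The ban constraint for $\hbe$ therefore reduces exactly to the ban constraint of $\be$ in $\bG$, which holds because $\bG$ is proper.

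\textbf{Type 2 edges.} These edges lie in $\hbEE{1}$, so $p=1$ and the ban is global: for every $z\in[C]^{P+1}$, the pair $\hbLL{b}(z)$, $\hbRR{b'}(z)$ is banned, where $b,b'$ are the endpoint blocks.

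Take first an edge from $\Free_L\subseteq\hbLL{2}$ to $\hbRR{1}(j,y)$ with $j\in\New$. The banned pairs are $(\hbLL{2}(z),\hbRR{1}(z))$ for all $z=(i,x)$, and we split into cases on $i$.
\begin{itemize}
\item If $i\in\New$, then $\hbLL{2}(i,x)$ has no edges by construction.
\item If $i\in\Old$, then $\hbLL{2}(i,x)$ has an edge only if it lies in $\Free_L$, i.e., only if $\bR(x)$ is unmatched in $\bG$. Meanwhile $\hbRR{1}(i,x)$ with $i\in\Old$ has an edge only through Type 1 edges, i.e., only if $\bR(x)$ is matched in $\bG$. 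These two conditions cannot both hold.
\end{itemize}
This mirror-image choice in the definition of $\Free_L$ (using $\bR$ to define a left-side set) is exactly what makes this case work.

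The edges from $\Free_R$ to $\hbLL{1}(j,y)$ with $j\in\New$ are handled symmetrically. Their banned pairs are $(\hbLL{1}(z),\hbRR{2}(z))$, and the same case split on the first coordinate of $z$ applies.

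\textbf{Expected obstacle.} The main subtle point is the Type 1 case. There we must confirm that the shift of group index by one makes the ban range stay inside the slice with first coordinate $i$. We must also confirm that no Type 2 edge touches an $\Old$-indexed vertex of the first block. Both checks are bookkeeping on the coordinates.
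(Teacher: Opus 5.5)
Your proof is correct and follows essentially the same route as the paper. Type 1 bans stay inside the slice with first coordinate $i \in \Old$, where $\hbG$ is a copy of $\bG$, and Type 2 bans are handled by the mirror-image definition of $\Free_L$ and $\Free_R$. Two small differences: you make explicit that no Type 2 edge touches $\Old$-indexed vertices of the first blocks, which the paper uses only implicitly, and your case split on the first coordinate is equivalent to the paper's split on membership in $\Free_R$.
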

\begin{proof}
We consider two different cases, based on the type of the edge introducing the ban constraint. 
\begin{itemize}
	\item \textbf{Type 1 edges:} Consider any edge $\be \in \hbEE{2} \sqcup \ldots \sqcup \hbEE{P+1}$. Let $\be = (\bu,\bv)$ where $\bu = \hbLL{1}(i,x)$ and $\bv = \hbRR{1}(i,y)$ for some $i \in \Old$ and $x,y \in [C]^P$ (by construction, we know there
	are no edges between $\hbLL{1}(i,x)$ and $\hbRR{1}(j,y)$ when $i \neq j$). 
	
	Any ban introduced by $\be$ will be between vertices of $\hbLL{1}(i,*,\ldots,*)$ and $\hbRR{1}(i,*,\ldots,*)$ (since we are adding an edge in $\hbEE{p}$ for $p > 1$ and the 
	first index of the labels of $\bu$ and $\bv$ is $i$). But then the subgraph of between these two sets of vertices is exactly as in $\bG$ and since $\bG$ was a proper blueprint, this subgraph will also satisfy all its (internal) ban constraints. 
	
	\item \textbf{Type 2 edges:} Since we only have edges in $\hbEE{1}$ between $\hbLL{1}$ to $\hbRR{2}$ and $\hbLL{2}$ to $\hbRR{1}$, the bans introduced by them are also only between these two pairs of blocks. 
	Let us focus on the bans between $\hbLL{1}$ to $\hbRR{2}$; the other case also holds by symmetry. 
	
	By definition, any edge in $\hbEE{1}$ introduces the same bans constraints between $\hbLL{1}$ to $\hbRR{2}$: 
	\begin{itemize}
	\item for any $i \in [C]$ and any
	$x \in [C]^P$, at most one of $\hbLL{1}(i,x)$ or $\hbRR{2}(i,x)$ can have any incident edge. 
	\end{itemize}
	This holds in our construction because: 
	\begin{itemize}
		\item If $\hbRR{2}(i,x) \in \Free_R$, it means $i \in \Old$ and $\bL(x)$ has no edges in $\bG$ and since $\hbLL{1}(i,x)$ (for $i \in \Old$) has the same edges as $\bG$, then $\hbLL{1}(i,x)$ also has no edges in $\hbG$ as required. 
		\item If $\hbRR{2}(i,x) \notin \Free_R$, then this vertex has no edges in $\hbG$. 
	\end{itemize}
\end{itemize}
\noindent
In summary, the blueprint $\hbG$ satisfies the ban constraint also. \Qed{clm:2-sqrt2-ban-constraint}
\end{proof}

At this point, we established that $\hbG$ is indeed a proper blueprint by~\Cref{clm:2-sqrt2-matching-constraint} and~\Cref{clm:2-sqrt2-ban-constraint}. 
We thus only need to calculate the value of $\hbG$. The proof of the following claim is almost purely calculations (quite similar to the calculations done in the high level intuition earlier). 
As such, we postpone this proof to~\Cref{app:clm:2-sqrt2-value} to move on from this warm-up to the main proof. 

\begin{claim}[``Value of $\hbG$ increases'']\label{clm:2-sqrt2-value}
	\[
		value(\hbG) > value(\bG) + \delta^3. 
	\]
\end{claim}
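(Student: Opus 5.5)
The plan is to compute $value(\hbG)$ exactly from~\Cref{def:blueprint-value-approx}, using the explicit construction and the weights~\Cref{eq:sqrt2-we1,eq:sqrt2-we2}. Then I will show that the gain over $value(\bG)$ is a positive multiple of $\eps\cdot(value(\bG)^2-4\,value(\bG)+2)$ up to an $O(\eps^2)$ correction. Write $v := value(\bG)$. By the initial assumptions, $1/2 \le v < 2-\sqrt2-\delta < 0.6$.

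\textbf{Numerator.} The total fractional weight splits into two parts.
\begin{itemize}
\item \emph{Type~1 edges.} There are $\card{\Old}\cdot\card{\bE} = (1-\eps)C\cdot vC^P$ of them, each of weight $1$.
\item \emph{Type~2 edges.} Every Type~2 edge has exactly one endpoint among the \emph{New} vertices of $\hbLL{1}$ or of $\hbRR{1}$. \Cref{clm:2-sqrt2-matching-constraint} shows each such vertex carries total weight exactly $1$, and there are $\eps C\cdot C^P$ of them on each side. So the Type~2 weight is $2\eps C^{P+1}$.
\end{itemize}

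\textbf{Denominator.} It is $C^{P+1}(1+s+1+s)$. Hence
\[
value(\hbG)=\frac{2(1-\eps)v+4\eps}{2+2s}=\frac{(1-\eps)v+2\eps}{1+s}.
\]
Subtracting $v$ and substituting $s$ from~\Cref{eq:2-sqrt2-s} gives
\[
value(\hbG)-v=\frac{\eps(2-v)-sv}{1+s}=\frac{\eps\bigl[(v^2-4v+2)-\eps(2-v)(1-v)\bigr]}{(1-\eps)(1-v)(1+s)}.
\]

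\textbf{Main step: lower-bounding the bracket.} Factor $v^2-4v+2=(2-\sqrt2-v)(2+\sqrt2-v)$. Since $v<2-\sqrt2-\delta$ and $v<0.6$, this is at least $\delta\cdot 2.8\ge 2\delta$. For $v\in[1/2,0.6)$ we have $(2-v)(1-v)\le 3/4$. So choosing $\eps\le\delta$ makes the bracket at least $2\delta-\tfrac34\delta\ge\delta$.

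\textbf{Choosing $\eps$ and finishing.} We may take $\eps\in[\delta/2,\delta]$ with $1/\eps$ an integer, fixed before applying~\Cref{lem:increase-C-blueprint} to make $\eps C$ integral. The denominator factors are then controlled as follows.
\begin{itemize}
\item $(1-\eps)(1-v)\le 1/2$.
\item $s\le \eps/(0.9\cdot 0.4)<1/3$, so $1+s<4/3$.
\end{itemize}
Putting these together,
\[
value(\hbG)-v\ \ge\ \frac{(\delta/2)\cdot\delta}{(1/2)(4/3)}=\tfrac34\delta^2>\delta^3,
\]
where the last step uses $\delta<1/10$.

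The only delicate points are the bookkeeping of the Type~2 weight, which is handled by reusing the per-vertex sums from~\Cref{clm:2-sqrt2-matching-constraint}, and keeping the $\eps(2-v)(1-v)$ correction below the main term. The latter is why $\eps$ is tied to $\delta$ and why the assumption $v\ge 1/2$ is used.
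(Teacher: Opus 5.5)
Your proof is correct and follows essentially the same route as the paper. Both arguments compute $value(\hbG) = ((1-\eps)\,value(\bG) + 2\eps)/(1+s)$, substitute $s$, and reduce the claim to lower-bounding $value(\bG)^2 - 4\,value(\bG) + 2$ near its root $2-\sqrt{2}$ with $\eps = \Theta(\delta)$. Your version writes the gain as an exact identity and bounds each factor, which yields the slightly stronger $\tfrac{3}{4}\delta^2$ increase and handles the integrality of $1/\eps$ more carefully than the paper's choice $\eps = \sqrt{2}\delta/4$.
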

\noindent
\Cref{lem:2-sqrt2-amplification} now follows immediately from~\Cref{clm:2-sqrt2-matching-constraint,clm:2-sqrt2-ban-constraint,clm:2-sqrt2-value}.

\subsection{Our Main Blueprint Construction}\label{sec:main-blueprint}

We now prove~\Cref{thm:main-blueprint}, namely, show that 
\[
	\tvalue \geq \frac{5-\sqrt{10}}{3}. 
\]
Similar to the proof of~\Cref{thm:2-sqrt2-blueprint}, we prove this theorem also using an amplification lemma. 

\begin{lemma}\label{lem:main-amplification}
	For any $\delta \in (0,1)$ and any proper blueprint $\bG$ with $value(\bG) < \frac{5-\sqrt{10}}{3}-\delta$, there exists another proper blueprint $\hbG$ such that 
	\[
	value(\hbG) > value(\bG) + \delta^3.
	\] 
\end{lemma}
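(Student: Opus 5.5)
The plan follows the warm-up exactly. It has three parts: an amplification lemma, a gadget-adding construction, and a final value computation that reduces to a quadratic inequality. The deduction of \Cref{thm:main-blueprint} from \Cref{lem:main-amplification} is word-for-word the proof of \Cref{thm:2-sqrt2-blueprint}, so everything rests on the amplification step.

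The target threshold $v^\ast=(5-\sqrt{10})/3$ is the smaller root of $3v^2-10v+5=0$. In the warm-up the corresponding polynomial was $v^2-4v+2$. So the new gadget should be one whose ``added value beats $value(\bG)$'' condition reads $3\,value(\bG)^2-10\,value(\bG)+5>\Theta(\eps)$. This polynomial is strictly positive below $v^\ast$, so $\eps$ can then be chosen as a function of $\delta$. As initial assumptions, I will again invoke \Cref{prop:simplify-blueprint} and \Cref{lem:increase-C-blueprint} to take $\bG$ simple with $\eps C\in\mathbb{Z}$. I will also assume $value(\bG)\geq 1/2$ via \Cref{fig:2/3-blueprint}.

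\textbf{Construction.} The warm-up wastes the bottom block. Its Old vertices that mirror \emph{matched} vertices of $\bG$ are left unmatched, and its New part is unused. I will make the construction recursive in two ways.

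First, besides the type-1 copy of $\bG$ on the top Old vertices and the type-2 edges in $\hbEE{1}$ from top New vertices to $\Free_L,\Free_R$, I will place a mirrored (transposed) copy of $\bG$ on the bottom blocks. Concretely, for $i\in\Old$, the vertex $\hbLL{2}(i,x)$ is joined to $\hbRR{2}(i,y)$ in group $\hbEE{p+1}$ whenever $(\bL(y),\bR(x))\in\bEE{p}$. This matches exactly the non-free Old bottom vertices. The point is that a type-2 ban pairs $\hbLL{1}(i,x)$ with $\hbRR{2}(i,x)$, and the mirror guarantees that $\hbRR{2}(i,x)$ is matched in the copy only if $\bL(x)$ was matched. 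So I must re-examine which top vertex carries the edge: I expect to resolve this by using the bottom copy only on a subset of New-like indices, or by shifting the bottom copy's groups so that its bans stay inside the bottom blocks.

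Second, the New part of the bottom blocks is fed to a third pair of blocks of relative size $s'$, connected in a later group in the same mirrored fashion. I will then optimize the three parameters $\eps$, $s$ and $s'$ subject to the fractional matching equalities, as in \Cref{clm:2-sqrt2-matching-constraint}.

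\textbf{Verification and main obstacle.} The matching constraint is a routine count per vertex type, exactly as in \Cref{clm:2-sqrt2-matching-constraint}. The ban constraint splits into three cases: bans of edges inside a copy of $\bG$ or its transpose are handled by properness, since the transpose of a proper blueprint is proper; bans of group-$1$ edges between a top and a bottom block are handled by the mirror choice of free vertices; bans of the edges feeding the third level must be checked last. The main obstacle is this ban verification for the interaction between the bottom copy and the type-2 edges, because both now match vertices in the bottom block. I would first try to place all bottom-copy edges in groups $\geq 2$ indexed by a fresh coordinate, so that their bans are localized the way \Cref{fig:sqrt2-ban1} localizes the top ones. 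Only after that would I fix which bottom vertices are free.

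Finally, the value computation will parallel \Cref{clm:2-sqrt2-value}. The added mass is $2\eps + 2s + 2s'$, and the added matched mass includes the bottom copy's contribution $value(\bG)\cdot(\cdot)$. Expanding to first order in $\eps$ should yield the target inequality $3v^2-10v+5>\Theta(\eps)$, which gives the gain $>\delta^3$ for small $\eps$.
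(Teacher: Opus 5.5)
Your outer structure (an amplification step, assuming WLOG that $\bG$ is simple with $value(\bG)\geq 1/2$, and reducing to a quadratic inequality) matches the paper. The proposal nonetheless has a genuine gap: the gadget you describe cannot be made proper, and your suggested repairs do not address the actual obstruction. Because the type-2 edges lie in $\hbEE{1}$, the prefix in their ban constraint is empty. So for \emph{every} label $z\in[C]^{P+1}$, at most one of $\hbLL{1}(z),\hbRR{2}(z)$ may have an edge, and likewise for $\hbLL{2}(z),\hbRR{1}(z)$. Now $\hbLL{1}(z)$ is matched whenever $z_1\in\New$, or whenever $z_1\in\Old$ and $\bL(z_{>1})$ is matched in $\bG$. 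Hence no vertex of $\hbRR{2}$ outside $\Free_R$ can ever be matched (symmetrically for $\hbLL{2}$ and $\Free_L$), and these vertices are already used by the type-2 edges. In particular:
\begin{itemize}
\item your mirrored copy of $\bG$ matches $\hbRR{2}(i,x)$ for $i\in\Old$ with $\bL(x)$ matched, whose ban partner $\hbLL{1}(i,x)$ carries a type-1 edge;
\item the New bottom vertices you want to feed to a third level are banned against the matched New top vertices.
\end{itemize}
None of your repairs (restricting the copy to New-like indices, shifting its groups, or giving it a fresh coordinate) can help while the type-2 edges stay in $\hbEE{1}$. The violated bans are imposed by those group-$1$ edges, not by the copy's own edges.

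The missing idea is the reverse ordering used in the paper. Prepend $K$ fresh coordinates. Put the edges that additionally match the bottom block (block $K+1$ in the paper, the analogue of your $\hbLL{2},\hbRR{2}$) into the \emph{earlier} groups $\hbEE{1},\ldots,\hbEE{K}$. Move the edges between this block and the block carrying the copy of $\bG$ (block $K+2$) to $\hbEE{K+1}$. Their bans on block $K+1$ then only involve labels with prefix in $\Top_1\times\cdots\times\Top_K$. The vertices $\hbLL{K+1}(\Top_1,\ldots,\Top_{i-1},\Bot_i,*,\ldots,*)$ are matched in group $i$ to a new block $\hbRR{i}$ whose matched set $(*,\ldots,*,\Top_i,\ldots,\Top_K,*,\Match_R)$ is disjoint, from coordinate $i$ on, from the matched part of $\hbLL{K+1}$, as the group-$i$ ban requires.

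Second, the value computation is missing. You read $3v^2-10v+5$ (with $v:=value(\bG)$) off the target instead of deriving it, and a gadget with a fixed number of extra blocks should not be expected to produce it. In the paper's construction with $K$ cascade levels and parameter $t$, the $\eps$-dependent terms only hurt. There, $value(\hbG)>v$ requires $K(1-v)-\frac{K(K+1)}{2t}+t\,(v^2-4v+2)>0$. Optimizing over $t$ with $K$ fixed, this is achievable only if $K(3v^2-10v+5)+2(v^2-4v+2)>0$. This condition interpolates between the warm-up ($K=0$) and the target ($K\to\infty$). For $K=1$, the closest analogue of your three-level scheme, it gives only $v<3/5$. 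For every fixed $K$ it fails near $(5-\sqrt{10})/3$, because $v^2-4v+2<0$ there. This is why the paper takes $K=(1-v)t$ with $t$ growing as $\delta\to 0$.
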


\noindent
As before, there is nothing special about $\delta^3$-term in~\Cref{lem:main-amplification} and it is simply chosen to simplify the calculations. 
\Cref{thm:main-blueprint} follows from~\Cref{lem:main-amplification} exactly the same way~\Cref{thm:2-sqrt2-blueprint} followed from~\Cref{lem:2-sqrt2-amplification}.

\subsubsection{A high level intuition of the construction}\label{sec:main-high-level}

Unlike the previous subsection, we will not go over a step by step construction of the new blueprint 
as it will be redundant in light of the relatively short proof of the construction itself. But, it will be helpful to provide the following rationale for the 
seemingly arbitrary gadget we use in the proof of~\Cref{lem:main-amplification} to demystify its construction. 

It helps to think of~\Cref{lem:2-sqrt2-amplification} as a recursive way of creating a blueprint: 
we started with a blueprint $\bG$ and then added two different types of vertices to it. A new set $\bX$ that was directly appended to $\bG$, by increasing the parameter $P$ by one, 
and a new block of vertices $\bY$. The new blueprint then ensured that all of $\bX$ is matched to $(1-value(\bG))$ fraction of $\bY$ (essentially, by taking $\eps \rightarrow 0$ here). At this stage, we ``pack'' $\bG,\bX,\bY$ 
into a new blueprint and recurse on that and continue improving the value until it reaches $2-\sqrt{2}$. 

A second look at this approach shows that it is quite \emph{wasteful} actually: even though $\bX$ was fully matched, $\bY$ is barely matched (especially as we increase the value of the starting blueprint through the recursion). 
But, the moment we pack $\bX,\bY$ and $\bG$ together, we \emph{lose} this extra information: we simply achieve a blueprint that overall has a (relatively) high value although there are certain blocks inside it that are barely used (but we no longer have
access to them separately). 

Our goal in the new blueprint is to fix this wastefulness. We would like to ``work more'' on $\bY$ and add more vertices and blocks to it, to increase its overall value closer to $value(\bG)$, 
before packing it with $\bX$ and $\bG$ into a single blueprint. This step of increasing value of $\bY$ is reminiscent of increasing value of a blueprint on its own; the important difference however is that $\bY$ is now part of a 
bigger blueprint consisting of $\bG$ and $\bX$ and we cannot \emph{freely} work with it. 

Nevertheless, we can apply a similar strategy as increasing $value(\bG)$ in the proof of~\Cref{lem:2-sqrt2-amplification}, 
to also increase the value of $\bY$ here. The existence of $\bG$ and $\bX$ in the picture does not allow us to use~\Cref{lem:2-sqrt2-amplification} directly, nor even implement this step recursively. Instead, the new gadget we add
can be seen as applying multiple steps of the recursion of~\Cref{lem:2-sqrt2-amplification} to the blueprint of $\bY$ in one go, while ensuring the resulting blueprint can be plugged into $\bG$ and $\bX$. We simply optimize the number of times we apply this recursive strategy (this is the parameter $K$ to be defined in the proof) so the \emph{overall} value of the entire blueprint $\bG,\bX,\bY$, and not only $\bY$, is as large as possible\footnote{Technically speaking, we have \emph{not} fully optimized the choice of $K$ but rather picked a value that is close to optimal that makes the calculations quite easier; we further discuss this and other optimizations in~\Cref{sec:concluding}.}.

\subsubsection{Construction of the blueprint $\hbG$ in~\Cref{lem:main-amplification}} 

We now provide the proof of~\Cref{lem:main-amplification}. 

Throughout, we fix $\eps > 0$ to be a sufficiently small constant as a function of $\delta$ such that $1/\eps$ is an integer. 
We also pick two other integer parameters $K,t \geq 1$ such that 
\begin{align}
	\text{$t$ is sufficiently large} \quad , \quad \text{$(1-value(\bG)) \cdot t$ is an integer,} \quad \text{and} \quad K := (1-value(\bG)) \cdot t. \label{eq:main-tkeps}
\end{align}

As before, we assume that $value(\bG) \geq 1/2$ (otherwise start with the blueprint of~\Cref{fig:2/3-blueprint}), and that $\bG$ is a simple proper blueprint (by~\Cref{prop:simplify-blueprint}) and hence is $\bG = (\bL,\bR,\bE)$ with parameters $(P,C)$. 
We can also apply~\Cref{lem:increase-C-blueprint} to $\bG$ to ensure that its parameter $C$ is such that it is a multiple of $(K+t)! \cdot (1/\eps)$ (recall that $1/\eps$ is an integer). 

We go over setting up the parameters of the blueprint and then define its vertices and edges. 
\Cref{fig:main-blueprint} provides an illustration of this blueprint and various sets and their connections defined in the construction.

\paragraph{Parameters of $\hbG$.} Using the parameters in~\Cref{eq:main-tkeps}, define: 
\[
	\hP = P+K+1~,~\hC = C~,~\hB_L = \hB_R = K + 2~,~ \hsizes_L = \hsizes_R = (s_1,s_2,\ldots,s_{K+2}), 
\]
where 
\begin{alignat}{2}
	s_i &:= \frac{K+t-i+1}{t} \cdot \frac{1}{value(\bG)}, \quad && \forall i \in [K] \notag \\
	s_{K+1} &:=  K+t, \notag \\
	s_{K+2} &:= \frac{1}{\eps} \cdot (1-\eps) \cdot (1-value(\bG)) \cdot t.	 \label{eq:main-s}
\end{alignat}

\clearpage

\begin{figure}[t!]
	\centering
	\includegraphics[scale=0.5]{Figs/main-blueprint.png}
	\caption{An illustration of $\hbG$ in the proof of~\Cref{lem:main-amplification} and its various internal sets and their connections.}\label{fig:main-blueprint}
\end{figure}

\paragraph{Vertices of $\hbG$.} The choice of parameters fixes the vertices of $\hbG$ to also be
\[
	\hbL = \hbLL{1} \sqcup \ldots \sqcup \hbLL{K+2} \quad\text{and}\quad \hbR = \hbRR{1} \sqcup \ldots \sqcup \hbRR{K+2}, 
\]
where each of the blocks is now identified with $[C]^{\hP} = [C]^{P+K+1}$. We further define: 
\begin{alignat}{2}
	\Top_i &:= \set{1,2,\ldots,\frac{K+t-i}{K+t-i+1} \cdot C} \quad,\quad &&\Bot_i := \set{\frac{K+t-i}{K+t-i+1} \cdot C+1,\ldots,C} ~ \forall i \in [K] \notag \\
	\Old &:= \set{1,\ldots,(1-\eps) \cdot C} \quad,\quad &&\New := \set{(1-\eps) \cdot C + 1 , \ldots, C} \notag \\
	\Free_L &:= \set{x \in [C]^P : \bL(x)~\text{has no edge in}~\bG} ~,~ &&\Free_R := \set{x \in [C]^P : \bR(x)~\text{has no edge in}~\bG} \notag \\
	\Match_L &:= \set{x \in [C]^P : \bL(x)~\text{has an edge in}~\bG} ~,~ &&\Match_R := \set{x \in [C]^P : \bR(x)~\text{has an edge in}~\bG} \notag \\
	\label{eq:main-top-bot}
\end{alignat}
Given $C$ is a multiple of $(K+t)! \cdot (1/\eps)$, all sets above are well-defined. %

\paragraph{Edges of $\hbG$ and the fractional assignments $\hbw$.} We have $K+2$ different types of edges:  

\begin{itemize}[leftmargin=5pt]
	\item \textbf{Type $i$ for $i \in [K]$:} Add an edge to $\hbEE{i}$ between every pairs of vertices in
	\begin{align*}
		&\overbrace{\hbLL{i}(\underbrace{*,\ldots,*}_{i-1},\Top_{i},\ldots,\Top_K,*,\Match_L)}^{\bAL_i :=} ~\times~ \overbrace{\hbRR{K+1}(\Top_1,\ldots,\Top_{i-1},\Bot_i,\underbrace{*,\ldots,*}_{\hP-i})}^{\bBR_i :=}, \\
	\intertext{as well as between}
		&\overbrace{\hbLL{K+1}(\Top_1,\ldots,\Top_{i-1},\Bot_i,\underbrace{*,\ldots,*}_{\hP-i})}^{\bBL_i :=}  ~\times~ \overbrace{\hbRR{i}(\underbrace{*,\ldots,*}_{i-1},\Top_{i},\ldots,\Top_K,*,\Match_R)}^{\bAR_i :=}.
	\end{align*}
	For every $i \in [K]$ and any type $i$ edge $\be$, we set the fractional weight as 
	\begin{align}
		\hbw(\be) := \frac{s_i}{\card{\bBR_i}} = \frac{s_i}{\card{\bBL_i}}. \label{eq:main-type-i}
	\end{align}
	\item \textbf{Type $K+1$:} Add an edge to $\hbEE{K+1}$ between every pairs of vertices in
	\begin{align*}
		&\overbrace{\hbLL{K+1}(\Top_1,\Top_2,\ldots,\Top_K,\Old,\Free_R)}^{\bCL:=} ~\times~ \overbrace{\hbRR{K+2}(\underbrace{*,\ldots,*}_K,\New,\underbrace{*,\ldots,*}_P)}^{\bDR:=} \\ 
		\intertext{as well as between}
		&\overbrace{\hbLL{K+2}(\underbrace{*,\ldots,*}_K,\New,\underbrace{*,\ldots,*}_P)}^{\bDL:=}  ~\times~ \overbrace{\hbRR{K+1}(\Top_1,\Top_2,\ldots,\Top_K,\Old,\Free_L)}^{\bCR:=}. 
	\end{align*}
	We set the fractional weight of any type $K+1$ edge $\be$ as 
	\begin{align}
		\hbw(\be) := \frac{s_{K+1}}{\card{\bDR}} = \frac{s_{K+1}}{\card{\bDL}}.  \label{eq:main-type-k+1}
	\end{align}
	\item \textbf{Type $K+2$:} For every $w \in [C]^K$, $z \in \Old$, and $x,y \in [C]^P$, add an edge between
	\[
		\text{$\hbLL{K+2}(w,z,x)$ to $\hbRR{K+2}(w,z,y)$ in $\hbEE{p+K+1}$ iff $(\bL(x),\bR(y))$ is an edge in $\bEE{p}$ of $\bG$.}
	\]
	We set the fractional weight of any edge $\be$ in this step to be 
	\begin{align}
		\hbw(\be) := s_{K+2}. \label{eq:main-we-k+2}
	\end{align}
	We refer to vertices matched in this step as $\bOL \subseteq \hbLL{K+2}$ and $\bOR \subseteq \hbRR{K+2}$. 
\end{itemize}

Before moving on, we list some simple observation about the sets defined above and their sizes. 
We postpone the simple proofs of these observations to~\Cref{app:obs:main-disjoint} and~\Cref{app:obs:main-sizes}. 

\begin{observation}\label{obs:main-disjoint}
	The sets $\bAL_i,\bAR_i,\bBL_i,\bBR_i$ for $i \in [K]$, and $\bCL,\bCR,\bDL,\bDR,\bOL,\bOR$ are all pairwise disjoint. 
\end{observation}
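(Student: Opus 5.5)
The plan is a direct case analysis on block membership and on the coordinates of the labels. Every set in the list lives inside a single block $\hbLL{b}$ or $\hbRR{b}$, and sets on the left side can never meet sets on the right side. So two sets can intersect only if they are on the same side and in the same block. On the left, $\bAL_i \subseteq \hbLL{i}$ for $i \in [K]$, while $\bBL_i, \bCL \subseteq \hbLL{K+1}$ and $\bDL, \bOL \subseteq \hbLL{K+2}$. Hence the $\bAL_i$ are pairwise disjoint (different blocks) and disjoint from everything else. The remaining checks are the families $\set{\bBL_1,\ldots,\bBL_K,\bCL}$ in block $K+1$ and $\set{\bDL,\bOL}$ in block $K+2$. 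The right side is identical by symmetry, swapping $\Match_L,\Free_R$ with $\Match_R,\Free_L$.

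In block $K+1$, I will use the first $K$ coordinates and the fact that $\Top_j$ and $\Bot_j$ partition $[C]$ for every $j$. A vertex of $\bBL_i$ has its $i$-th coordinate in $\Bot_i$. A vertex of $\bBL_{i'}$ with $i' > i$ has its $i$-th coordinate in $\Top_i$, since $\Top_1,\ldots,\Top_{i'-1}$ are fixed on the first $i'-1$ coordinates. So $\bBL_i \cap \bBL_{i'} = \emptyset$. A vertex of $\bCL$ has every one of the first $K$ coordinates in the corresponding $\Top_j$. In particular its $i$-th coordinate is in $\Top_i$, so $\bCL \cap \bBL_i = \emptyset$ for all $i$.

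In block $K+2$, the $(K+1)$-th coordinate of a vertex of $\bDL$ lies in $\New$. By the definition of type $K+2$ edges, the matched vertices $\bOL$ all have the form $\hbLL{K+2}(w,z,x)$ with $z \in \Old$. Since $\Old \cap \New = \emptyset$, these two sets are disjoint.

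There is no real obstacle here. The only care needed is bookkeeping: reading off the coordinate positions in the definitions of $\bBL_i$ and $\bCL$ correctly. Note that the $(K+1)$-th coordinate is $*$ in $\bBL_i$ but $\Old$ in $\bCL$; this plays no role, because the first $K$ coordinates already separate the sets.
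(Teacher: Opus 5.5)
Your proof is correct and follows essentially the same route as the paper. Both separate the sets by side and block. Both then distinguish $\bBL_i$, $\bBL_{i'}$ and $\bCL$ using the $\Top_j/\Bot_j$ partition of the first $K$ coordinates, and distinguish $\bDL$ from $\bOL$ via $\New$ versus $\Old$ in coordinate $K+1$. Your handling of $\bBL_i$ versus $\bBL_{i'}$ (looking at the $i$-th coordinate of $\bBL_{i'}$ when $i' > i$) is spelled out slightly more carefully than the paper's one-line version.
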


\begin{observation}\label{obs:main-sizes}
	We have 
	\begin{alignat*}{2}
		\card{\bAL_i} &= \card{\bAR_i} =  \frac{t}{K+t-i+1} \cdot value(\bG) \cdot C^{\hP} \qquad && \forall i \in [K] \\
		\card{\bBL_i} &= \card{\bBR_i} = \frac{1}{K+t} \cdot C^{\hP} \qquad && \forall i \in [K] \\
		\card{\bCL} &= \card{\bCR} = \frac{t}{K+t} \cdot (1-\eps) \cdot (1-value(\bG)) \cdot C^{\hP} \qquad \\
		\card{\bDL} &= \card{\bDR} = \eps \cdot C^{\hP}.
	\end{alignat*}
\end{observation}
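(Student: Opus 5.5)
The plan is direct counting. Every block of $\hbG$ is identified with $[C]^{\hP} = [C]^{K} \times [C] \times [C]^P$, so each of the sets $\bAL_i,\bBR_i,\bCL,\bDR$ (and their mirror images) is a product set. It is the product of one subset of $[C]$ for each of the first $K+1$ coordinates, times one subset of $[C]^P$ for the last $P$ coordinates. Its cardinality is therefore $C^{\hP}$ times the product of the densities of these factors, and the whole proof reduces to computing those densities. Since $C$ is a multiple of $(K+t)!\cdot(1/\eps)$, every set in~\Cref{eq:main-top-bot} is an integer-length interval, so these densities are exact. They are $\card{\Top_j}/C = \frac{K+t-j}{K+t-j+1}$, $\card{\Bot_j}/C = \frac{1}{K+t-j+1}$, $\card{\Old}/C = 1-\eps$ and $\card{\New}/C=\eps$.

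The only fact needed about $\bG$ is the sizes of $\Match_L,\Match_R,\Free_L,\Free_R$. Since $\bG$ is simple and proper, its matching constraint with unit weights says that $\bE$ is an ordinary matching. Hence $\card{\Match_L} = \card{\Match_R} = \card{\bE} = value(\bG)\cdot C^P$ by~\Cref{def:blueprint-value-approx}, and so $\card{\Free_L} = \card{\Free_R} = (1-value(\bG))\cdot C^P$. This also gives the equalities between the L-versions and R-versions of each set, because the defining patterns are mirror images and only use these quantities.

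Then we compute each set in turn.
\begin{itemize}
\item For $\bAL_i$: coordinates $1,\ldots,i-1$ and $K+1$ are free (density $1$), coordinates $j=i,\ldots,K$ lie in $\Top_j$, and the last $P$ lie in $\Match_L$. The $\Top$ densities form a telescoping product,
\[
\prod_{j=i}^{K}\frac{K+t-j}{K+t-j+1} = \frac{t}{K+t-i+1}.
\]
Multiplying by $value(\bG)$ gives the claimed size.
\item For $\bBR_i$: the product $\prod_{j=1}^{i-1}\frac{K+t-j}{K+t-j+1}$ telescopes to $\frac{K+t-i+1}{K+t}$. Multiplying by the $\Bot_i$ density $\frac{1}{K+t-i+1}$ yields $\frac{1}{K+t}$.
\item For $\bCL$: the full $\Top$ product over $j\in[K]$ is $\frac{t}{K+t}$. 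Multiplying by $(1-\eps)$ for $\Old$ and by $(1-value(\bG))$ for $\Free_R$ gives the claimed size.
\item For $\bDL$ and $\bDR$: only the coordinate $K+1$ is constrained, to $\New$, giving $\eps\cdot C^{\hP}$.
\end{itemize}

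There is no real obstacle here. The only places needing care are getting the telescoping endpoints right, namely that the $j=K$ factor is $\frac{t}{t+1}$, and checking that each index coordinate in the definitions of $\bAL_i$ and $\bBR_i$ is matched with the correct $\Top_j$ or $\Bot_j$.
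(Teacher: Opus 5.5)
Your proposal is correct and follows essentially the same route as the paper. The paper also writes each set's size as $C^{\hP}$ times the product of per-coordinate densities, telescopes the $\Top$ factors, and uses $\card{\Match_L} = value(\bG)\cdot C^P$ and $\card{\Free_R} = (1-value(\bG))\cdot C^P$ via the definition of $value(\bG)$, handling the mirrored sets by symmetry.
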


\subsubsection{Proving $\hbG$ is proper and concluding the proof of~\Cref{lem:main-amplification}}

We now prove that $\hbG$ is a proper blueprint in the following two claims and then bound its value. 

\begin{claim}\label{clm:main-matching-constraint}
	Blueprint $\hbG$ satisfies the matching constraint.
\end{claim}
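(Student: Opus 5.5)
The plan is to mirror the proof of \Cref{clm:2-sqrt2-matching-constraint}. We go block by block, using \Cref{obs:main-disjoint} to see that every vertex of $\hbG$ lies in at most one of the sets $\bAL_i,\bAR_i,\bBL_i,\bBR_i,\bCL,\bCR,\bDL,\bDR,\bOL,\bOR$. Any vertex in none of them has no edges and satisfies the constraint trivially. Hence it suffices to check, for each set, that every vertex in it has total fractional weight equal to the relative size of its block. We treat the left side only; the right side follows by the symmetric definitions and the equal sizes in \Cref{obs:main-sizes}.

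For type $i \leq K$ edges, the bipartite connections are complete between $\bAL_i$ and $\bBR_i$.
\begin{itemize}
\item A vertex of $\bAL_i \subseteq \hbLL{i}$ receives weight $\card{\bBR_i}\cdot s_i/\card{\bBR_i}=s_i$, which matches $\hsizes_L(i)=s_i$.
\item A vertex of $\bBL_i\subseteq \hbLL{K+1}$ receives $\card{\bAR_i}\cdot s_i/\card{\bBL_i}$. By \Cref{obs:main-sizes} this equals
\[
\frac{t\, value(\bG)}{K+t-i+1}\cdot (K+t)\cdot \frac{K+t-i+1}{t\, value(\bG)} = K+t = s_{K+1},
\]
as required.
\item A vertex of $\bCL\subseteq\hbLL{K+1}$ receives $\card{\bDR}\cdot s_{K+1}/\card{\bDR}=s_{K+1}$.
\item A vertex of $\bDL\subseteq\hbLL{K+2}$ receives $\card{\bCR}\cdot s_{K+1}/\card{\bDL}$. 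This equals
\[
\frac{t(1-\eps)(1-value(\bG))}{K+t}\cdot\frac{K+t}{\eps} = s_{K+2}.
\]
\item A vertex of $\bOL$ lies in $\hbLL{K+2}(w,z,x)$ with $z\in\Old$ and $x\in\Match_L$. Since $\bG$ is a simple proper blueprint, it has exactly one type $K+2$ edge, of weight $s_{K+2}$.
\end{itemize}

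The main care point is the claim that these sets really partition the edge-incidences, i.e., that no vertex collects weight from two different types. In particular, $\bBL_i$ for different $i$ and $\bCL$ all sit in $\hbLL{K+1}$. We need them disjoint: the prefix patterns $(\Top_1,\ldots,\Top_{i-1},\Bot_i,\ldots)$ differ at the first coordinate where $\Bot$ appears, and $\bCL$ has $\Top$ everywhere. Similarly, $\bDL$ (new coordinate $\New$) and $\bOL$ (coordinate $\Old$) are disjoint within $\hbLL{K+2}$. This is exactly \Cref{obs:main-disjoint}, which we invoke.

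We also rely on the size formulas of \Cref{obs:main-sizes}, e.g., that $\card{\bAL_i}$ counts the factor $value(\bG)$ from $\Match_L$ and the product $\prod_{j=i}^{K}\frac{K+t-j}{K+t-j+1}=\frac{t}{K+t-i+1}$. With these in hand, the remaining checks are the displayed arithmetic identities, which cancel exactly by the choice of $s_i$ in \eqref{eq:main-s}.
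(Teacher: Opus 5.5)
Your proposal is correct and follows the paper's proof essentially step by step. Both check the vertices of $\bAL_i$, $\bBL_i$, $\bCL$, $\bDL$, $\bOL$ via the complete bipartite connections, the weights in \Cref{eq:main-type-i,eq:main-type-k+1,eq:main-we-k+2}, and \Cref{obs:main-disjoint,obs:main-sizes}, and then handle the right side by symmetry and the remaining vertices as edgeless.
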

\begin{proof}
We only prove the constraint for vertices in $\hbL$. By symmetry, the same exact bounds also hold for vertices in $\hbR$. 
We consider each type of vertices separately (throughout, we use~\Cref{obs:main-disjoint} implicitly in various places and not mention it each time): 
\paragraph{Vertices in $\bAL_i$ for $i \in [K]$:} Any vertex $\bv \in \bAL_i$ is only incident on type $i$ edges 
	and is connected to all vertices in $\bBR_i$. We have, 
	\begin{align*}
		\sum_{\be \ni \bv} \hbw(\be) &= \card{\bBR_i} \cdot  \frac{s_i}{\card{\bBR_i}} = s_i. \tag{by the value of fractional weights in~\Cref{eq:main-type-i}}
	\end{align*}
	Since we have $\bAL_i \subseteq \hbLL{i}$ and $\hsizes_L(i)= s_i$, vertex $\bv$ satisfies the matching constraint. 
	
\paragraph{Vertices in $\bBL_i$ for $i \in [K]$:} Any vertex $\bv \in \bBL_i$ is only incident on type $i$ edges and is connected to all vertices in $\bAR_i$. We have, 
	\begin{align*}
		\sum_{\be \ni \bv} \hbw(\be) &= \card{\bAR_i} \cdot  \frac{s_i}{\card{\bBR_i}} \tag{by the value of fractional weights in~\Cref{eq:main-type-i}} \\
		&= \frac{t}{K+t-i+1}  \cdot value(\bG) \cdot s_i \cdot (K+t) \tag{by~\Cref{obs:main-sizes}} \\
		&= s_{K+1} \tag{by the definition of $s_i$ and $s_{K+1}$ in~\Cref{eq:main-s}}.
	\end{align*}
	Since $\bBL_i \subseteq \hbLL{K+1}$ and $\hsizes_L(K+1) = s_{K+1}$, vertex $\bv$ satisfies the matching constraint. 
	
\paragraph{Vertices in $\bCL$:} Any vertex $\bv \in \bCL$ is only incident on type $K+1$ edges 
	and is connected to all vertices in $\bDR$. We have, 
	\begin{align*}
		\sum_{\be \ni \bv} \hbw(\be) &= \card{\bDR} \cdot \frac{s_{K+1}}{\card{\bDR}} = s_{K+1}. \tag{by the value of fractional weights in~\Cref{eq:main-type-k+1}}
	\end{align*}
	As before, since $\bCL \subseteq \hbLL{K+1}$, we are done. 
	
\paragraph{Vertices in $\bDL$:} Any vertex $\bv \in \bDL$ is only incident on type $K+1$ edges 
	and is connected to all vertices in $\bCR$. We have, 
	\begin{align*}
		\sum_{\be \ni \bv} \hbw(\be) &= \card{\bCR} \cdot \frac{s_{K+1}}{\card{\bDR}} \tag{by the value of fractional weights in~\Cref{eq:main-type-k+1}} \\
		&= \frac{t}{K+t} \cdot (1-\eps) \cdot (1-value(\bG)) \cdot \frac{s_{K+1}}{\eps} \tag{by~\Cref{obs:main-sizes}} \\
		&= s_{K+2}. \tag{by the definition of $s_{K+1}$ and $s_{K+2}$ in~\Cref{eq:main-s}}
	\end{align*}
	Since $\bDL \subseteq \hbLL{K+2}$ and $\hsizes_L(K+2) = s_{K+2}$, we are done in this case also. 
	
\paragraph{Vertices in $\bOL$:} Any vertex $\bv \in \bOL$ is only incident on type $K+2$ edges 
	and has the same edge as some vertex in $\bG$. Since we have rescaled the weight of this edge by $s_{K+2}$ and $\hsizes_L(K+2) = s_{K+2}$, we obtain that $\bv$ also satisfies the matching constraint. 

Any remaining vertex has no edges in the blueprint and thus satisfies the matching constraint by default.
In summary, the blueprint $\hbG$ satisfies the matching constraint. \Qed{clm:main-matching-constraint}
\end{proof}

\begin{claim}\label{clm:main-ban-constraint}
	Blueprint $\hbG$ satisfies the ban constraint.
\end{claim}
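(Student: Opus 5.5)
The plan is to check the ban constraint of \Cref{def:proper-blueprint} separately for each of the $K+2$ edge types. For each type I first write out exactly which pairs the edge bans, and then use the coordinate structure of the relevant blocks to show that at most one vertex of each pair has an edge. Throughout, a label in $[C]^{\hP}$ splits into three parts: the first $K$ coordinates (the $\Top/\Bot$ coordinates), coordinate $K+1$ (the $\Old/\New$ coordinate), and the last $P$ coordinates (the coordinates of $\bG$). A type $i$ edge lies in $\hbEE{i}$, a type $K+1$ edge lies in $\hbEE{K+1}$, and a type $K+2$ edge coming from $\bEE{p}$ lies in $\hbEE{p+K+1}$. So the shared suffix $z$ of banned pairs has length $\hP-i+1$, $P+1$, and $P-p+1$ in these three cases, respectively.

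\textbf{Type $i\in[K]$.} Consider an edge between $\hbLL{i}(x)$ with $x\in\bAL_i$ and $\hbRR{K+1}(y)$ with $y\in\bBR_i$. It bans $\hbLL{i}(x_{<i}\conc z)$ together with $\hbRR{K+1}(y_{<i}\conc z)$, where $y_{<i}\in\Top_1\times\cdots\times\Top_{i-1}$. In block $i$, the only left vertices with edges are those in $\bAL_i$. Suppose the left vertex is in $\bAL_i$. Then $z$ has its coordinates $i,\ldots,K$ in $\Top$ and its last $P$ coordinates in $\Match_L$. Hence the right vertex has all of its first $K$ coordinates in $\Top$, so it lies in no $\bBR_j$ (membership in $\bBR_j$ needs $\Bot_j$ at coordinate $j$). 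Its last $P$ coordinates are in $\Match_L$, so it is not in $\bCR$ either, since $\bCR$ needs $\Free_L$ there. These are the only right vertices of block $K+1$ with edges, so the right vertex has none. The edges between $\bBL_i$ and $\bAR_i$ follow by the symmetric argument with $\Match_R$ and $\bCL$.

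\textbf{Type $K+1$.} Consider an edge between $x\in\bCL$ and $y\in\bDR$. Here $x_{\le K}\in\Top^K$, and the edge bans $\hbLL{K+1}(x_{\le K}\conc z)$ together with $\hbRR{K+2}(y_{\le K}\conc z)$ for $z\in[C]^{P+1}$. A left vertex whose first $K$ coordinates are all in $\Top$ is in no $\bBL_j$. So if it has an edge, it is in $\bCL$, which forces $z_1\in\Old$ and $z_{>1}\in\Free_R$. The right vertices of block $K+2$ with edges are those of $\bDR$, which need $\New$ at coordinate $K+1$, and those of $\bOR$, which need their last $P$ coordinates in $\Match_R$. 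The right vertex fails both conditions, so it has no edge. The edges between $\bDL$ and $\bCR$ are symmetric.

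\textbf{Type $K+2$.} Consider an edge from $\bEE{p}$, placed between $\hbLL{K+2}(w,z,x)$ and $\hbRR{K+2}(w,z,y)$ with $z\in\Old$. It bans $\hbLL{K+2}(w,z,x_{<p}\conc u)$ together with $\hbRR{K+2}(w,z,y_{<p}\conc u)$. These vertices have $\Old$ at coordinate $K+1$, so they cannot lie in $\bDL$ or $\bDR$. Their only possible edges are therefore type $K+2$ edges, which copy those of $\bG$. The pair then has the form $\bL(x_{<p}\conc u)$, $\bR(y_{<p}\conc u)$, which is banned in $\bG$ by the edge $(\bL(x),\bR(y))\in\bEE{p}$, so properness of $\bG$ finishes this case.

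The main obstacle I expect is the bookkeeping rather than any real difficulty. One has to pin down which coordinates belong to the fixed prefix and which to the free suffix $z$ for each edge group, and then make sure that every other set of vertices with edges in the relevant block has been excluded. The key facts that do the work are that $\Match$ and $\Free$ are complementary, that $\Top_j$ and $\Bot_j$ are complementary, and that $\Old$ and $\New$ are complementary.
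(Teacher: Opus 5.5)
Your proposal is correct and follows essentially the same route as the paper's proof: a case analysis over the $K+2$ edge types, using that $\Top_j/\Bot_j$, $\Match/\Free$, and $\Old/\New$ are complementary to show that one endpoint of every banned pair has no edges, and reducing the type $K+2$ edges to the properness of $\bG$. Your version is slightly more explicit than the paper's in places, for example in noting that the $\Old$ coordinate rules out $\bDL,\bDR$ in the type $K+2$ case.
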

\begin{proof}
We prove this claim by going over all different types of edges in the blueprint.  

\paragraph{Type $i \in [K]$ edges:} Consider any type $i$ edge $\be$ between $\bAL_i$ and $\bBR_i$ (the other case of $\bBL_i$ and $\bAR_i$ is symmetric). This edge belongs to $\hbEE{i}$ and is between $\hbLL{i}(x)$ and $\hbRR{K+1}(y)$
	for some $x,y \in [C]^{\hP}$. By~\Cref{def:proper-blueprint}, the edge $\be$ introduces a ban between all vertices
	\[
		\hbLL{i}(x_1,\ldots,x_{i-1},z_1,\ldots,z_{\hP-i+1}) \quad \text{and} \quad \hbRR{K+1}(y_1,\ldots,y_{i-1},z_1,\ldots,z_{\hP-i+1}),
	\]
	for any choices of $(z_1,\ldots,z_{\hP-i+1})$ in $[C]^{\hP-i+1}$ -- the ban implies that between each pair, at most one can have any incident edges. 
	
	The only vertices $\hbRR{K+1}(y)$ that appear in these bans have 
	\[
		y_1 \in \Top_1~,~y_2 \in \Top_2~,~\ldots~,~ y_{i-1} \in \Top_{i-1}. 
	\]
	Fix any such vertex $\bu$. For $\bu$ to have any edge in $\hbG$ it should belong to one of these two sets: 
	\begin{itemize}
		\item for at least one index $j \in [i:K]$, the $j$-th index of $\bu$ is in $\Bot_j$ (so $\bu$ has a type $j$ edge); \emph{or}, 
		\item the last $P$ indices of $\bu$ belong to $\Free_L$ (so $\bu$ has a type $K+1$ edge);
	\end{itemize}
	(if $\bu$ is matched by a type $<i$ edge, it is not part of the banned pairs imposed by a type $i$ edge).  
	
	On the other hand, any vertex $\bv \in \hbLL{i}$ can only have an edge if it is in $\bAL_i$
	and thus:
	\begin{itemize}
		\item	for all indices $j \in [i:K]$, the $j$-th index of $\bv$ belongs to $\Top_j$; \emph{and}, 
		\item the last $P$ indices of $\bv$ belong to $\Match_L$.
	\end{itemize}
	Given the two conditions above are complementary, we have that among the banned pairs above, at most one endpoint of each pair can have any edge, ensuring the ban constraint of type $i$ edges is respected. 
	
\paragraph{Type $K+1$ edges:} Consider any type $K+1$ edge $\be$ between $\bCL$ and $\bDR$ (the other case of $\bDL$ and $\bCR$ is symmetric). This edge belongs $\hbEE{K+1}$ and is 
	between $\hbLL{K+1}(x)$ and $\hbRR{K+2}(y)$ for some $x,y \in [C]^{\hP}$. By~\Cref{def:proper-blueprint}, the edge $\be$ introduces a ban between all vertices 
	\[
		\hbLL{K+1}(x_1,\ldots,x_{K},z_1,\ldots,z_{\hP-K}) \quad \text{and} \quad \hbRR{K+2}(y_1,\ldots,y_{K},z_1,\ldots,z_{\hP-K}),
	\]
	for any choices of $(z_1,\ldots,z_{\hP-K}) \in [C]^{\hP-K}$. Specifically, the only vertices $\hbLL{K+1}(x)$ that appear in these bans have 
	\[
		x_1 \in \Top_1~,~x_2 \in \Top_2~,~\ldots~,~x_K \in \Top_K. 
	\]
	Fix any such vertex $\bu$. For $\bu$ to have any edge in $\hbG$, it can only be part of $\bCL$ (vertices in any $\bBL_i$ require $x_i$ to be in $\Bot_i$ instead), 
	and thus should satisfy that: 
	\begin{itemize}
		\item index $K+1$ of $\bu$ is in $\Old$; \emph{and},
		\item the last $P$ indices of $\bu$ are in $\Free_R$;
	\end{itemize}

	On the other hand, any vertex $\bv \in \hbRR{K+2}$ belongs to one of the following two sets: 
	\begin{itemize}
		\item index $K+1$ of $\bv$ is in $\New$ (so $\bv$ has a type $K+1$ edge); \emph{or}, 
		\item the last $P$ indices of $\bv$ are in $\Match_R$ (so $\bv$ has a type $K+2$ edge).
	\end{itemize}
	Given the two conditions above are complementary, we have that among the banned pairs above, at most one endpoint of each pair can have any edge, ensuring the ban constraint of type $K+1$ edges is respected. 
	
\paragraph{Type $K+2$ edges:} Any edge $\be$ of type $K+2$ is between two vertices $\hbLL{K+2}(x)$ and $\hbRR{K+2}(y)$ and belongs to $\hbEE{K+1+p}$ for some $p \in [P]$. 
	By construction, we know $(x_1,\ldots,x_{K+1}) = (y_1,\ldots,y_{K+1})$ and that $x_{K+1}=y_{K+1} \in \Old$. 

	Any ban introduced by this edge will be only between vertices of 
	\[
	\hbLL{K+2}(x_1,\ldots,x_{K+1},\underbrace{*,\ldots,*}_{P}) \quad \text{and} \quad \hbRR{K+2}(x_1,\ldots,x_{K+1},\underbrace{*,\ldots,*}_P).
	\] 
	But then the subgraph between these two sets is exactly as in $\bG$ and thus since $\bG$ was a proper blueprint, this subgraph will also satisfy all its (internal) ban constraints. 

In summary, the blueprint $\hbG$ satisfies the ban constraint also. \Qed{clm:main-ban-constraint}
\end{proof}

We have established that $\hbG$ is indeed a proper blueprint. It thus remains to bound its value. 

\begin{claim}[``Value of $\hbG$ increases'']\label{clm:main-value}
	\[
		value(\hbG) > value(\bG) + \delta^3. 
	\]
\end{claim}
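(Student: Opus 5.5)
Since value is symmetric in the two sides, I will compute it on the left side only. By the matching constraint (\Cref{clm:main-matching-constraint}) every matched vertex of block $b$ receives total weight exactly $\hsizes_L(b)$. Writing $v := value(\bG)$ and $N := C^{\hP}$, this gives
\[
value(\hbG) = \frac{\sum_{b} s_b \cdot \card{\text{matched vertices of }\hbLL{b}}}{N \cdot \sum_{b} s_b}.
\]
Using \Cref{obs:main-sizes,obs:main-disjoint}, I will tabulate the numerator block by block.
\begin{itemize}
\item Each block $i \in [K]$ has $\card{\bAL_i}$ matched vertices, and $s_i\card{\bAL_i} = N$ exactly; this is the reason for the choice of $s_i$. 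These blocks contribute $KN$ in total.
\item Block $K+1$ has matched set $\bigsqcup_i \bBL_i \sqcup \bCL$, contributing $(K+t)\bigl(\tfrac{K}{K+t} + \tfrac{t}{K+t}(1-\eps)(1-v)\bigr)N = \bigl(K + t(1-\eps)(1-v)\bigr)N$.
\item Block $K+2$ has matched set $\bDL \sqcup \bOL$ of size $(\eps + (1-\eps)v)N$, since $\bOL$ is a $v$ fraction of the $\Old$ part. It contributes $s_{K+2}(\eps + (1-\eps)v)N$.
\end{itemize}
The denominator is $N\bigl(\frac{1}{tv}\sum_{j=t+1}^{K+t} j + (K+t) + s_{K+2}\bigr)$, and $\sum_{j=t+1}^{K+t} j = K(K+2t+1)/2$.

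Next I will study $\Delta := \text{numerator}/N - v\cdot\text{denominator}/N$, because $value(\hbG) - v = \Delta/(\text{denominator}/N)$. The dominant $1/\eps$ terms from block $K+2$ nearly cancel: $s_{K+2}(\eps+(1-\eps)v) - v s_{K+2} = s_{K+2}\,\eps(1-v) = (1-\eps)(1-v)^2 t$. Hence
\[
\Delta = 2K + t(1-\eps)(1-v) + (1-\eps)(1-v)^2t - vt - \frac{K(K+2t+1)}{2t}.
\]
Substituting $K = (1-v)t$ and $u := 1-v$ gives
\[
\Delta/t = \tfrac32 u^2 + 2u - 1 - O(\eps) - \frac{u}{2t}.
\]
The polynomial $\tfrac32u^2+2u-1$ has positive root $u_0 = (\sqrt{10}-2)/3$, i.e.\ $v = (5-\sqrt{10})/3$. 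It is increasing in $u$, so for $v < \frac{5-\sqrt{10}}{3} - \delta$ we get $\tfrac32u^2+2u-1 \ge 2\delta$, using derivative $3u+2 \ge 2$. I will take $t$ large enough that $u/(2t) < \delta/4$, and $\eps \le \delta/40$ (with $1/\eps$ integer) so that the $O(\eps)$ terms, which are at most $2\eps$, are below $\delta/4$. This yields $\Delta \ge \delta t$. The requirement that $K=(1-v)t$ be an integer is possible since $v$ is rational.

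Finally I will bound the denominator. The terms are $\frac{K(K+2t+1)}{2tv} \le 4t$ (using $v\ge 1/2$), $K+t \le 2t$, and $s_{K+2} \le t/\eps$. So the denominator$/N$ is at most $2t/\eps$ once $\eps$ is small, and therefore $value(\hbG) - v \ge \delta t \cdot \eps/(2t) = \eps\delta/2$. With $\eps = \Theta(\delta)$, say $\eps \approx \delta/40$ and $\delta<1$, this is at least $\delta^2/80$. That exceeds $\delta^3$ only once $\delta < 1/80$, so for larger $\delta$ I will first replace $\delta$ by $\min(\delta,1/80)$; the hypothesis $v < \frac{5-\sqrt{10}}{3}-\delta$ still holds for the smaller value. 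I expect the main obstacle to be purely bookkeeping. There are two points to watch: confirming the exact cancellation of the $1/\eps$ terms, and choosing $\eps$ small enough relative to $\delta$ to absorb the $O(\eps t)$ errors yet large enough that the $\eps/t$ scaling of the gain still beats $\delta^3$. Choosing $\eps=\Theta(\delta)$ resolves both.
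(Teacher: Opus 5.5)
Your proposal is correct and follows essentially the same argument as the paper. Both compute the same numerator and denominator for $value(\hbG)$: you sum $s_b$ over matched left vertices block by block, while the paper sums weights edge type by edge type. Both then cancel the $1/\eps$ terms, leaving $\tfrac12(3v^2-10v+5)$ up to $O(\eps)+O(1/t)$ with $v=value(\bG)$, and divide by a denominator of order $t/\eps$. Your choice $\eps=\Theta(\delta)$ is in fact the right one. The paper's choice $\eps=2\eta\delta^3/\sqrt{10}$ only works because its $\delta^3\eta/\eps$ term becomes $\delta^4\eta/\eps$ mid-proof; without that slip it would guarantee a gain of only order $\delta^4$.

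Two small slips in your write-up:
\begin{enumerate}
\item The displayed $\Delta$ should contain $-v(K+t)$ rather than $-vt$. Your subsequent formula for $\Delta/t$ is already the correct one.
\item Shrinking $\delta$ to $\min(\delta,1/80)$ only yields a gain of $(1/80)^3$, not $\delta^3$. For larger $\delta$, take $\eps$ to be a larger constant fraction of $\delta$ instead. This still gives a gain of order $\delta^2$, which exceeds $\delta^3$ throughout the range $\delta<(7-2\sqrt{10})/6$ forced by $v\geq 1/2$. The paper simply assumes $\delta$ is small here.
\end{enumerate}
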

\begin{proof}
We have, 
\begin{align*}
	value(\hbG) &= \frac{2 \cdot \sum_{\be \in \hbE} \hbw(\be)}{C^{\hP} \cdot \paren{\sum_{b \in [B_L]} \hsizes_L(b)+\sum_{b' \in [B_R]}\hsizes_R(b')}} \tag{by the definition of $value(\cdot)$ in~\Cref{def:blueprint-value-approx}}\\
	&= \frac{\sum_{\be \in \hbE} \hbw(\be)}{C^{\hP} \cdot \paren{\sum_{b \in [K+2]} s_{b}}} \tag{as $B_L = B_R$ and $\hsizes_L(b) = \hsizes_R(b)$ for all $b$} \\
	&=  \frac{\sum_{i=1}^{K+2} \sum_{\be:~type~i} \hbw(\be)}{C^{\hP} \cdot \paren{\sum_{b \in [K+2]} s_{b}}} \tag{by partitioning across the types of edges} \\
	&= \frac{\paren{\sum_{i=1}^{K} \card{\bAL_i} \cdot s_i + \card{\bBL_i} \cdot s_{K+1}} + \card{\bCL} \cdot s_{K+1} + \card{\bDL} \cdot s_{K+2} + C^{K+1} \cdot (1-\eps) \cdot \card{\bE} \cdot s_{K+2} }{C^{\hP} \cdot \paren{\sum_{b \in [K+2]} s_{b}}} 
	\tag{by the matching constraint for the type $<K+2$ edges and using~\Cref{eq:main-we-k+2} for type $K+2$ edges} \\
	&=  \frac{\paren{\sum_{i=1}^{K} 2} + 2t \cdot (1-\eps) \cdot (1-value(\bG)) + (1-\eps)^2 \cdot value(\bG) \cdot \frac{1}{\eps} \cdot (1-value(\bG)) \cdot t }{\paren{\sum_{i=1}^{K} \frac{K+t-i+1}{t} \cdot \frac{1}{value(\bG)}} + K+t + \frac{1}{\eps} \cdot (1-\eps) \cdot (1-value(\bG)) \cdot t } \tag{by~\Cref{eq:main-s} and~\Cref{obs:main-sizes}} \\
	&= \frac{2K + 2t \cdot (1-\eps) \cdot (1-value(\bG)) + (1-\eps)^2 \cdot value(\bG) \cdot \frac{1}{\eps} \cdot (1-value(\bG)) \cdot t }{\paren{\frac{K}{value(\bG)} + \sum_{i=1}^{K} \frac{K-i+1}{t} \cdot \frac{1}{value(\bG)}} + K+t + \frac{1}{\eps} \cdot (1-\eps) \cdot (1-value(\bG)) \cdot t } \tag{by simplifying the terms} \\
	&=  \frac{2K + 2t \cdot (1-\eps) \cdot (1-value(\bG)) + (1-\eps)^2 \cdot value(\bG) \cdot \frac{1}{\eps} \cdot (1-value(\bG)) \cdot t }{{\frac{K}{value(\bG)} + \frac{K \cdot (K+1)}{2t} \cdot \frac{1}{value(\bG)}} + K+t + \frac{1}{\eps} \cdot (1-\eps) \cdot (1-value(\bG)) \cdot t } \tag{by calculating the sum in the denominator}. 
\end{align*}

We can plug in the value of $K = (1-value(\bG)) \cdot t$ in~\Cref{eq:main-tkeps} and have that $value(\hbG)$ is
\begin{align*}
	&\frac{ 2 \cdot (1-value(\bG)) \cdot t + 2t \cdot (1-\eps) \cdot (1-value(\bG)) + (1-\eps)^2 \cdot value(\bG) \cdot \frac{1}{\eps} \cdot (1-value(\bG)) \cdot t }{{\frac{(1-value(\bG)) \cdot t}{value(\bG)} + \frac{(1-value(\bG)) \cdot ((1-value(\bG)) \cdot t+1)}{2} \cdot \frac{1}{value(\bG)}} + (2-value(\bG)) \cdot t + \frac{1}{\eps} \cdot (1-\eps) \cdot (1-value(\bG)) \cdot t }. 
\end{align*}
This in turn can be simplified to, after a division by $t$, 
\begin{align*}
		&\frac{ 2 \cdot (1-value(\bG))+ 2 \cdot (1-\eps) \cdot (1-value(\bG)) + (1-\eps)^2 \cdot value(\bG) \cdot \frac{1}{\eps} \cdot (1-value(\bG))}{{\frac{(1-value(\bG))}{value(\bG)} + \frac{(1-value(\bG))^2}{2 \cdot value(\bG)} + \frac{1}{2t \cdot value(\bG)}} + (2-value(\bG))+ \frac{1}{\eps} \cdot (1-\eps) \cdot (1-value(\bG))},
\end{align*}
where the only term that still has a $t$-factor is in the denominator (coming from $+1$ part of the $K \cdot (K+1)/2$ term earlier). 

Since our goal is to establish $value(\hbG) > value(\bG) + \delta^3$, we would now require that 
\begin{align*}
	&2 \cdot (1-value(\bG))+ 2 \cdot (1-\eps) \cdot (1-value(\bG)) + (1-\eps)^2 \cdot value(\bG) \cdot \frac{1}{\eps} \cdot (1-value(\bG)) \\
	& \hspace{0.5\linewidth} >  \\
	 &{{(1-value(\bG))} + \frac{(1-value(\bG))^2}{2} + \frac{1}{2t} + value(\bG) \cdot (2-value(\bG))+ value(\bG) \cdot \frac{1}{\eps} \cdot (1-\eps) \cdot (1-value(\bG))} \\
	 &\hspace{0.5\linewidth} + \frac{\delta^3}{\eps} \cdot \eta,  
\end{align*}
for some absolute constant $\eta > 0$ since the denominator of the earlier term is upper bounded $\eta/\eps$ given $1/2 \leq value(\bG) \leq 2/3$. 
After simplifying, the above inequality becomes equivalent to 
\begin{align*}
	&2 \cdot (1-value(\bG))+ 2 \cdot (1-\eps) \cdot (1-value(\bG)) - value(\bG) \cdot (1-value(\bG)) - \eps \cdot (1-value(\bG))\\
	& \hspace{0.5\linewidth} >  \\
	 &{(1-value(\bG))} + \frac{(1-value(\bG))^2}{2} + \frac{1}{2t} + value(\bG) \cdot (2-value(\bG)) +  \frac{\delta^4}{\eps} \cdot \eta.
\end{align*}
At this point, all existing terms with the exception of $3\eps \cdot (1-value(\bG))$ in the LHS and $\frac{1}{2t}$ in the RHS are bounded away from $0$ and $1$; we can thus use a loose approximation 
for those terms also (again using the fact $1/2 \leq value(\bG) \leq 2/3$) and focus on obtaining 
\begin{align*}
	&3 \cdot (1-value(\bG)) - value(\bG) \cdot (3-2 \cdot value(\bG)) - \frac{(1-value(\bG))^2}{2} > \frac{\delta^4}{\eps} \cdot \eta + 3\eps + \frac{1}{2t}. 
\end{align*}
This is equivalent to 
\begin{align*}
	3\cdot value(\bG)^2-10\cdot value(\bG) + 5 > 2\frac{\delta^4}{\eps} \cdot \eta + 6\eps + \frac{1}{t}. 
\end{align*}

Considering the LHS as a quadratic polynomial $P(x)$ in $x$ by defining $x:= value(\bG)$, we get $P(x) = 3x^2-10x+5$ whose roots are $(5/3 - \sqrt{10}/3 , 5/3+\sqrt{10}/3)$ and 
the polynomial is non-negative for $x \leq 5/3-\sqrt{10}/3$. Moreover, for $x \leq 5/3-\sqrt{10}/3-\delta$, we have that $P(x) \geq 2\sqrt{10}\delta$. 
Thus, whenever $value(\bG) \leq (5-\sqrt{10})/3-\delta$, as in the statement of~\Cref{lem:main-amplification}, we have 
\[
		3\cdot value(\bG)^2-10\cdot value(\bG) + 5 \geq 2\sqrt{10}\delta,
\]
and thus we only need to have 
\[
		2\sqrt{10}\delta \geq 2\frac{\delta^4}{\eps} \cdot \eta + 6\eps + \frac{1}{t}, 
\]
to conclude the proof. This can be achieved by setting $t \geq 1/6\eps$ and $\eps = 2\eta\delta^3/\sqrt{10}$ and picking $\delta$ to be sufficiently small. 

Thus, 
we proved that if $value(\bG) \leq (5-\sqrt{10})/3-\delta$, then $value(\hbG) > value(\bG)+\delta^3$. \Qed{clm:main-value}
\end{proof}

\Cref{lem:main-amplification} now follows from~\Cref{clm:main-matching-constraint,clm:main-ban-constraint,clm:main-value}.

\clearpage


\section{Concluding Remarks}\label{sec:concluding}

We have proved the following lower bound for semi-streaming matching, formalizing~\Cref{res:main}. 

\begin{theorem}\label{thm:main-streaming}
	Any single-pass streaming algorithm that given every $n$-vertex bipartite graph $G$ can find an $\alpha$-approximate maximum matching of $G$ with high constant
	probability requires $n^{1+\Omega(1/\log\log{n})}$ space for any constant $\alpha$ satisfying
	\[
		\alpha > \frac{8-2\sqrt{10}}{3} \sim 0.55848155\cdots. 
	\]
\end{theorem}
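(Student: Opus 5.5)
The plan is to combine \Cref{thm:main-blueprint} with \Cref{thm:framework}, tracking the space bound more carefully through \Cref{lem:framework}. Fix a constant $\alpha > (8-2\sqrt{10})/3$ and set $\gamma := \alpha - (8-2\sqrt{10})/3 > 0$. By \Cref{thm:main-blueprint}, $\tapprox \leq (8-2\sqrt{10})/3$. Since $\tapprox$ is an infimum over finite proper blueprints, and $\alpha(\cdot)$ is a continuous decreasing function of the value, there is a finite proper blueprint $\bG$ with $\alpha(\bG) < (8-2\sqrt{10})/3 + \gamma/2$. Concretely, pick $\bG$ with value at least $\tvalue - \theta$ for a small enough $\theta$.

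Next I apply \Cref{prop:simplify-blueprint} to turn $\bG$ into a simple proper blueprint with the same value, and hence the same approximation ratio, with some constant parameters $(P,C)$. Then I invoke \Cref{lem:framework} with $\delta := \min\{\gamma/2, \delta_0\}$, where $\delta_0$ is the target failure probability (``high constant probability'' means success probability at least $1-\delta_0$). This gives a distribution on $n$-vertex graphs for the $(P+1)$-player game such that any protocol achieving an $(\alpha(\bG)+\delta) \leq \alpha$ approximation with probability more than $1-\delta$ must have large communication. \Cref{prop:cc-stream} then transfers this to streaming algorithms, since a streaming algorithm with memory $s$ yields a protocol with per-player communication $s$.

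The main point to handle carefully is the space bound: the statement claims $n^{1+\Omega(1/\log\log n)}$, not merely ``not semi-streaming''. I will read this off the proof of \Cref{thm:framework}. There, the threshold is $s = \Theta_{\delta,P}(n\cdot t)$ with $t = \nrs^{\Omega(1/\log\log \nrs)}$ from \Cref{prop:base-c-rs}. Since $n = \Theta(\nrs^P)$ with $P$ constant, we have $t = n^{\Omega(1/\log\log n)}$, where the constant hidden in $\Omega$ depends on $P$ and therefore on $\alpha$, which is allowed. So any protocol with communication below $c\, n\, t = n^{1+\Omega(1/\log\log n)}$ fails.

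Two remaining details need routine care. First, the ERS graphs of \Cref{prop:base-c-rs} exist only for infinitely many $n$; for the remaining $n$ I pad the instance with isolated vertices, which changes $n$ by at most a constant factor. Second, randomized algorithms are covered by Yao's principle, as noted in \Cref{sec:cc-game}.
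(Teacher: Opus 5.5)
Your route is the paper's: the paper obtains this theorem in one line from \Cref{thm:framework} and \Cref{thm:main-blueprint}. Your unpacking of that line is faithful: a finite proper blueprint with $\alpha(\bG)$ within $\gamma/2$ of the target, simplified via \Cref{prop:simplify-blueprint}, then \Cref{lem:framework} and \Cref{prop:cc-stream}, with the space bound read off the explicit threshold $s=\Theta_{\delta,P}(n\cdot t)$ and $n=\Theta(\nrs^{P})$.

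\textbf{The choice of $\delta$ fails as written.} You set $\delta:=\min\{\gamma/2,\delta_0\}$. \Cref{lem:framework} says every low-communication protocol succeeds with probability at most $1-\delta$. This contradicts an algorithm that succeeds with probability at least $1-\delta_0$ only if $\delta>\delta_0$, but your min forces $\delta\le\delta_0$, so no contradiction is obtained. When $\delta_0>\gamma/2$ nothing is ruled out at all, and when $\delta_0\le\gamma/2$ you only reach the boundary case $\delta=\delta_0$, which is not a contradiction. Decreasing $\delta$ only weakens the probability conclusion; the approximation slack is what caps $\delta$ from above. The correct bookkeeping is to pick $\delta$ with $\delta_0<\delta\le\gamma/2$. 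This requires reading ``high constant probability'' as failure probability $\delta_0<\gamma/2$. In other words, like \Cref{thm:framework}, the theorem rules out only algorithms whose failure probability is below a constant depending on $\alpha$. Ruling out, e.g., success probability $2/3$ for every such $\alpha$ would need an additional argument that neither you nor the paper gives.

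\textbf{The padding factor is misstated.} Padding with isolated vertices does not change $n$ by only a constant factor. The admissible instance sizes are $4\nrs^{P}$ with $\nrs=d^{2d}$, and $d$ ranges over an arithmetic progression with a constant difference $k_0$. So consecutive sizes differ by a factor $d^{\Theta(k_0P)}=\poly\!\log{n}$. This is harmless, since $n^{\Omega(1/\log\log n)}=2^{\Omega(\log n/\log\log n)}$ absorbs any polylogarithmic loss. That, rather than a constant-factor claim, is the justification you should give.
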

\Cref{thm:main-streaming} follows immediately from~\Cref{thm:framework} (our framework) and~\Cref{thm:main-blueprint} (our main blueprint construction).

The main question that is left open by our work is the one we started with: is 0.5-approximation the best possible for the semi-streaming matching problem? Our~\Cref{thm:main-streaming} makes progress on this 
question by further narrowing the gap on what the \emph{right} approximation ratio can be. However as we stated earlier, we find the main contribution of our work, not in the particular numerical improvement of~\Cref{thm:main-streaming} over~\cite{Kapralov21}, but rather the lower bound framework we put forward in establishing this result. We conclude the paper by discussing several remarks about this framework. 

\paragraph{A plan of attack.}  Our framework presents a concrete plan of attack for settling the semi-streaming complexity of the matching problem: 
\begin{quote}
	\emph{Is there a sequence of proper blueprints with approximation ratio approaching $1/2$?} %
\end{quote}
At present, we have no evidence toward a positive or negative answer to this question. A positive answer, combined with our~\Cref{thm:framework}, immediately settles the longstanding open question
of the semi-streaming matching problem. But, even a negative answer will be quite interesting and can potentially guide us toward new approaches for proving a lower bound for the problem 
or alternatively designing better semi-streaming algorithms. 

Given this question bypasses any streaming arguments, we find it potentially (much) easier to address than directly considering the semi-streaming matching problem. 
Moreover, our results in this paper, at the very least, provide a proof of concept that the blueprints can already lead to stronger lower bounds than prior work~\cite{Kapralov21} with the added benefit 
of employing considerably less technical arguments and thus they merit further study.  

\paragraph{Automating blueprint construction.} At its core, determining the best value/approximation ratio of a (proper) blueprint is simply a \emph{constant-size} optimization problem. 
Thus, in principle, one can ``just'' run a computer program for some large values of parameters (say, only $P,C$ for simple blueprints) to estimate this quantity. Unfortunately, for this we require $P$ to be sufficiently large
to get close to a (potential) $0.5$-approximation lower bound: our~\Cref{lem:framework} proves a lower bound on the $(P+1)$-communication game of matching while
it is known that for any \emph{fixed} $P$, there is a better-than-$0.5$-approximation protocol for this problem~\cite{LeeS17,BehnezhadK22}. Given the exponential dependence of the problem space on $P$, so far, we have not been
able to generate blueprints automatically this way (although, early on we did manage to use computer programs to find small blueprints like the ones in~\Cref{fig:blueprint3} and even its more optimized version with \emph{value} approaching $4-2\sqrt{3} \sim 0.535$).
 
In general, it will be quite interesting to develop tools for generating blueprints automatically and to see what type of approximation ratios one can rule out this way. 
In light of the above discussion, this is perhaps more feasible to apply to \emph{fixed} values of $P$ to prove lower bounds for the $(P+1)$-player one-way communication complexity of the matching 
problem studied in several works~\cite{GoelKK12,AssadiB19,AssadiB21b,BehnezhadK22,DerakhshanGR25}. 

\emph{Remark.} As we showed in~\Cref{sec:equiv-blueprints}, 
simple blueprints can capture the power of all blueprints, with the caveat that the reduction increases the value of $P$ by one. This was inconsequential for us but 
can be problematic if the goal is to prove a lower bound on communication complexity for a \emph{fixed} number of players mentioned above. Nevertheless, the increment in $P$ in~\Cref{prop:simplify-blueprint} 
is more of a technicality and in fact, in the resulting instances of~\Cref{thm:framework}, the first player receives no edges (as $\bEE{1} = \emptyset$ after the reduction) and thus can be eliminated. 
Therefore, one can still use not-simple blueprints of parameter $P$ to prove a lower bound for the $(P+1)$- and not only $(P+2)$-player communication complexity of matchings. 

\paragraph{Optimizing the proof of~\Cref{thm:main-blueprint}.} The constant we get in~\Cref{thm:main-blueprint} (and subsequently in~\Cref{thm:main-streaming}) is in no way sacrosanct; in fact, it is not even fully optimized for our particular approach in
proving~\Cref{thm:main-blueprint} and its~\Cref{lem:main-amplification}. Specifically, as we alluded to in~\Cref{sec:main-high-level}, the gadget we used in proving~\Cref{lem:main-amplification} can be seen as using the approach of our 
weaker blueprint in~\Cref{lem:2-sqrt2-amplification} on the blueprint $\bY$ mentioned there. However, we can also recursively use the approach of~\Cref{lem:main-amplification} itself to increase the value of the blueprint $\bY$ instead. 
While we could numerically estimate the approximation ratio of the blueprint constructed this way to be $\sim 0.548$, calculating the exact value analytically is quite cumbersome. 
And, since it is clear to us this particular way of creating blueprints is not able to get below, say, a
$\sim 0.540$ bound, we have opted to stick with the simpler version of~\Cref{lem:main-amplification} that admits a simple analytical solution.

Regardless of this however, it is an interesting question---and in line with our earlier ones---to develop better analytical tools for analyzing 
recursive constructions of blueprints and in particular the effect of various gadgets and transformations on their values.

\paragraph{Other variants and related problems.} Finally, while our goal in this paper has been solely to make progress on the semi-streaming matching problem, 
our results and approach have or can potentially have implications for other problems as well. For instance, our results can be used, with a minor modification of~\Cref{thm:framework}, to show that blueprints 
also prove a lower bound for the one-way communication complexity and streaming complexity of the minimum vertex cover problem studied in~\cite{DerakhshanGR25};\footnote{The only modification to the proof we need is to define the graph $G$ by picking $\delta$-fraction of edges instead of removing them; then, 
for the final matching $\Mstar$, at most $\delta$ fraction of edges are in the graph, but the algorithm needs to pick one vertex incident on any edge that it is not sure is \emph{missing} from the graph, which by the same argument as before (applied to the complement graph) will be $(1-\delta)$ fraction of edges of $\Mstar$.} as such, we can also prove a lower bound of $(\frac{8-2\sqrt{10}}{3})^{-1} \sim 1.790$-approximation for the streaming vertex cover problem. 

It will be quite interesting to see if blueprints can be used to also make progress on proving lower bounds for the following variants of the semi-streaming matching problem: 
\begin{itemize}
\item \emph{random-order} streaming (which admits a $1-\Theta(1/\log{n})$-lower bound~\cite{AssadiB21}), 
\item \emph{two-pass} algorithms (which admits a $8/9$-approximation lower bound~\cite{KonradN24}), or
\item \emph{multi-pass} algorithms for $(1-\eps)$-approximation (which admits an $\Omega(\log{(1/\eps)})$-pass lower bound~\cite{AssadiS23}\footnote{The lower bound of~\cite{AssadiS23} is proven \emph{conditionally} (under the plausible hypothesis that RS graphs with $n^{1+\Omega(1)}$ edges exist). But, recent results in~\cite{AssadiKNS24,AssadiBKNS25} imply an unconditional version of this lower bound; see also~\cite{KonradN24} that proved an unconditional version
for two-pass algorithms improving the conditional result of~\cite{Assadi22}.}).
\end{itemize}
We leave further applications of blueprints to similar problems as our final open question.

\section*{Acknowledgement}

We thank Peter Kiss for conversations about connections of our work with the online preemptive matching literature, and Trevor Brown for
providing us with computational resources for generating blueprints efficiently in the early stages of this project. Finally, we are thankful to Richard Peng for connecting the authors, which led to this collaboration.

Sepehr Assadi would like to also thank Soheil Behnezhad, Aaron Bernstein, Niv Buchbinder, Deeparnab Chakrabarty, Michael Kapralov, Sanjeev Khanna, Christian Konrad, Lap Chi Lau, Thatchaphol Saranurak, and David Wajc for various discussions
 about the semi-streaming matching problem over the years. 

Part of this work was conducted while Sepehr Assadi was visiting the Simons Institute for the Theory of Computing as part of the Sublinear Algorithms program.

\section*{AI Acknowledgement} 

No AI tools were used in any capacity in the research or preparation of this paper, except possibly for minor typo editing.

\bibliographystyle{halpha-abbrv}
\bibliography{new}

\clearpage
\appendix

\part*{Appendix}

\section{Proof of the Lower Bound for Lossy Compression}\label{app:info}

We present the proof of~\Cref{prop:compression}, restated below, in this appendix. 

\begin{proposition*}
	Fix any graph $\Gbase$ on $m$ edges, any $\delta \in (0,1/2)$, and any $s \geq \log{(m+1)}$. Consider any compression scheme $\Phi$ of size $s$ for $\GG = \GG(\Gbase,\delta)$. 
	For $G$ sampled from $\GG$, 
	\[
		\Exp_{G \sim \GG} \card{\belong{\Ebase}{\Phi(G)}} \leq  \frac{8s}{\delta \cdot \log{(1/\delta)}}.
	\]
\end{proposition*}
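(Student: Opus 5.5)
The plan is an entropy-counting argument over summaries. The distribution $\GG$ is uniform over all subsets of $\Ebase$ of size $(1-\delta)m$, so $\en{G} = \log\binom{m}{\delta m}$. Since $\Phi(G)$ takes at most $2^s$ values, $\en{\Phi(G)} \leq s$. Consequently
\[
\en{G \mid \Phi(G)} = \en{G} - \II(G\,;\Phi(G)) \geq \log\binom{m}{\delta m} - s .
\]
We will upper bound $\en{G \mid \Phi(G)=\phi}$ in terms of $k_\phi := \card{\belong{\Ebase}{\phi}}$. Averaging over $\phi$ then forces $\Exp[k_\phi]$ to be $O(s/(\delta\log(1/\delta)))$.

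To bound $\en{G \mid \Phi(G)=\phi}$, write $G$ through its set of removed edges $D = \Ebase\setminus E$, with $\card{D}=\delta m$ fixed, and split $D = D_1 \sqcup D_2$. Here $D_1 = D\cap \belong{\Ebase}{\phi}$ and $D_2 = D\setminus \belong{\Ebase}{\phi}$. By subadditivity, $\en{D\mid\phi} \leq \en{D_1\mid\phi}+\en{D_2\mid\phi}$. The set $D_2$ is a subset of the $m-k_\phi$ non-belonging edges, and $\card{D_2}\leq \delta m$ can be described by its size ($\log(m+1)$ bits) plus the choice of subset. This gives $\en{D_2\mid\phi}\leq \log(m+1)+\log\binom{m-k_\phi}{\delta m}$.

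For $D_1$ we use the belonging condition: each edge in $\belong{\Ebase}{\phi}$ lies in $D$ with probability at most $\delta/2$. Subadditivity over indicators then gives $\en{D_1\mid \phi}\leq k_\phi\, h(\delta/2)$, where $h$ is the binary entropy (increasing on $[0,1/2]$). Combining,
\[
\log\binom{m}{\delta m} - s \leq \log(m+1) + \Exp_\phi\!\left[\log\binom{m-k_\phi}{\delta m} + k_\phi\, h(\delta/2)\right].
\]

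The main obstacle is the next step: comparing $\log\binom{m}{\delta m}-\log\binom{m-k}{\delta m}$ with $k\,h(\delta/2)$ and extracting a gain of order $k\,\delta\log(1/\delta)$ per belonging edge. The ratio $\binom{m}{\delta m}/\binom{m-k}{\delta m}=\prod_{j<k}\frac{m-j}{m-j-\delta m}$ is at least $(1-\delta)^{-k}$, so the loss is at least $k\log\frac{1}{1-\delta}\approx k\delta\log e$. This falls short of $k\,h(\delta/2)\approx k\frac{\delta}{2}\log\frac{2e}{\delta}$, so the naive split is too lossy.

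I would instead account more carefully by conditioning on $\card{D_1}=j$. Then $D_2$ is a $(\delta m - j)$-subset of the non-belonging edges, and the chain rule gives the bound $\log\binom{k}{j}+\log\binom{m-k}{\delta m-j}$ plus $O(\log m)$. Since $\Exp[j]\leq k\delta/2$, while a uniform random $\delta m$-subset would place about $k\delta$ removed edges among the $k$ belonging ones, the total $\sum_j \binom{k}{j}\binom{m-k}{\delta m-j}=\binom{m}{\delta m}$ is dominated by terms near $j\approx k\delta$. Restricting to $j\leq k\delta/2$ loses a KL-divergence term of roughly $k\cdot D(\delta/2\,\|\,\delta) = \Theta(k\delta)$. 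If only $\Theta(k\delta)$ comes out, I will then refine the entropy bound by using concavity in $j$ and the exact hypergeometric tail. From this I expect
\[
\log\binom{m}{\delta m} - \Exp_\phi[\text{bound}] \geq c\,\delta\log(1/\delta)\,\Exp[k_\phi] - O(\log m),
\]
and together with $s\geq\log(m+1)$ this yields $\Exp[k_\phi]\leq 8s/(\delta\log(1/\delta))$ after tracking constants. Getting the $\log(1/\delta)$ factor rather than a bare constant is the delicate point, and it is where I would spend effort checking the divergence estimate.
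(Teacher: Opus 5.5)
The gap is your final step, and it cannot be closed, because the factor $\log(1/\delta)$ is not there to be had. Everything before that step is right. After conditioning on $J=\card{D\cap B(\phi)}$, where $B(\phi)=\belong{\Ebase}{\phi}$, the price of $k$ belonging edges is $k\cdot D_{\mathrm{KL}}(\mathrm{Ber}(\delta/2)\,\|\,\mathrm{Ber}(\delta))=\Theta(k\delta)$ bits; your $j$-conditioning plus log-concavity of the hypergeometric pmf would give essentially this. No refinement turns it into $\Theta(k\delta\log(1/\delta))$, because the statement is false for small $\delta$.

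Here is a counterexample. Take $\delta=1/k$, and let $m$ be a multiple of $k$ with $\max(k^3,sk)\le m\le 2^s-1$. Fix disjoint $T_1,\ldots,T_s\subseteq\Ebase$ of size $k$, and let bit $g$ of $\Phi(G)$ record whether $T_g\subseteq E(G)$. When that bit is $1$, every edge of $T_g$ lies in $G$ with conditional probability $1$, so it belongs. Hence
\[
\Exp\card{\belong{\Ebase}{\Phi(G)}}\;\ge\; sk\cdot\frac{\binom{m-k}{\delta m}}{\binom{m}{\delta m}}\;\ge\; sk\,(1-\delta-\delta^2)^{1/\delta}\;=\;(1-o(1))\,\frac{s}{e\,\delta},
\]
which exceeds the claimed $8s/(\delta\log(1/\delta))$ as soon as, say, $\log(1/\delta)\ge 23$.

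The paper's proof gets the $\log(1/\delta)$ only through an invalid step. It bounds $\en{X_{\overline{B(\phi)}}\mid S=\phi}$ by $(m-\card{B(\phi)})\,H_2(\delta)$ using the support size $\binom{m-\card{B(\phi)}}{\delta m}$. But $\log\binom{m-k}{\delta m}\approx(m-k)\,H_2\big(\tfrac{\delta m}{m-k}\big)$, which exceeds $(m-k)H_2(\delta)$ by about $\delta k\log\frac{1-\delta}{\delta}$: the $\delta m$ removed edges get squeezed onto the non-belonging edges. With the correct term, the per-edge gain becomes $\log\frac{1}{1-\delta}-H_2(\delta/2)$, which is negative for small $\delta$. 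That is exactly the deficit your naive split exposed.

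What your approach does prove, finished cleanly, is the following. Since $X$ is uniform and $S=\Phi(X)$, we have $\log\frac{1}{\Pr[S=\phi]}=D_{\mathrm{KL}}(Q_\phi\,\|\,U)$, where $Q_\phi$ is the posterior and $U$ the uniform prior; these average to $\en{S}\le s$. Apply Donsker--Varadhan to $-\lambda J$ with $\lambda\ge 0$, using two facts:
\begin{itemize}
\item $\Exp_{Q_\phi}[J]\le\delta\card{B(\phi)}/2$, by the belonging condition;
\item Hoeffding's domination of the hypergeometric moment generating function by that of $\mathrm{Bin}(\card{B(\phi)},\delta)$.
\end{itemize}
Optimizing over $\lambda$ gives $D_{\mathrm{KL}}(Q_\phi\,\|\,U)\ge\card{B(\phi)}\cdot D_{\mathrm{KL}}(\mathrm{Ber}(\delta/2)\,\|\,\mathrm{Ber}(\delta))\ge 0.22\,\delta\,\card{B(\phi)}$. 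Hence $\Exp\card{\belong{\Ebase}{\Phi(G)}}\le 5s/\delta$, with no need for $s\ge\log(m+1)$, and the example above shows this is tight up to the constant. This $O(s/\delta)$ bound is all \Cref{lem:special-count} needs. The choice of $s$ in the proof of \Cref{thm:framework} then just loses a $\log(1/\delta)$ factor, which is harmless for constant $\delta$.
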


Before getting to the proof, we need a quick refresher on basic information theory tools. 

\subsection{Refresher on Entropy and its Basic Properties}

For a random variable $A$, we use $\supp{A}$ to denote the support of $A$ and $\distribution{A}$ to denote its distribution. 
When it is clear from the context, we may abuse the notation and use $A$ directly instead of $\distribution{A}$, for example, write 
$a \sim A$ to mean $a \sim \distribution{A}$, i.e., $a$ is sampled from the distribution of random variable $A$.

We denote the \emph{Shannon Entropy} of a random variable $A$ by
$\en{A}$, which is defined as: 
\begin{align}
	\en{A} := \sum_{a \in \supp{A}} \Pr\paren{A = a} \cdot \log{\paren{1/\Pr\paren{A = a}}} \label{eq:entropy}
\end{align} 
For any $\delta \in [0,1]$, we write $H_2(\delta)$ to denote the \emph{binary entropy} of $\delta$, i.e., the entropy of the Bernoulli random variable with mean $\delta$. 

\noindent
The \emph{conditional entropy} of $A$ conditioned on $B$ is denoted by $\en{A \mid B}$ and defined as:
\begin{align}
\en{A \mid B} := \Ex_{b \sim B} \bracket{\en{A \mid B = b}}, \label{eq:cond-entropy}
\end{align}
where 
$\en{A \mid B = b}$ is defined in a standard way by using the distribution of $A$ conditioned on the event $B = b$ in Eq~(\ref{eq:entropy}). 

We shall use the following basic properties of entropy. 
Proofs of these properties follow from convexity of the entropy function
and Jensen's inequality and can be found in~\cite[Chapter~2]{CoverT06}. 

\begin{fact}\label{fact:it-facts}
  Let $A$, $B$, and $C$ be (possibly correlated) random variables.
   \begin{enumerate}
  \item \label{part:uniform} $0 \leq \en{A} \leq \log{\card{\supp{A}}}$. The right (resp. left) equality holds
    iff $\distribution{A}$ is uniform over its support $\supp{A}$ (resp. is deterministic). 
    \item \label{part:sub-additivity} \emph{Subadditivity of entropy}: $\en{A,B \mid C}
    \leq \en{A \mid C} + \en{B \mid  C}$.
   \item \label{part:chain-rule} \emph{Chain rule for entropy}: $\en{A,B \mid C} = \en{A \mid C} + \en{B \mid C,A}$.
   \end{enumerate}
\end{fact}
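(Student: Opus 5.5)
The plan is to prove the three parts in the order (\ref{part:uniform}), (\ref{part:chain-rule}), (\ref{part:sub-additivity}), since subadditivity follows from the chain rule once we know that conditioning does not increase entropy. Throughout, all random variables have finite support. This is the only case needed in~\Cref{prop:compression}, where everything is a function of a finite graph and an $s$-bit summary.

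\textbf{Part (\ref{part:uniform}).} Each term $\Pr(A=a)\log(1/\Pr(A=a))$ is nonnegative because $\Pr(A=a)\in(0,1]$. Hence $\en{A}\geq 0$, with equality iff every term vanishes, i.e., iff some $a$ has probability one. For the upper bound, write $\en{A}=\Exp_{a\sim A}\bracket{\log(1/\Pr(A=a))}$ and apply Jensen's inequality to the concave function $\log$:
\[
\en{A}\leq \log \Exp_{a\sim A}\bracket{1/\Pr(A=a)}=\log\card{\supp{A}}.
\]
Since $\log$ is strictly concave, equality holds iff $1/\Pr(A=a)$ is constant on the support, i.e., iff $A$ is uniform on its support.

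\textbf{Part (\ref{part:chain-rule}).} First fix $C=c$. For every $(a,b)$ in the support,
\[
\Pr(A=a,B=b\mid C=c)=\Pr(A=a\mid C=c)\cdot\Pr(B=b\mid A=a,C=c).
\]
Taking $-\log$ of both sides and then the expectation over $(a,b)$ drawn from the conditional distribution gives $\en{A,B\mid C=c}=\en{A\mid C=c}+\en{B\mid A,C=c}$. The second term uses the definition in~\Cref{eq:cond-entropy}, applied inside the conditional law given $C=c$. Averaging over $c\sim C$ and applying the tower property of expectation then yields the chain rule.

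\textbf{Part (\ref{part:sub-additivity}).} By the chain rule, it suffices to show $\en{B\mid C,A}\leq\en{B\mid C}$. Again fix $C=c$, so it suffices to prove $\en{B\mid A}\leq\en{B}$ for an arbitrary pair $(A,B)$. The difference $\en{B}-\en{B\mid A}$ equals
\[
\Exp_{(a,b)}\bracket{\log\frac{\Pr(B=b\mid A=a)}{\Pr(B=b)}}=\Exp_{(a,b)}\bracket{\log\frac{\Pr(A=a,B=b)}{\Pr(A=a)\Pr(B=b)}},
\]
which is a KL divergence and hence nonnegative. To show this, apply Jensen's inequality to $-\log$:
\[
-\Exp_{(a,b)}\bracket{\log\frac{\Pr(A=a)\Pr(B=b)}{\Pr(A=a,B=b)}}\geq -\log\sum_{(a,b)\in\supp{A,B}}\Pr(A=a)\Pr(B=b)\geq -\log 1=0.
\]
The only delicate point in the whole argument is this step. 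The sum ranges only over the joint support, so it is at most $1$ rather than exactly $1$, and the inequality still goes in the right direction. Everything else is routine bookkeeping with the definitions in~\Cref{eq:entropy} and~\Cref{eq:cond-entropy}.
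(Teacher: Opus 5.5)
Your proof is correct. It is essentially the standard Jensen/KL-divergence argument of \cite[Chapter~2]{CoverT06}, which is exactly what the paper relies on: the paper gives no proof of its own beyond noting that these properties follow from convexity and Jensen's inequality. Restricting to finitely supported random variables is harmless, since that is the only setting in which the fact is used (the proof of \Cref{prop:compression}).
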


\subsection{Proof of~\Cref{prop:compression}}
We are now ready to prove~\Cref{prop:compression}. 

	Let $X \in \set{0,1}^{\Ebase}$ be a random variable over the distribution $\GG$ such that for any edge $e \in \Ebase$, $X_e = 1$ iff $e$ is sampled in $G$ and otherwise $X_e=0$. Additionally, let $S$ be the random variable
	for the summary $\Phi(G)$. Finally, we use $B(S)$ to denote the edges of $\Ebase$ that belong to the summary $\Phi(G)$, i.e., $B(S) = \belong{\Ebase}{S}$. We first have 
	\begin{align}
		\en{X \mid S} \geq \en{X} - \en{S} \geq \log{{\binom{m}{\delta m}}} - s \geq m \cdot H_2(\delta) - \log{(m+1)} - s \geq m \cdot H_2(\delta) - 2s; \label{eq:first-one}
	\end{align}
	the first inequality follows from chain-rule (\itfacts{chain-rule}) and non-negativity (\itfacts{uniform}) of entropy, second one uses the fact that $X$ is uniform over its support and support of $S$ has size $2^s$ (and applies~\itfacts{uniform}), the third one
	uses the standard lower bound for binomial coefficient of the form $\binom{a}{b} \geq 2^{a \cdot H_2(b/a)}/(a+1)$, and the last one uses the lower bound on $s$ in the statement. 
	
	On the other hand, for any fixed summary $\phi$, 
	\begin{align*}
		\en{X \mid S=\phi} &= \en{X_{B(\phi)},X_{\overline{B(\phi)}} \mid S=\phi} \leq \en{X_{B(\phi)} \mid S=\phi} + \en{X_{\overline{B(\phi)}} \mid S=\phi} \\
		&\leq \card{B(\phi)} \cdot H_2(\delta/2) + \paren{m-\card{B(\phi)}} \cdot H_2(\delta),
	\end{align*}
	where the first inequality is by the subadditivity of entropy (\itfacts{sub-additivity}), and the last one holds because: (1) for the first term, each entry of $X_{B(\phi)}$ has probability at least $1-\delta/2$ of being $1$, and 
	thus entropy of each coordinate is at most $H_2(\delta/2)$ and we can apply subadditivity, and (2) support of $\paren{X_{\overline{B(\phi)}} \mid S=\phi}$ can only have size at most $\binom{{m-\card{B(\phi)}}}{\delta m}$
	by the definition of $X$ having exactly $\delta m$ zeros, and thus we can use log of its support size to bound its entropy (\itfacts{uniform}) using the inequality ${{a}\choose{b}} \leq 2^{a \cdot H_2(b/a)}$. 
	 
	Combining this with the definition of conditional entropy, we have, 
	\[
		\en{X \mid S} \leq \paren{H_2(\delta/2)-H_2(\delta)} \cdot \Exp_{\phi \sim S} \card{B(\phi)} + m\cdot H_2(\delta),
	\]
	which together with~\Cref{eq:first-one} gives us, 
	\[
		\Exp_{\phi \sim S} \card{B(\phi)} \leq \frac{2s}{H_2(\delta) - H_2(\delta/2)}. 
	\]
	We can simplify the denominator using the concavity of (binary) entropy as follows, 
	\[
		H_2(\delta) - H_2(\delta/2) \geq (\delta - \delta/2) \cdot H'_2(\delta) = \frac{\delta}{2} \cdot \log\paren{\frac{1-\delta}{\delta}} \geq \frac{\delta}{4} \cdot \log{(1/\delta)}.
	\]
	This implies that 
	\[
		\Exp_{G \sim \GG} \card{\belong{\Ebase}{\Phi(G)}} = \Exp_{\phi \sim S} \card{B(\phi)} \leq \frac{8s}{\delta \cdot \log{(1/\delta)}}, 
	\]
	concluding the proof. \Qed{prop:compression}


\clearpage 

\section{A Construction of Somewhat-Dense ERS Graphs}\label{app:ers}

We present a proof of~\Cref{prop:base-c-rs} restated below. 

\begin{theorem*}
	Fix any integer $C \geq 2$ and parameter $\delta \in (0,1/(40C))$. There exists an integer $n_0 := n_0(C,\delta)$ such that for infinitely many values of $n \geq n_0$, 
	there are $(C,r,t)$-ERS graphs with $n$ vertices on each side of the bipartition with parameters $t = n^{\Omega(1/\log\log{n})}$ and $r = n/C - \delta \cdot n$. 
\end{theorem*}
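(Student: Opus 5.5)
We will follow the Fischer et al.\ / Goel--Kapralov--Khanna construction of RS graphs, replacing the hypercube $\{0,1\}^d$ by the Hamming cube $[C]^d$ and the "one coordinate differs" edges by edges that shift a coordinate of a vector in $\mathbb{Z}_C^d$. Concretely, we take $L = R = [C]^d$ (so $n = C^d$), and index matching-groups by pairs $(S, \cdot)$ where $S \subseteq [d]$ is a set of coordinates of size about $d/2$ (more precisely of size $k$ chosen so that $\binom{d}{k}$ is large, and we will later restrict to a subfamily). For a fixed $S$, we partition a vertex $u \in [C]^d$ according to the value $\sigma_S(u) := \sum_{j \in S} u_j$, taken modulo an appropriate range. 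For instance, we define $L_{i,x}$ to be the vertices whose $S$-sum lies in a window of the form $\mu_S + (x - C/2)$ around the typical value $\mu_S$, and we let $M_{i,x,y}$ add the vector $(y-x)\cdot \mathbf{1}_{j^*}$ for a single designated coordinate $j^*\in S$, or more symmetrically, let it add $y-x$ spread over the coordinates of $S$. The vertices whose $S$-sum is atypical (far from the mean) are dropped. By the Chernoff/central-limit behaviour of sums of independent uniform $[C]$-variables, the windows with $x \in [C]$ each carry about a $1/C$-fraction of the vertices only when the window widths scale like $\sqrt{k}$. Therefore, we will instead group according to $\sigma_S(u) \bmod C$, which is exactly uniform, so each $L_{i,x}$ has size exactly $n/C$. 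The $\delta n$ loss is then spent on boundary/wrap-around effects and on discarding the vertices whose shift would leave $[C]^d$.

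The key steps are as follows. (1) \emph{Decomposition and size:} define $M_{S,x,y}$ as the map $u \mapsto u + (y-x \bmod C)\,e_{j(u)}$, where the shifted coordinate is chosen inside $S$ canonically and the arithmetic is mod $C$, so that it is a bijection from $L_{S,x}$ to $R_{S,y}$. We then keep only $n/C - \delta n$ edges to control inducedness. (2) \emph{Inducedness:} an edge $(u,v)$ of a different group $S'$ changes coordinates only in $S'$ by a pattern determined by $S'$. We need that, viewed from $S$, it goes from $L_{S,x}$ to $R_{S,y}$ only when $x = y$, i.e.\ the $S$-sum is unchanged mod $C$. So we will choose the family $\mathcal{S}$ of sets, and the edge patterns, so that each pattern for $S'$ has zero sum on every other $S \in \mathcal{S}$. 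This is the analogue of the requirement in~\cite{FischerLNRRS02} that matchings are induced. The most natural route is to use difference vectors $\pm$ on two coordinates, $e_a - e_b$ with $a,b \in S'$, and to require that $|S \cap \{a,b\}| \in \{0,2\}$ for $S \neq S'$. Equivalently, we need a large family of sets together with a pair inside each set that is not split by any other set. (3) \emph{Disjointness and density:} distinct $S$ use distinct patterns, so the groups are edge-disjoint. We then need $t = |\mathcal{S}| = n^{\Omega(1/\log\log n)}$ while each group still covers almost all vertices, which forces us to take $d$ and $C$ in the regime used by~\cite{FischerLNRRS02,GoelKK12}, namely a product/tensoring of a base gadget over $\log\log n$-sized blocks.

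I expect step (2) combined with (3) to be the main obstacle. A trivial family satisfying the non-splitting condition is small: it is essentially a collection of disjoint pairs, giving only $t = d = \log_C n$. To reach $n^{\Omega(1/\log\log n)}$, I would first try the FLNRRS trick: let vertices be $[C]^{d}$ with $d = \Theta(\log n/\log\log n)$ blocks, each block taking values in a larger alphabet. A matching-group is then indexed by a vector of choices per block, with inducedness enforced via a sum-mod-$C$ invariant combined with a "middle layer" concentration argument. The concentration step is where the $\delta$ loss enters: it discards atypical vertices so that every group spans a $1-\delta C$ fraction of each side (this is why $\delta < 1/(40C)$ is required). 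The remaining verification is routine bookkeeping, namely checking that non-$M_i$ edges inside $L_i \times R_i$ preserve the class index.
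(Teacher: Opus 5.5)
\textbf{There is a genuine gap: the exact mod-$C$ classes in your step (1) cap the number of matching-groups at $O(\log n)$.} The construction breaks where you drop windows around the mean and use the exact classes $L_{S,x}=\{u:\sigma_S(u)\equiv x \pmod{C}\}$.

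\emph{Why this caps $t$.} With a gapless partition, $L_S$ and $R_S$ are \emph{all} of $L$ and $R$. So inducedness forces every edge of every other group $M_{S'}$ to preserve $\sigma_S \bmod C$ exactly. At the same time, every edge of $M_{S',1,2}$ must change $\sigma_{S'}$ by $1 \bmod C$. Fix one edge of each $M_{S',1,2}$, call its difference vector $\Delta_{S'}$, and reduce modulo a prime $q\mid C$. The $t\times t$ matrix $(\langle \mathbb{1}_S,\Delta_{S'}\rangle)_{S,S'}$ becomes the identity. Hence the vectors $\mathbb{1}_S$ are linearly independent over $\mathbb{F}_q$, and $t\le d=O(\log n)$.

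\emph{What does not fix it.} The bound holds for any edge patterns and any alphabet, so the trouble is not the specific patterns you tried. (Incidentally, $e_a-e_b$ with $a,b\in S'$ leaves $\sigma_{S'}$ unchanged, so it cannot realize $M_{S',x,y}$ for $x\neq y$.) ``Keeping only $n/C-\delta n$ edges'' does not help either, because it does not shrink the classes $L_{S,x}$ that inducedness refers to. The unspecified ``FLNRRS trick with a sum-mod-$C$ invariant and middle-layer concentration'' also does not supply the missing $n^{\Omega(1/\log\log n)}$ groups. So the density claim, which is the heart of the theorem, remains unproved.

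\textbf{The missing idea is to make inducedness robust rather than exact.} Leave a buffer of uncolored vertices between consecutive classes, and make edges of other groups move the relevant weight by less than the buffer width. The paper's construction does this as follows.
\begin{itemize}
\item Vertices are $L=R=[p]^d$ with $p=d^2$.
\item Take random sets $S_1,\dots,S_t\subseteq[d]$ with $\card{S_i}=\delta d$, $\card{S_i\cap S_j}\le 4\delta^2 d$, and $t=2^{\Omega(d)}$.
\item Use integer weights $\weight_i(v)=\sum_{k\in S_i}v_k$. Vertex $v$ gets color $x\in[C]$ iff $\weight_i(v)-(x-1)P \bmod Q\in[0,B)$, where $B=d$, $W=10C\delta d$, $P=B+W$, $Q=CP$. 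Otherwise it gets color $0$ and lies outside $L_i,R_i$.
\item The matching $M_{i,x,y}$ adds the same amount $(y-x)(10C+1/\delta)+Q/(\delta d)=O(C/\delta)$ to \emph{every} coordinate in $S_i$. This shifts $\weight_i$ by $(y-x)P+Q$.
\item An edge of $M_j$ shifts $\weight_i$ by at most $\card{S_i\cap S_j}\cdot O(C/\delta)<W$; this is where $\delta<1/(40C)$ is used. So such an edge can never join two distinct nonzero colors.
\end{itemize}
The $\delta n$ loss is the buffer fraction $W/P$ plus $o(n)$ boundary effects, since the shifts are $O_\delta(1)$ while $p\to\infty$. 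The condition $Q\mid p$ makes the colors exactly balanced, and $n=d^{2d}$ gives $t=2^{\Omega(d)}=n^{\Omega(1/\log\log n)}$. No concentration argument is involved.
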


We fix the following parameters throughout the proof: 
\begin{itemize}
    \item parameter $\delta \in (0,1/40C)$ such that $1/\delta$ is an integer;
    \item sufficiently large integer $d \geq 1$ and $p := d^2$ such that $1/\delta \mid d$ (so $\delta \cdot d$ is an integer) and $C \cdot (1+10C\delta) \cdot d \mid p$; given $p=d^2$, this
    is equivalent to specifying that there exists an integer $k$ where $ k\cdot C(1+10C \delta) = d$,
    which is satisfied by infinitely many integers $d$ such that $1/\delta \mid d$.
\end{itemize} 
In the following, we fix $\delta$ and allow $d$ to go to infinity with respect to $\delta$ and thus our asymptotic notation is with respect to $d$ and suppresses dependence on $\delta$.

We have the following claim using random sets. It immediately follows from standard results in coding-theory and its simple proof is provided here for completeness. 
\begin{claim}\label{clm:random-sets-base-C}
	There exists $t := 2^{\Omega(d)}$ sets $S_1, \ldots, S_t \subseteq [d]$ such that for $i \neq j \in [t]$: 
	\[
		\card{S_i} = \delta \cdot d \qquad \text{and} \qquad  \card{S_i \cap S_j} \leq 4 \delta^2 \cdot d.
	\]
\end{claim}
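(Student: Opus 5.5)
We pick $S_1,\ldots,S_t$ independently and uniformly at random among all $(\delta d)$-subsets of $[d]$, and use the probabilistic method with a union bound over pairs. The size condition then holds by construction. It remains to show that for $t = 2^{cd}$, with $c>0$ a small constant depending only on $\delta$, all pairwise intersections are small with positive probability.

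\textbf{Step 1: a single pair.} Fix $i \neq j$ and condition on $S_i$. Then $\card{S_i \cap S_j}$ is hypergeometric: we draw $\delta d$ elements without replacement from $[d]$, of which $\delta d$ are ``marked''. Its mean is $\delta^2 d$. Hoeffding's inequality applies to sampling without replacement, or we can bound the hypergeometric tail by the binomial one. Using the multiplicative Chernoff bound with deviation factor $3$ (i.e., reaching $4\delta^2 d = (1+3)\mu$), we get
\[
\Pr\paren{\card{S_i \cap S_j} > 4\delta^2 d} \leq \exp\paren{-\Omega(\delta^2 d)}.
\]
As an alternative, we can bound directly: the probability that a fixed set of $4\delta^2 d$ elements is contained in $S_j$ is at most $\delta^{4\delta^2 d}$. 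A union bound over $\binom{\delta d}{4\delta^2 d}$ such subsets of $S_i$ then gives $(e/(4\delta))^{4\delta^2 d}\delta^{4\delta^2 d} = (e/4)^{4\delta^2 d}$, which is also $2^{-\Omega(\delta^2 d)}$.

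\textbf{Step 2: union bound over pairs.} There are at most $t^2$ pairs, so the probability that some pair fails is at most $t^2 \cdot 2^{-\Omega(\delta^2 d)}$. Choosing $t = 2^{c\delta^2 d}$ with $c$ small enough makes this less than $1$. Since $\delta$ is a fixed constant and the asymptotics suppress the dependence on $\delta$, this gives $t = 2^{\Omega(d)}$.

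\textbf{Main obstacle.} The only delicate point is getting a valid concentration bound for the hypergeometric intersection size. Independence fails under sampling without replacement. This is handled either by Hoeffding's observation that such sampling is at least as concentrated as sampling with replacement, or by the direct counting bound above, which sidesteps concentration entirely. Everything else is routine. The integrality of $\delta d$ is guaranteed by the assumption $1/\delta \mid d$.
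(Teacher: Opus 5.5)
Your proposal is correct and follows essentially the same route as the paper: choose independent uniformly random $(\delta d)$-subsets, note that $|S_i \cap S_j|$ is hypergeometric with mean $\delta^2 d$, apply a Chernoff-type tail bound valid for sampling without replacement to get failure probability $\exp(-\Omega(\delta^2 d))$, and union bound over the $\binom{t}{2}$ pairs to get $t = 2^{\Omega(d)}$ for fixed $\delta$. Your alternative counting bound, $\binom{\delta d}{k}\,\delta^{k} \le (e/4)^{k}$ with $k = \lceil 4\delta^2 d\rceil$ to handle integrality, is also valid and avoids concentration inequalities for negatively correlated variables altogether.
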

\begin{proof}
	We pick the sets $S_1,\ldots,S_t$ randomly, meaning that each $S_i$ is a $(\delta \cdot d)$-subset of $[d]$ chosen uniformly and independently. Thus, the sets satisfy the desired size constraint trivially. We now bound the intersection size of 
	any pairs of sets $S_i,S_j$ for $i \neq j$ and union bound over all choices to conclude the proof. 
	
	Fix the elements in $S_i$ and consider the process of sampling $S_j$ which is independent of $S_i$. Each element of $S_i$ will belong to $S_j$ with probability $\delta$ and thus 
	\[
		\Exp\card{S_i \cap S_j} = \card{S_i} \cdot \delta = \delta^2 \cdot d. 
	\]
	Moreover, the distribution of $S_i \cap S_j$ is a hypergeometric random variable and thus by Chernoff bound for negatively correlated random variables 
	\[
		\Pr\paren{\card{S_i \cap S_j} \geq 4\Exp\card{S_i \cap S_j}} \leq \exp\paren{-2\Exp\card{S_i \cap S_j}} = \exp\paren{-2\delta^2 \cdot d}. 
	\]
	Thus, as long as $t$ satisfies 
	\[
		{{t}\choose{2}} < \exp\paren{2\delta^2 \cdot d}
	\]
	we can union bound over all pairs $i \neq j \in [t]$ and conclude existence of the desired collection of sets. This implies setting $t = 2^{\Omega(d)}$ concludes the proof. \Qed{clm:random-sets-base-C}
	
\end{proof}

We now define our ERS graph. Fix a collection of sets $S_1,\ldots,S_t$ as in~\Cref{clm:random-sets-base-C} for the rest of the proof (these sets will subsequently define the matching-groups $M_1,\ldots,M_t$ in the ERS graph). 

Define the vertices as $L = R = [p]^d$, and let $n = \card L = \card R = p^d$. 
Define the integers:
\[
B = d, \quad W = 10C\delta \cdot d, \quad P = B+W, \quad \text{and} \quad Q = C \cdot P. 
\]
For every $i \in [t]$ and vector $v \in [p]^d$, we define 

\begin{align*}
    \weight_i(v) &= \sum_{k \in S_i} v_k
\end{align*}
and
\begin{align*}
    \ccolor_i(v) &= 
    \begin{cases} 
        x & \text {if } \weight_i(v) - (x-1) \cdot P \bmod Q \in [0, B) \text{ for } x \in [C]\\
        0 & \text{otherwise}
    \end{cases}.
\end{align*}

In other words, for a vector $v$ with $\weight_i(v) \bmod P \in [B, P)$, we assign it color $0$. Then, we assign the vectors $v$ with $\weight_i(v) \bmod Q$ in $[0,B)$ color $1$, vectors $v$ with weights $\weight_i(v) \bmod Q$ in $[P, P+B)$ color $2$, and so on. 

From here on, we use vectors in $[p]^d$ and vertices (of either $L$ or $R$) interchangeably, so we will talk about the $\weight_i(\cdot)$ or $\ccolor_i(\cdot)$ of vertices. We now define the decomposition of vertices for each $S_i$ into $C$ groups on each side (which forms the decomposition of the matching-group $M_i$). For each $x \in [C]$, define:
\begin{align*}
L_{i,x} &= \set{v \in [p]^d : \ccolor_i(v) = x} \qquad \text{and} \qquad R_{i,x} = \set{v \in [p]^d : \ccolor_i(v) = x}.
\end{align*}

We then define the matching-groups $M_i$ and their matchings $M_{i,x,y}$ for $i \in [t]$ and $x,y \in [C]$. 
Let $\mathbb{1}_{S_i} \in [p]^d$ be the indicator vector for $S_i$, and for each $x, y \in [C]$, define: 
\begin{align*}
M_{i,x,y} &= \set{(u, v) \in L_{i,x} \times [p]^d : v = u+\mathbb{1}_{S_i} \cdot \paren{(y-x) \cdot (10C+1/\delta) + Q/(\delta d)}}.
\end{align*}
Note that $Q/(\delta d) = C(1+10C\delta)/\delta = C/\delta + 10C^2$ is an integer since $1/\delta$ is an integer by our choice of parameters.

The following claim allows us to establish the \textbf{decomposition condition} of each $M_i$. 

\begin{claim}\label{clm:base-C-decomp}
For any $i \in [t]$ and $x, y \in [C]$, the matching $M_{i,x,y}$ is subset of $L_{i,x} \times R_{i,y}$.
\end{claim}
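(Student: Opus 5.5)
The plan is a direct computation: take an arbitrary pair $(u,v) \in M_{i,x,y}$ and show that $\ccolor_i(v) = y$. Since $u \in L_{i,x}$ by definition, this gives $M_{i,x,y} \subseteq L_{i,x} \times R_{i,y}$. The key is to track how $\weight_i$ changes under the shift $v = u + \mathbb{1}_{S_i} \cdot \Delta$, where
\[
\Delta := (y-x)\cdot(10C+1/\delta) + Q/(\delta d).
\]

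First, I will check that the shift is well-defined and nonnegative in each coordinate. Since $x,y \in [C]$, we have $(y-x)(10C+1/\delta) \geq -(C-1)(10C+1/\delta)$. This is strictly larger than $-(C/\delta + 10C^2) = -Q/(\delta d)$, so $\Delta > 0$ and $\Delta$ is an integer because $1/\delta$ is an integer. I read the definition of $M_{i,x,y}$ as requiring $v \in [p]^d$, so the sum is ordinary integer addition. If it is instead meant modulo $p$, the argument below still goes through: each wrap-around changes $\weight_i$ by a multiple of $p$, and $Q \mid p$ by the choice $C(1+10C\delta)d \mid p = d^2$.

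Next comes the main computation. Only the coordinates in $S_i$ move, and each moves by exactly $\Delta$, so
\[
\weight_i(v) = \weight_i(u) + \card{S_i}\cdot \Delta = \weight_i(u) + \delta d \cdot \Delta.
\]
Using $\card{S_i} = \delta d$ from \Cref{clm:random-sets-base-C}, we get $\delta d(10C+1/\delta) = 10C\delta d + d = W + B = P$ and $\delta d \cdot Q/(\delta d) = Q$. Hence
\[
\weight_i(v) = \weight_i(u) + (y-x)P + Q.
\]
Therefore $\weight_i(v) - (y-1)P \equiv \weight_i(u) - (x-1)P \pmod Q$. Because $\ccolor_i(u) = x$, the right-hand side reduced mod $Q$ lies in $[0,B)$, and so does the left-hand side.

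It remains to confirm that this condition pins the color down to $y$ and not to some other value. Since $Q = CP$ and $B \leq P$, the windows $[(z-1)P, (z-1)P + B)$ for $z \in [C]$ are disjoint residue classes mod $Q$, so exactly one color index is satisfied. This gives $\ccolor_i(v) = y$, i.e., $v \in R_{i,y}$. The only delicate step is the bookkeeping identity $\delta d(10C + 1/\delta) = P$ together with the positivity/divisibility check on $\Delta$; once those hold, the rest is immediate.
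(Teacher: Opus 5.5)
Your proposal is correct and follows the same route as the paper's proof: linearity of the weight function gives $\weight_i(v)=\weight_i(u)+(y-x)P+Q$, and since $Q\equiv 0 \pmod{Q}$ this moves the residue window from color $x$ to color $y$. Your additional checks (integrality and positivity of the shift, and disjointness of the $C$ windows modulo $Q=CP$ so the color is unambiguous) are left implicit in the paper, and they are correct.
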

\begin{proof}
    We have $\weight_i(\mathbb 1_{S_i}) = \delta d$ as $|S_i| = \delta d$ and each element of $S_i$ contributes $1$ to $\weight_i(\cdot)$. Then, from linearity of the function $\weight_i(\cdot)$, we have that for all $x, y \in [C]$,
	\begin{align*}
		\weight_i(\mathbb 1_{S_i}((y-x)(10C+1/\delta) + Q/(\delta d))) &= ((y-x) \cdot (10C+1/\delta) + Q/(\delta d)) \cdot \delta d\\
        &= (y-x) \cdot (10C\delta d + d) + Q \\
        &= (y-x) \cdot (W+B) + Q \\
        &= (y-x) \cdot P + Q.
	\end{align*}
	Since $Q \equiv 0 \pmod{Q}$, this means the weight increases by $(y-x) \cdot P \bmod Q$. Thus, for all $u \in [p]^d$ for which $\ccolor_i(u) = x$, we have $\ccolor_i(u + \mathbb 1_{S_i}((y-x)(10C+1/\delta) + Q/(\delta d))) = x + (y-x) = y$, as long as that vector is still in $[p]^d$. Given the definition of $R_{i,y}$ for $x, y \in [C]$, we can conclude the proof. \Qed{clm:base-C-decomp}
	
\end{proof}

The next claim similarly allows us to establish the \textbf{inducedness condition} of each $M_i$. 
\begin{claim}\label{clm:base-C-induced}
For any $i \neq j \in [t]$, any edge of $M_j$ that is between vertices $L_i$ and $R_i$ of $M_i$ is between $L_{i,x}$ and $R_{i,x}$ for some $x \in [C]$. 
\end{claim}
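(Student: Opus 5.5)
Fix $i \neq j$ and an edge $(u,v) \in M_{j,x',y'}$ with $u \in L_{i,x}$ and $v \in R_{i,y}$. The plan is to show $x=y$ by comparing $\weight_i(v)$ with $\weight_i(u)$, exactly as in the RS constructions of~\cite{FischerLNRRS02,GoelKK12}.

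By definition, $v = u + \lambda \cdot \mathbb{1}_{S_j}$, where
\[
\lambda := (y'-x')(10C+1/\delta) + Q/(\delta d).
\]
By linearity of $\weight_i(\cdot)$,
\[
\weight_i(v) - \weight_i(u) = \lambda \cdot \card{S_i \cap S_j}.
\]
The first step is to bound $\lambda$. Since $\card{y'-x'} \leq C-1$ and $Q/(\delta d) = C/\delta + 10C^2$, we get $1/\delta \leq \lambda \leq 2C/\delta + 20C^2 = O(C/\delta)$. In particular $\lambda > 0$; I will take the factor $y'-x'$ to be at least $-(C-1)$ when checking this.

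The second step uses \Cref{clm:random-sets-base-C}, which gives $\card{S_i \cap S_j} \leq 4\delta^2 d$. Combining,
\[
0 \leq \weight_i(v) - \weight_i(u) \leq O(C/\delta)\cdot 4\delta^2 d = O(C\delta d).
\]
I then need to make the constants line up so that this difference is strictly less than $W = 10C\delta d$. A careful computation gives $\lambda \cdot 4\delta^2 d \leq (2C/\delta + 20C^2)\cdot 4\delta^2 d = 8C\delta d + 80C^2\delta^2 d$. Using $\delta < 1/(40C)$, the last term is below $2C\delta d$, so the total is below $10C\delta d = W$.

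The third step turns this into $x = y$. Since $u$ has color $x \neq 0$, its weight modulo $Q$ lies in the window $[(x-1)P, (x-1)P + B)$. The gap from the end of this window to the start of the next colored window is $W$. Adding a nonnegative amount smaller than $W$ therefore either keeps the weight inside the window of color $x$ or moves it into the gap, where the color is $0$. It can never reach a window of color $y \neq x$; this includes the wraparound modulo $Q$, since after the last window there is also a gap of length $W$ before $Q$. Because $v \in R_{i,y}$ has nonzero color $y$, it follows that $y = x$.

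The main obstacle is the constant bookkeeping in the second step, namely ensuring $\lambda\cdot 4\delta^2 d < W$ strictly under the hypothesis $\delta<1/(40C)$. If the constants fail to close, I would fall back on tightening the intersection bound of \Cref{clm:random-sets-base-C}, or note that the weights are computed as exact integers without any modular reduction of coordinates, so that monotonicity holds.
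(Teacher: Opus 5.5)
Your proof is correct and follows the paper's argument. You bound $\weight_i(v)-\weight_i(u)=\lambda\card{S_i\cap S_j}$ by $8C\delta d(1+10C\delta)<W$ using \Cref{clm:random-sets-base-C} and $\delta<1/(40C)$, then use that windows of distinct nonzero colors are separated by gaps of length $W$. The only cosmetic difference is that you use $\lambda>0$ to make the window argument one-sided, whereas the paper bounds the absolute difference; your constants close as written, so the fallback in your last paragraph is unnecessary.
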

\begin{proof}
    For any $i \neq j$, we know $|S_i \cap S_j| \leq 4 \delta^2 d$ by~\Cref{clm:random-sets-base-C}, which means $\weight_i(\mathbb 1_{S_j}) \leq 4 \delta^2 d$, and consequently, for any $x, y \in [C]$,
    \begin{align*}
    \card{\weight_i(\mathbb{1}_{S_j}((y-x)(10C+1/\delta) + Q/(\delta d)))} &\leq \card{y-x} \cdot (10C+1/\delta) \cdot 4 \delta^2 d + Q/(\delta d) \cdot 4\delta^2 d \\
    &\leq 4C\delta d \cdot (1+10C\delta) + 4C\delta d \cdot (1+10C\delta) \\
    &< 5C\delta d + 5C\delta d \\
    &= W,
    \end{align*}
    where we used $\card{y-x} \leq C$, $Q/(\delta d) = C/\delta + 10C^2 = (C/\delta)(1+10C\delta)$, and $1 + 10C\delta < 5/4$ since $\delta < 1/(40C)$.

    Consider an edge $(u,v) \in M_{j,x,y}$ for $x, y \in [C]$. By the definition of edges in $M_j$, the linearity of $\weight_i(\cdot)$, and the last equation, we have \begin{align*}
    	\card{\weight_i(v) - \weight_i(u)} &= \card{\weight_i(\mathbb 1_{S_j}((y-x)(10C+1/\delta) + Q/(\delta d)))} \\
        &< W.
    \end{align*}
        
    For any two vertices $u, v$ that have different nonzero $\ccolor_i(\cdot)$, their $\weight_i(\cdot)$ values differ by at least $W$ by the definition of $\ccolor_i(\cdot)$. This in turn implies that if $\ccolor_i(u) = x$ for some $x \in [C]$, then $\ccolor_i(v)$ is either $x$ or $0$ -- we emphasize that we are determining `color' of these vertices with respect to $i$ and not $j$. This concludes the proof by the way the decompositions were defined. \Qed{clm:base-C-induced}
\end{proof}

Finally, we also need to bound the \textbf{sizes of each of the matchings} $M_{i,x,y}$ for $i \in [t]$ and $x,y \in [C]$. The following two claims address this. 

\begin{claim}\label{clm:base-C-size1}
For any $x \in [C]$, we have $\card{\set{v \in [p]^d \mid \ccolor_i(v) = x}} \geq (n/C) \cdot (1-10C\delta)$.
\end{claim}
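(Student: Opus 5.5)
The plan is to reduce the count to the distribution of $\weight_i(v) \bmod Q$ for a uniformly random $v \in [p]^d$. Only the $\delta d$ coordinates in $S_i$ affect $\weight_i(v)$. The coordinates outside $S_i$ contribute a factor $p^{d-\delta d}$ to every count, so it suffices to analyze the sum of $\delta d$ independent uniform elements of $[p]$, taken mod $Q$. The color class $x$ is exactly the set of residues $[(x-1)P,(x-1)P+B) \bmod Q$, which is an interval of $B$ consecutive residues out of $Q = CP$. If $\weight_i(v) \bmod Q$ were exactly uniform, the fraction of $v$ with color $x$ would be $B/Q = d/(C(d+10C\delta d)) = 1/(C(1+10C\delta)) \geq (1-10C\delta)/C$. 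That is precisely the bound claimed.

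The key observation is that exact uniformity actually holds, by the divisibility choices made on the parameters. We chose $C(1+10C\delta)d \mid p$, i.e., $Q = C(d+10C\delta d) \mid p$. A single coordinate $v_k$ uniform over $[p]$ is therefore exactly uniform modulo $Q$, since each residue class mod $Q$ appears exactly $p/Q$ times in $[p]$. Fix some $k_0 \in S_i$; this is possible since $S_i$ is nonempty because $\delta d \geq 1$. Condition on all coordinates other than $v_{k_0}$. Then $\weight_i(v) = v_{k_0} + (\text{fixed})$, which is uniform mod $Q$. Averaging over the conditioning, $\weight_i(v) \bmod Q$ is exactly uniform over $\mathbb{Z}_Q$.

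Hence $\card{\set{v : \ccolor_i(v)=x}} = n \cdot B/Q = n/(C(1+10C\delta))$. Then apply $1/(1+a) \geq 1-a$ with $a = 10C\delta$ to conclude. The only steps needing care are the divisibility bookkeeping ($B$, $P$, $Q$ integers with $Q \mid p$, and the residue window $[(x-1)P,(x-1)P+B)$ not wrapping, since $(x-1)P + B \leq Q$) and the one-coordinate uniformity argument. I expect no real obstacle. If the divisibility $Q \mid p$ were unavailable, I would instead bound the deviation from uniformity by $Q/p \to 0$, but with the stated parameters the argument is exact.
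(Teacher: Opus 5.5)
Your proof is correct and follows the same route as the paper. Since $Q \mid p$, a single free coordinate makes $\weight_i(v) \bmod Q$ exactly uniform, color $x$ occupies exactly $B$ of the $Q$ residues, and $B/Q = 1/(C(1+10C\delta)) \geq (1-10C\delta)/C$. Your choice of the free coordinate $k_0 \in S_i$ is slightly more careful than the paper's appeal to ``the last coordinate.'' That coordinate need not lie in $S_i$, in which case it would not affect $\weight_i(\cdot)$ at all.
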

\begin{proof}
    By the definitions of $Q, P, W, B$, we find that
    \begin{align*}
        Q &= CP = C(B+W) = C(d+10C\delta d).
    \end{align*}
    Recall that $Q = C(d+10C\delta d) \mid p$ by our choice of parameters, so the last coordinate of a vector determines its $\weight_i(\cdot) \bmod Q$ if the other coordinates are fixed, and each value of the $\weight_i(\cdot) \bmod Q$ is achieved by equally many last coordinates, and thus, equally many vectors. The $\ccolor_i(\cdot)$ of a vector is determined by its $\weight_i(\cdot) \bmod Q$, and exactly $B$ of $Q$ values in $[0,Q)$ cause a vector with that $\weight_i(\cdot) \bmod Q$ to be colored $x$. Thus, the proportion of vectors colored $x$ is \[B/Q = d / (C(d+10C\delta d)) = 1/(C(1+10C\delta)) \geq (1-10C\delta)/C,\] concluding the proof. \Qed{clm:base-C-size1}
\end{proof}

\begin{claim}\label{clm:base-C-size2}
	For any $i \in [t]$ and $x,y \in [C]$, we have $\card{M_{i,x,y}} \geq (n/C) \cdot (1-10C\delta) - o(n)$.
\end{claim}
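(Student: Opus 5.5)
The size of $M_{i,x,y}$ equals the number of $u \in L_{i,x}$ whose shifted image $u + \mathbb{1}_{S_i}\cdot c_{x,y}$ still lies in $[p]^d$, where
\[
c_{x,y} := (y-x)\cdot(10C+1/\delta) + Q/(\delta d).
\]
By definition, $M_{i,x,y}$ pairs each $u \in L_{i,x}$ with exactly one candidate partner $v$. By \Cref{clm:base-C-decomp}, that partner lands in $R_{i,y}$ whenever it is a valid vector. Since $u \mapsto u + \mathbb{1}_{S_i} c_{x,y}$ is injective, $M_{i,x,y}$ is genuinely a matching. Therefore
\[
\card{M_{i,x,y}} \geq \card{L_{i,x}} - \card{\set{u \in [p]^d : u + \mathbb{1}_{S_i} c_{x,y} \notin [p]^d}},
\]
and \Cref{clm:base-C-size1} gives $\card{L_{i,x}} \geq (n/C)(1-10C\delta)$. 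What remains is to show that the set of ``bad'' $u$ has size $o(n)$.

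The shift $c_{x,y}$ can be negative when $y<x$, so a coordinate can leave $[p]$ from either end. Either way, a coordinate $k \in S_i$ is bad only if $u_k$ lies within distance $\card{c_{x,y}}$ of the boundary of $[p]$. Moreover,
\[
\card{c_{x,y}} \leq C(10C+1/\delta) + C/\delta + 10C^2 = O_{C,\delta}(1),
\]
which is a constant independent of $d$. For a uniformly random $u \in [p]^d$, each coordinate is therefore bad with probability at most $\card{c_{x,y}}/p$. A union bound over the $\card{S_i} = \delta d$ coordinates then bounds the fraction of bad vectors by
\[
\frac{\delta d \cdot \card{c_{x,y}}}{p} = \frac{O(\delta d)}{d^2} = O(1/d),
\]
using $p = d^2$. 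This gives $O(n/d) = o(n)$ as $d \to \infty$, which proves the claim.

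The only step needing care is this boundary union bound: it works precisely because $p=d^2$ grows faster than $\card{S_i}\cdot\card{c_{x,y}} = O(d)$. To finish \Cref{prop:base-c-rs}, I would then trim each $M_{i,x,y}$ to exactly $r = n/C - \delta n$ edges. This is possible because $(n/C)(1-10C\delta) - o(n) = n/C - 10\delta n - o(n)$, and the constants can be reparameterized (replace $\delta$ by $\delta/11$) so that this is at least $n/C - \delta n$ for large $d$. Finally, $t = 2^{\Omega(d)}$ with $n = d^{2d}$ gives $\log n = \Theta(d\log d)$, hence $t = n^{\Omega(1/\log\log n)}$.
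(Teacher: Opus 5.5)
Your proof is correct and is essentially the paper's argument: you lower-bound $\card{L_{i,x}}$ via \Cref{clm:base-C-size1} and subtract the vectors pushed outside $[p]^d$. Their number is $O(d\,p^{d-1}) = O(n/d) = o(n)$ by a union bound over coordinates, because the shift is $O_{C,\delta}(1)$ while $p = d^2$; restricting the union bound to $S_i$ rather than all of $[d]$, as you do, is a harmless sharpening.

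One small inaccuracy affects nothing: $c_{x,y}$ is never negative. Indeed $Q/(\delta d) = C/\delta + 10C^2$ exceeds $(C-1)(10C+1/\delta)$ by $10C + 1/\delta$, so only the upper end of $[p]$ can be crossed, and your two-sided boundary treatment is unnecessary.
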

\begin{proof}
    Let $i \in [t]$ and $x, y \in [C]$. Consider a vertex $u$ in $L_{i,x}$ that is not matched in $M_{i,x,y}$. Then, there exists a coordinate $k$ such that $(u + \mathbb{1}_{S_i}((y-x)(10C+1/\delta) + Q/(\delta d)))_k \not \in [p]$. Call this condition being ``outside'' on the $k$th coordinate.

    For each $k \in [d]$, the number of vectors $u$ that are outside on the $k$th coordinate is upper bounded by the number of vectors whose $k$-th coordinate is at most $\card{y-x}(10C+1/\delta) + Q/(\delta d)$ away from either $0$ or $p+1$. This number is bounded by $p^{d-1} \cdot \mathcal O_\delta(1) = \mathcal O_\delta(p^{d-1})$. Summing over $k \in [d]$, the number of total vectors outside some coordinate is $O(p^{d-1} \cdot d) = o(p^d) = o(n)$, concluding the proof. \Qed{clm:base-C-size2}
\end{proof}

Finally, we establish the \textbf{disjointness} of matching-groups.
\begin{claim}\label{clm:base-C-disjoint}
For any $i \neq j \in [t]$, we have $M_i \cap M_j = \emptyset$.
\end{claim}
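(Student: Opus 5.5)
The plan is to show that an edge of $M_i$ can never be an edge of $M_j$ for $i \neq j$, by looking at the difference vector $v-u$ of an edge $(u,v)$. By definition, every edge $(u,v) \in M_{i,x,y}$ satisfies
\[
v - u = \mathbb{1}_{S_i} \cdot c_{x,y}, \qquad c_{x,y} := (y-x)\cdot(10C+1/\delta) + Q/(\delta d).
\]
So the support of $v-u$, namely the set of coordinates on which $u$ and $v$ differ, is exactly $S_i$, provided $c_{x,y} \neq 0$. If the same pair $(u,v)$ also belonged to some $M_{j,x',y'}$, its support would be $S_j$, forcing $S_i = S_j$. This contradicts \Cref{clm:random-sets-base-C}, since $\card{S_i \cap S_j} \leq 4\delta^2 d < \delta d = \card{S_i}$, using $\delta < 1/4$.

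The one step that needs checking is $c_{x,y} \neq 0$; in fact we will show $c_{x,y} > 0$. Since $\card{y-x} \leq C-1$, we have
\[
c_{x,y} \;\geq\; Q/(\delta d) - (C-1)(10C+1/\delta) \;=\; C/\delta + 10C^2 - (C-1)(10C + 1/\delta) \;=\; 1/\delta + 10C \;>\; 0.
\]
Here we used the identity $Q/(\delta d) = C/\delta + 10C^2$ noted in the construction. Hence $v \neq u$ and the support of $v-u$ is precisely $S_i$.

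With this, the argument is complete: any common edge of $M_i$ and $M_j$ would give $S_i = \mathrm{supp}(v-u) = S_j$, which is impossible for $i \neq j$. This is a short bookkeeping step, and the positivity of $c_{x,y}$ above is its only real content.
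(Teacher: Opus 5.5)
Your proposal is correct and follows essentially the same route as the paper: the difference vector $v-u$ is a nonzero multiple of $\mathbb{1}_{S_i}$ (you verify $c_{x,y} \geq 1/\delta + 10C > 0$, matching the paper's comparison $C/\delta + 10C^2 > (C-1)(10C+1/\delta)$). Its support therefore identifies $S_i$, and a shared edge would force $S_i = S_j$, which contradicts $\card{S_i \cap S_j} \leq 4\delta^2 d < \delta d$.
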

\begin{proof}
    For any edge $(u,v) \in M_{i,x,y}$, we have $v - u = \mathbb{1}_{S_i} \cdot ((y-x)(10C+1/\delta) + Q/(\delta d))$, which is a nonzero scalar multiple of $\mathbb{1}_{S_i}$ since $(y-x)(10C+1/\delta) + Q/(\delta d) \neq 0$ (as $\card{y-x} \leq C-1 < C$ and $Q/(\delta d) = C/\delta + 10C^2 > (C-1)(10C + 1/\delta)$). Thus, $v - u$ is nonzero precisely on the coordinates in $S_i$.

    Suppose for contradiction that $(u,v) \in M_i \cap M_j$. Then $v - u$ is nonzero precisely on $S_i$ and also precisely on $S_j$, which implies $S_i = S_j$. But $\card{S_i \cap S_j} \leq 4\delta^2 d < \delta d = \card{S_i}$ by~\Cref{clm:random-sets-base-C}, so $S_i \neq S_j$, a contradiction. \Qed{clm:base-C-disjoint}
\end{proof}

We are now ready to conclude the proof of~\Cref{prop:base-c-rs}. 
\begin{proof}[Proof of~\Cref{prop:base-c-rs}]
	We construct our graphs on $n=p^d = d^{2d}$ vertices on each side of the bipartition as explained above. This implies that for any fixed $\delta \in (0,1/40C)$, we have 
	\[
		d = \Omega\paren{\frac{\log{n}}{\log\log{n}}}, 
	\]
	which in turn means for the choice of $t = 2^{\Omega(d)}$ guaranteed by~\Cref{clm:random-sets-base-C},
	\[
		t = n^{\Omega(1/\log\log n)}. 
	\]
	We now verify the main properties of the graph according to \Cref{def:base-rs}: 
	\begin{itemize}[leftmargin=5pt]
    \item \textnormal{\textbf{Size:}} For $i \in [t], x, y \in [C]$, we get $\card{M_{i,x,y}} \geq (n/C) \cdot (1-10C\delta) - o(n) \geq n/C - 11\delta n$ by~\Cref{clm:base-C-size1,clm:base-C-size2} and since $C$ is a constant.
    \item \textnormal{\textbf{Decomposition:}} For $i \in [t]$ and $x, y \in [C]$, the matching $M_{i,x,y}$ is a matching between $L_{i,x}$ and $R_{i,y}$ by~\Cref{clm:base-C-decomp}.
    \item \textnormal{\textbf{Inducedness:}} For any $i \in [t]$, any edge of $E \setminus M_i$ that is between vertices $L_i$ and $R_i$ of $M_i$ can only be between some pair $L_{i,x}$ and $R_{i,x}$ for some $x \in [C]$ by~\Cref{clm:base-C-induced}.
    \item \textnormal{\textbf{Disjointness:}} For any $i \neq j \in [t]$, we have $M_i \cap M_j = \emptyset$ by~\Cref{clm:base-C-disjoint}.
\end{itemize}
This concludes the proof of~\Cref{prop:base-c-rs} after re-parameterizing $\delta \leftarrow \delta/11$. \Qed{prop:base-c-rs}
\end{proof}


\clearpage

\section{Omitted Proofs}\label{app:omitted}

\subsection{From~\Cref{sec:blueprints}}\label{app:omitted-blueprints}

\subsubsection{Proof of~\Cref{lem:increase-C-blueprint}}\label{app:lem:increase-C-blueprint}

\begin{lemma*}
Let $\bG = (\bL,\bR,\bE,\bw)$ be a proper blueprint with parameters $(P,C,B_L,B_R,\sizes_L,\sizes_R)$ and $k \geq 1$ be an integer.
Then, there exists a proper blueprint $\hbG=(\hbL,\hbR,\hbE,\hbw)$ with parameters $(P,k \cdot C,B_L,B_R,\sizes_L,\sizes_R)$ such that $value(\hbG) = value(\bG)$.
\end{lemma*}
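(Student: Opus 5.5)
The plan is to blow up each coordinate by a factor of $k$: identify $[kC]$ with $[C]\times[k]$ through the map $a \mapsto (\lceil a/k\rceil, a \bmod k)$, written $a \leftrightarrow (\pi(a),\sigma(a))$. For a tuple $X\in[kC]^P$, let $\pi(X)\in[C]^P$ and $\sigma(X)\in[k]^P$ be the coordinatewise projections. Keep the same blocks and relative sizes.

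For every edge $\be=(\bLL{b}(x),\bRR{b'}(y))\in\bEE{p}$ and every $\tau\in[k]^P$, add to $\hbEE{p}$ the edge between $\hbLL{b}(X)$ and $\hbRR{b'}(Y)$, where $\pi(X)=x$, $\pi(Y)=y$ and $\sigma(X)=\sigma(Y)=\tau$. Give it weight $\hbw=\bw(\be)$. In other words, $\hbG$ is $k^P$ disjoint copies of $\bG$, indexed by the ``fine'' tuple $\tau$, with edges joining only vertices that carry the same $\tau$.

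\textbf{Matching constraint and value.} Vertex $\hbLL{b}(X)$ has exactly the incident edges (with the same weights) that $\bLL{b}(\pi(X))$ has in $\bG$. So it is either unmatched or fractionally matched to $\sizes_L(b)$. The total edge weight grows by a factor of $k^P$, and so does $(kC)^P$ relative to $C^P$. Hence $value(\hbG)=value(\bG)$.

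\textbf{Ban constraint.} This is the step to check carefully. An edge in $\hbEE{p}$ between $\hbLL{b}(X)$ and $\hbRR{b'}(Y)$ bans every pair $\hbLL{b}(X_{<p}\conc Z)$, $\hbRR{b'}(Y_{<p}\conc Z)$ with $Z\in[kC]^{P-p+1}$. Suppose both vertices had edges. Then $\bLL{b}(\pi(X)_{<p}\conc\pi(Z))$ and $\bRR{b'}(\pi(Y)_{<p}\conc\pi(Z))$ both have edges in $\bG$. But these two vertices are banned together by the original edge $\be=(\bLL{b}(\pi(X)),\bRR{b'}(\pi(Y)))\in\bEE{p}$, taking $z=\pi(Z)$. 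This is a contradiction.

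The argument uses only that vertex degree depends on $\pi$ alone, and that $\pi$ commutes with prefixes and concatenation. Because of this, the $\sigma$-coordinates never matter for the ban constraint: the construction even ignores the requirement $\sigma(X)_{<p}=\sigma(Y)_{<p}$. No obstacle is expected beyond stating the bijection cleanly, and rationality of all parameters is preserved trivially.
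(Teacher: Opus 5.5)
Your proposal is correct and is essentially the paper's proof. The paper likewise identifies $[kC]$ with $[k]\times[C]$, and for each edge of $\bEE{p}$ and each $i\in[k]^P$ it adds the copy whose fine coordinates agree on both sides, keeping the same weight. It then verifies the matching constraint, the value, and the ban constraint exactly as you do, using that a vertex has an edge iff its coarse projection does. The one cosmetic nit is that $a \bmod k$ ranges over $\{0,\ldots,k-1\}$ rather than $[k]$, which changes nothing.
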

\begin{proof}
	In the following, we use integers in $[k \cdot C]$ and pairs in $[k] \times [C]$ interchangeably for the ease of notation (using any arbitrary bijection). The parameters of $\hbG$ are already chosen in the lemma statement, 
	which determines the vertices of $\hbLL{b}$ and $\hbRR{b'}$ for all $b \in [B_L]$ and $b' \in [B_R]$. What remains in the definition of $\hbG$ is to specify the edges $\hbE$. For every $p \in [P]$, define
	\begin{align*}
		\hbEE{p} &:= ~\text{all edges $\paren{\hbLL{b}((i_1,x_1), \ldots, (i_P,x_P)), \hbRR{b'}((i_1,y_1), \ldots, (i_P,y_P))}$} \\
		& \qquad \text{where} \qquad i \in [k]^P, \quad x, y \in [C]^P, \quad b \in [B_L], \quad b' \in [B_R],~\text{and} \quad (\bLL{b}(x), \bRR{b'}(y)) \in \bEE{p}.
	\end{align*}
    For each edge $\be$ of $\bE$, there are $k^P$ edges $\hbe \in \hbE$ that are added as a result of $\be$ in the definition above. For each such $\hbe$, we set $\hbw(\hbe) = \bw(\be)$. When comparing the $value(\hbG)$ with $value(\bG)$ in \Cref{def:blueprint-value-approx}, both the numerator and denominator of $value(\hbG)$ are $k^P$ times those of $value(\bG)$, so $value(\hbG) = value(\bG)$.

	It thus remains to verify that $\hbG$ is a proper blueprint.
	\begin{itemize}
		\item \textnormal{\textbf{Matching constraint:}} We only prove the constraint holds for any vertex in $\hbL$, the same constraint hold for vertices in $\hbR$ by symmetry. Consider vertex $\bv = \hbLL{b}((i_1, x_1), \ldots, (i_P, x_P))$, $b \in [B_L]$, $i \in [k]^P$, $x \in [C]^P$ that has at least one incident edge in $\hbG$. Each edge $\hbe$ incident to $\hbLL{b}((i_1, x_1), \ldots, (i_P, x_P))$ corresponds to a unique edge $\be$ incident to $\bLL{b}(x)$ in $\bG$, and $\be$ satisfies $\hbw(\hbe) = \bw(\be)$. By the matching constraints of $\bG$,
        \[ \sizes_L = \sum_{\be \ni \bLL{b}(x)} \bw(\be) = \sum_{\hbe \ni \bv} \hbw(\hbe), \]
        as required of the matching constraints of $\hbG$.
		\item \textnormal{\textbf{Ban constraints:}} Consider any edge $\hbe \in \hbEE{p}$ for $p \in [P]$. Suppose
		\[
		\hbe = \paren{\hbLL{b}((i_1,x_1), \ldots, (i_P,x_P)), \hbRR{b'}((i_1,y_1), \ldots, (i_P,y_P))},
		\]
		where $b \in [B_L]$, $b' \in [B_R]$, $i \in [k]^P$, and $x, y \in [C]^P$. The ban constraints induced by $\hbe$ is that for any choices of $(j_p, z_p), \ldots, (j_P, z_P) \in [k] \times [C]$, the vertices
		\begin{align*}
			&\hbLL{b}((i_1, x_1), \ldots, (i_{p-1}, x_{p-1}), (j_p, z_p), \ldots, (j_P, z_P)) \text{ and } \\
			&\hbRR{b'}((i_1, y_1), \ldots, (i_{p-1}, y_{p-1}), (j_p, z_p), \ldots, (j_P, z_P))
		\end{align*}
		are banned together. The former vertex has an incident edge in $\hbG$ if and only if $\bLL{b}(x_{<p} \conc (z_p, \ldots, z_P))$ has an incident edge in $\bG$, and the latter vertex has an incident edge in
		$\hbG$ if and only if $\bRR{b'}(y_{<p} \conc (z_p, \ldots, z_P))$ has an incident edge in $\bG$. These cannot happen simultaneously due to the ban constraint in $\bG$ from $(\bLL{b}(x), \bRR{b'}(y)) \in \bEE{p}$.
	\end{itemize}
	This proves that $\hbG$ is a proper blueprint, concluding the proof.
    \Qed{lem:increase-C-blueprint}
\end{proof}

\subsubsection{Proof of~\Cref{lem:integer-weights}}\label{app:lem:integer-weights}

\begin{lemma*}
    Let $\bG = (\bL,\bR,\bE,\bw)$ be a proper blueprint with parameters $(P,C,B_L,B_R,\sizes_L,\sizes_R)$ and $k \geq 1$ be a rational. Then, the blueprint $\hbG$, obtained from $\bG$ by changing nothing except scaling $\bw$, $\sizes_L$, and $\sizes_R$ each by $k$, is proper and satisfies $value(\hbG) = value(\bG)$.
\end{lemma*}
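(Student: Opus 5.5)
The plan is a direct verification: $\hbG$ keeps the same vertex set, blocks, edge partition $\hbEE{1} \sqcup \cdots \sqcup \hbEE{P}$ and edge set as $\bG$. Only the weights change, to $\hbw(\be) = k \cdot \bw(\be)$, and the relative sizes, to $\hsizes_L(b) = k \cdot \sizes_L(b)$ and $\hsizes_R(b') = k \cdot \sizes_R(b')$. All these quantities stay rational since $k$ is rational, so $\hbG$ is a valid blueprint per \Cref{def:blueprint}. It then remains to check the two constraints of \Cref{def:proper-blueprint} and compare values.

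For the \textbf{matching constraint}, take any vertex $\bu \in \hbLL{b}$. Its set of incident edges is unchanged, so it is unmatched in $\hbG$ exactly when it is unmatched in $\bG$. Otherwise, linearity gives
\[
\sum_{\be \ni \bu} \hbw(\be) = k \sum_{\be \ni \bu} \bw(\be) = k \cdot \sizes_L(b) = \hsizes_L(b).
\]
The same computation applies to vertices of $\hbR$. For the \textbf{ban constraints}, the defining condition depends only on the edge sets $\bEE{p}$ and on which vertices have incident edges. It makes no reference to $\bw$ or to the relative sizes, so it holds in $\hbG$ verbatim because it holds in $\bG$.

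For the value, substitute into \Cref{def:blueprint-value-approx}. The numerator $2\sum_{\be}\hbw(\be)$ is $k$ times the original numerator. The denominator $C^P\bigl(\sum_b \hsizes_L(b) + \sum_{b'} \hsizes_R(b')\bigr)$ is also $k$ times the original denominator, since $C$ and $P$ are unchanged. Hence $value(\hbG) = value(\bG)$.

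There is no real obstacle here; the only point requiring care is to observe that ban constraints are insensitive to weights. The hypothesis $k \geq 1$ is not actually needed, since any positive rational $k$ works.
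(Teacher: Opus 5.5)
Your proposal is correct and follows the paper's proof essentially step for step: ban constraints are unaffected because they never involve $\bw$ or the relative sizes, the matching constraint follows by factoring out $k$, and the value is unchanged since numerator and denominator both scale by $k$. Your side remark that any positive rational $k$ suffices is also right.
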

\begin{proof}
    Notice that the vertices and edges of $\hbG$ are the exact same as those of $\bG$. As the ban constraints have no relation with $\bw$, $\sizes_L$, or $\sizes_R$, we know that they are satisfied for $\hbG$. It only remains to check the matching constraint for $\hbG$. Consider any $b \in [B_l]$ (resp. $b' \in [B_R]$) and any vertex $\bu \in \bLL{b}$ (resp. $\bv \in \bRR{b'})$. We require that either $\bu$ (resp. $\bv$) is not incident on an edge of $\bE$ or that $\sum_{\be \ni \bu} k \cdot \bw(\be) = k \cdot \sizes_L(b)$ (resp. $\sum_{\be \ni \bv} k \cdot \bw(\be) = k \cdot \sizes_R(b')$). After cancelling $k$ from both sides of the equation, this condition becomes identical to the matching constraint of $\bG$, which we know holds since $\bG$ is assumed to be proper.

    Finally, the value of $\hbG$ is the same as the value of $\bG$, as we are scaling both the numerator and denominator in the value formula from \Cref{def:blueprint-value-approx} by $k$.
    \Qed{lem:integer-weights}
    
\end{proof}

\subsubsection{Proof of~\Cref{lem:remove-sizes-blueprint}}\label{app:lem:remove-sizes-blueprint}

\begin{lemma*}
Let $\bG = (\bL,\bR,\bE,\bw)$ be a proper blueprint with parameters $(P,C,B_L,B_R,\sizes_L,\sizes_R)$.
Then, there exists a proper blueprint $\hbG=(\hbL,\hbR,\hbE,\hbw)$ with parameters $(P,C,\hB_L,\hB_R,\hsizes_L,\hsizes_R)$, where $\hsizes_L = (1, \ldots, 1)~,~ \hsizes_R = (1, \ldots, 1)$, and $\hbw(\hbe) = 1$ for all edges $\hbe \in \hbE$, such that $value(\hbG) = value(\bG)$.
\end{lemma*}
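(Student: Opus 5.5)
First, I would normalize the weights. By \Cref{lem:integer-weights}, scaling all of $\bw$, $\sizes_L$, $\sizes_R$ by the common denominator makes every relative size and every edge weight a nonnegative integer, without changing the value or properness. Define $S := \max$ of all relative sizes. The plan is to replace each block $\bLL{b}$ by $\sizes_L(b)$ copies of unit-size blocks, and likewise each $\bRR{b'}$ by $\sizes_R(b')$ copies. In the new blueprint, the left blocks are $\hbLL{(b,a)}$ for $a \in [\sizes_L(b)]$, and the right blocks are $\hbRR{(b',a')}$ for $a' \in [\sizes_R(b')]$. Every block keeps the label set $[C]^P$. Hence $\hB_L = \sum_b \sizes_L(b)$ and $\hB_R = \sum_{b'} \sizes_R(b')$, with all sizes equal to $1$. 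The denominator of $value$ then scales exactly as before, since $\sum_b \sizes_L(b)$ is preserved as a count of unit blocks.

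The edges come from a decomposition of weights. Fix a block pair $(b,b')$ and consider a vertex $\bLL{b}(x)$ incident to edges. Its integer weights $\bw(\be)$ sum to $\sizes_L(b)$. For each edge $\be=(\bLL{b}(x),\bRR{b'}(y)) \in \bEE{p}$ of integer weight $w$, I would place $w$ unit edges in $\hbEE{p}$. Each one joins some copy $\hbLL{(b,a)}(x)$ to some copy $\hbRR{(b',a')}(y)$. The copies must be assigned so that every copy of a matched vertex receives exactly one unit edge.

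I would choose the assignment as a bipartite edge-coloring. Build the multigraph $H$ whose vertices are the blueprint vertices, with $\be$ appearing with multiplicity $\bw(\be)$. A matched left vertex has degree $\sizes_L(b)$ and a matched right vertex has degree $\sizes_R(b')$. I need to split $H$ so that the copy index $a$ on the left side and $a'$ on the right side each give a perfect distribution. Concretely, I want to label the edge-slots at each left vertex $\bLL{b}(x)$ with distinct $a \in [\sizes_L(b)]$, and independently label the slots at each right endpoint with distinct $a'\in[\sizes_R(b')]$. That is always possible, because the labels at the two endpoints are chosen independently; no König argument is even needed. The unit edge then goes from copy $a$ to copy $a'$. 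Consequently each copy of a matched vertex gets exactly one unit edge, and copies of unmatched vertices get none, which verifies the matching constraint with $\hbw \equiv 1$. The total weight is preserved, so $value(\hbG)=value(\bG)$.

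The main obstacle is the ban constraint, so I must check that copying preserves bans. A unit edge between $\hbLL{(b,a)}(x)$ and $\hbRR{(b',a')}(y)$ in $\hbEE{p}$ bans the pairs $\hbLL{(b,a)}(x_{<p}\conc z)$ and $\hbRR{(b',a')}(y_{<p}\conc z)$. The key observation is that a copy $\hbLL{(b,a)}(u)$ has an edge if and only if $\bLL{b}(u)$ has an edge in $\bG$: every copy of a matched vertex receives an edge, because its slots are all filled. The same holds on the right. So the ban reduces verbatim to the ban in $\bG$ induced by $\be \in \bEE{p}$, which holds by properness. I also need the copies to share the same label set, so that the prefixes $x_{<p}$ and $y_{<p}$ make sense within the copied blocks; this is exactly why I keep $[C]^P$ unchanged.
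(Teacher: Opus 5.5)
Your proposal is correct and follows essentially the same route as the paper: integralize via \Cref{lem:integer-weights}, replace each block of relative size $s$ by $s$ unit-size copies, and independently assign distinct copy indices to the weight-slots at each endpoint (the paper's disjoint sets $I_L(\be)$, $I_R(\be)$, joined by an arbitrary perfect matching $I(\be)$). You then reduce each ban to the corresponding ban in $\bG$, since a copy has an edge only if its original vertex does; the quantity $S$ you define is never used, and the ban check needs only the ``only if'' direction of your ``iff''.
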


\begin{proof}
    Without loss of generality, assume that $\sizes_L$, $\sizes_R$, and $\bw$ are integral, as otherwise we can scale them using \Cref{lem:integer-weights}. The parameters of $\hbG$ that are not yet specified are $\hB_L$ and $\hB_R$; set them to be $\hB_L = \sum_{b \in [B_L]} \sizes_L(b)$ and $\hB_R = \sum_{b' \in [B_r]} \sizes_R(b')$. The idea is to correspond each block of $\bG$ with relative size $s$ to $s$ different blocks of $\hbG$. We will use integers in $[\hB_L]$ and pairs in $\set{(b, i) : b \in [B], i \in [\sizes_L(b)]}$ interchangeably for the ease of notation (using any arbitrary bijection). The same goes with integers in $[\hB_R]$ and $\set{(b', i') : b' \in [B], i' \in [\sizes_R(b')]}$. Under this notation, the block $\hbLL{b, i}$ (resp. $\hbRR{b', i'}$) will be the $i$-th (resp. $i'$-th) block corresponding to $\bLL{b}$ (resp. $\bRR{b'}$).

    Consider $b \in [B_L]$ and $x \in [C]^P$. For each edge $\be$ incident to $\bLL{b}(x)$, we allocate it a subset $I_L(\be)$ of size $\bw(\be)$ from $[\sizes_L(b)]$ in such a way that across all the edges $\be$ incident to $\bLL{b}(x)$, the sets $I_L(\be)$ are pairwise disjoint and partition $[\sizes_L(b)]$. This can always be done as the matching constraint of $\bG$ ensures $\sum_{\be \ni \bLL{b}(x)} \bw(\be) = \sizes_L(b)$. We will similarly define $I_R(\be)$ for each $\be$ incident to a vertex $\bRR{b'}(y)$, $b' \in [B_R]$, $y \in [C]^P$. Finally, let $I(\be)$ be an arbitrary perfect matching between $I_L(\be)$ and $I_R(\be)$, i.e., $I(\be)$ is a set of $|I_L(\be)| = |I_R(\be)|$ pairs in $I_L(\be) \times I_R(\be)$, where each element of either $I_L(\be)$ or $I_R(\be)$ is in at least one pair.

    We are now ready to define the edges of $\hbG$. For every $p \in [P]$ define
    \begin{align*}
		&& \hbEE{p} := \quad \text{all edges} \quad & (\hbLL{b, i}(x), \hbRR{b', i'}(y)) \\
		&& \text{where} \quad & x, y \in [C]^P, \quad b \in [B_L], \quad b' \in [B_R], \\
        && & \be = (\bLL{b}(x), \bRR{b'}(y)) \in \bEE{p}, \quad (i, i') \in I(\be).
	\end{align*}
    By \Cref{def:blueprint-value-approx}, the value of $\hbG$ is
    \[ value(\hbG) = \frac{2 \sum_{\hbe \in \hbE} \hbw(\hbe)}{C^P \left( \sum_{b \in \hB_L} \hsizes_L(b) + \sum_{b' \in \hB_R} \hsizes_R(b') \right)}. \]
    The numerator works out to be $2\sum_{\be \in \bE} \bw(\be)$ as we add exactly $\bw(\be)$ edges in $\hbG$ for each edge $\be$ in $\bG$. The denominator is exactly $C^P \paren{\sum_{b \in [B_L]} \sizes_L(b) + \sum_{b' \in [B_R]} \sizes_R(b')}$ by the way we defined $\hB_L$ and $\hB_R$. Thus, the value of $\hbG$ is exactly that of $\bG$, and it only remains to verify that $\hbG$ is proper.

    \begin{itemize}
        \item \textnormal{\textbf{Matching constraint:}} As $\hsizes_L$, $\hsizes_R$ are all ones and $\hbw(\hbe) = 1$ for all $\hbe \in \hbE$, showing that $\hbG$ satisfies the matching constraint is equivalent to showing $\hbE$ forms a matching in the standard sense. For each $\be \in \bE$, let $\hbE_\be$ be the set of edges in $\hbG$ that are added as a result of $\be$ in the definition above. Certainly $\hbE_\be$ is a matching. Moreover, $I_L(\be_1)$ and $I_L(\be_2)$ are disjoint and $I_R(\be_1)$ and $I_R(\be_2)$ are disjoint for any distinct $\be_1, \be_2 \in \bE$, which, combined with the matching constraint on $\bG$, means that $\hbE_{\be_1}$ and $\hbE_{\be_2}$ share no endpoints. Hence, $\hbE$ is a union of vertex disjoint matchings, and thus it itself is a matching.
        \item \textnormal{\textbf{Ban constraints:}} Consider any edge $\hbe \in \hbEE{p}$ for $p \in [P]$. Suppose
        \[ \hbe = (\hbLL{b, i}(x), \hbRR{b', i'}(y)), \]
        where $b \in [B_L], b' \in [B_R], x, y \in [C]^P$, and $(i, i') \in I(\be)$ where $\be = (\bLL{b}(x), \bRR{b'}(y))\in \bEE{p}$. The ban constraints induced by $\hbe$ is that for any choices of $z_p, \ldots, z_P \in [C]$, the vertices
        \[ \hbLL{b, i}(x_1, \ldots, x_{p - 1}, z_p, \ldots, z_P) \quad \text{ and } \quad \hbRR{b', i'}(y_1, \ldots, y_{p - 1}, z_p, \ldots, z_P) \]
        are banned together. The former vertex can have an incident edge in $\hbG$ only if $\bLL{b}(x_{<p} \conc (z_p, \ldots, z_P))$ has an incident edge in $\bG$, and the latter vertex can have an incident edge in
		$\hbG$ only if $\bRR{b'}(y_{<p} \conc (z_p, \ldots, z_P))$ has an incident edge in $\bG$. These cannot happen simultaneously due to the ban constraint in $\bG$ from $(\bLL{b}(x), \bRR{b'}(y)) \in \bEE{p}$.
    \end{itemize}
    This proves that $\hbG$ is a proper blueprint, as desired.
    \Qed{lem:remove-sizes-blueprint}
    
\end{proof}

\subsection{From~\Cref{sec:construct-blueprints}}\label{app:omitted-construct-blueprints}

\subsubsection{Proof of~\Cref{clm:2-sqrt2-value}}\label{app:clm:2-sqrt2-value}
\begin{claim*}[``Value of $\hbG$ increases'']
	\[
		value(\hbG) > value(\bG) + \delta^3. 
	\]
\end{claim*}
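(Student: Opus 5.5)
The plan is a direct computation: find $value(\hbG)$ in closed form, reduce the target inequality to the quadratic $value(\bG)^2-4\,value(\bG)+2$ plus error terms, and then fix $\eps$ as a function of $\delta$.

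\textbf{Step 1: closed form for the value.} Write $v:=value(\bG)$. Since every edge has two endpoints, $2\sum_{\be\in\hbE}\hbw(\be)$ equals the total matched mass $\sum_{\bu}\sum_{\be\ni\bu}\hbw(\be)$ summed over all vertices of $\hbL\sqcup\hbR$. By \Cref{clm:2-sqrt2-matching-constraint}, each vertex carrying an edge contributes exactly the relative size of its block. So it suffices to count the matched vertices of each type, as listed in that proof; on the $\hbL$ side:
\begin{itemize}
\item Old vertices of $\hbLL{1}$: there are $v(1-\eps)C^{P+1}$ of them, each contributing $1$.
\item New vertices of $\hbLL{1}$: there are $\eps C^{P+1}$ of them, each contributing $1$.
\item $\Free_L$: there are $(1-v)(1-\eps)C^{P+1}$ of them, each contributing $s$.
\end{itemize}
By \Cref{eq:2-sqrt2-s}, the last contribution equals $\eps C^{P+1}$. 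The $\hbR$ side is identical by symmetry. The denominator in \Cref{def:blueprint-value-approx} is $C^{P+1}\cdot 2(1+s)$. Hence
\[
value(\hbG)=\frac{(1-\eps)v+2\eps}{1+s}.
\]

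\textbf{Step 2: reduce to the quadratic.} The target $value(\hbG)>v+\delta^3$ becomes $\eps(2-v)-vs>\delta^3(1+s)$. Substituting $s=\eps/((1-\eps)(1-v))$ and multiplying by $(1-\eps)(1-v)/\eps>0$, it is equivalent to
\[
(v^2-4v+2)-\eps(2-v)(1-v) \;>\; \frac{\delta^3}{\eps}\,(1+s)(1-\eps)(1-v).
\]

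\textbf{Step 3: bound each side.} We have $v^2-4v+2=(2-\sqrt2-v)(2+\sqrt2-v)$. Under the hypothesis $v<2-\sqrt2-\delta$ together with the standing assumption $v\ge 1/2$, this is at least $\delta\cdot(2+\sqrt2-(2-\sqrt2))=2\sqrt2\,\delta$. The subtracted term $\eps(2-v)(1-v)$ is at most $\eps$. On the right, $1-v\le 1/2$, and $1+s\le 2$ once $\eps$ is small, so the right-hand side is at most $\delta^3/\eps$.

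\textbf{Step 4: choose $\eps$.} Take $\eps:=1/\lceil 1/\delta^2\rceil$, so $1/\eps$ is an integer. Then $\eps\le\delta^2$, and $1/\eps\le 1/\delta^2+1$ gives $\delta^3/\eps\le\delta+\delta^3$. The required inequality reduces to $2\sqrt2\,\delta-\delta^2>\delta+\delta^3$, which holds for $\delta\in(0,1/10)$. The smallness $s\le 1$ also holds, since $s\le 2\eps/(1-\eps)$.

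The only real obstacle is the bookkeeping in Step 1, namely getting the mass of $\Free_L$ and $\Free_R$ to come out to exactly $\eps C^{P+1}$ each. The rest is elementary estimation, and the slack is generous because the quadratic gap is linear in $\delta$ while the target improvement is only $\delta^3$.
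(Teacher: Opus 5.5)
Your proposal is correct and follows essentially the paper's route. You derive the same closed form $\frac{(1-\eps)v+2\eps}{1+s}$, reduce to the same quadratic $v^2-4v+2\ge 2\sqrt2\,\delta$ against error terms, and only choose $\eps\approx\delta^2$ (with $1/\eps$ integral) where the paper takes $\eps=\sqrt2\,\delta/4$. One small slip: $v\ge 1/2$ gives $1-v\le 1/2$, so the bound $s\le 2\eps/(1-\eps)$ goes the wrong way; use $1-v>\sqrt2-1$ instead to get $s<(\sqrt2+1)\eps/(1-\eps)\le 1$, which is all you need.
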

\begin{proof}
 We have, 
\begin{align*}
	value(\hbG) &= \frac{2 \cdot \sum_{\be \in \hbE} \hbw(\be)}{C^{P+1} \cdot \paren{(1+s)+(1+s)}} \tag{by the definition of $value(\cdot)$ in~\Cref{def:blueprint-value-approx}}\\
	&= \frac{\sum_{\be:~\text{Type 1 edges}} \hbw(\be) + \sum_{\be:~\text{Type 2 edges}} \hbw(\be)}{C^{P+1} \cdot (1+s)} \tag{by partitioning between different types of added edges}\\
	&= \frac{\card{\Old} \cdot \card{\bE} + \card{\set{\hbLL{1}(i,*): i \in \New}} + \card{\set{\hbRR{1}(i,*): i \in \New}}}{C^{P+1} \cdot (1+s)}, 
	\intertext{as we have added $\card{\bE}$ many edges between each $\hbLL{1}(i,*,\ldots,*),\hbRR{1}(i,*,\ldots,*)$ for $i \in \Old$
	and further matched all vertices in these sets whenever $i \in \New$. These were all edges in $\hbG$. We can thus continue and have}
	&=\frac{\card{\Old} \cdot value{(\bG)} \cdot C^P + 2 \cdot \card{\New} \cdot C^P}{C^{P+1} \cdot (1+s)} \tag{by the definition of $value(\bG)$ and sizes of the involved sets}\\
	&=\frac{(1-\eps) \cdot value{(\bG)} + 2 \cdot \eps}{(1+s)} \tag{as $\card{\Old} = (1-\eps) \cdot C$ and $\card{\New} = \eps \cdot C$}
\end{align*}
	We would like to have $value(\hbG) > value(\bG) + \delta^3$, which given the above, means we need to prove 
	\begin{align*}
		(1-\eps) \cdot value{(\bG)} + 2 \cdot \eps &> value(\bG) \cdot (1+s) + \delta^3 \cdot (1+s). 
	\end{align*}
	The rest of the proof is just calculations. 
	By plugging in the value of $s$ in~\Cref{eq:2-sqrt2-s}, and upper bounding the $\delta^2$ term (since $value(\bG)$ is bounded away from $0$ and $1$), we need to show 
	\begin{align*}
		(1-\eps) \cdot value{(\bG)} +2 \cdot \eps \geq value(\bG) +\frac{\eps \cdot value(\bG)}{(1-\eps) \cdot (1-value(\bG))} + 2\delta^3,
	\end{align*} 
	for small enough $\eps > 0$. This in turn simplifies to, after moving the first term of RHS to LHS and dividing by $\eps$, 
	\begin{align*}
		 -value{(\bG)} + 2 \geq \frac{value(\bG)}{(1-\eps) \cdot (1-value(\bG))} + \frac{2\delta^3}{\eps}.
	\end{align*}
	Given $value(\bG) < 2-\sqrt{2} < 2/3$ for small enough $\eps > 0$, the above follows from 
	\begin{align*}
		 -value{(\bG)} + 2 \geq \frac{value(\bG)}{(1-value(\bG))} + 4\eps + \frac{2\delta^3}{\eps},
	\end{align*}
	which in turn follows from 
	\begin{align*}
		- value{(\bG)} + value(\bG)^2 + 2-2 \cdot value(\bG) \geq value(\bG) + 4\eps + \frac{2\delta^3}{\eps},
	\end{align*}
	which is the same as 
	\begin{align*}
		value(\bG)^2 - 4\cdot value(\bG) + 2 \geq 4\eps + \frac{2\delta^3}{\eps}. 
	\end{align*}
	Considering the LHS as quadratic polynomial $x^2-4x+2$, we get its roots are $(2-\sqrt{2},2+\sqrt{2})$, 
	and the polynomial is non-negative for $x \leq 2-\sqrt{2}$ and when $x \leq 2-\sqrt{2}-\delta$, 
	we have that $x^2-4x+2 \geq 2\sqrt{2}\delta$. Thus, whenever $value(\bG) \leq 2-\sqrt{2}-\delta$, as in the statement of~\Cref{lem:2-sqrt2-amplification}, 
	\[
		value(\bG)^2 - 4\cdot value(\bG) + 2 \geq 2\sqrt{2}\delta,
	\]
	and thus we only need to have 
	\[
		2\sqrt{2}\delta \geq 4\eps + \frac{2\delta^3}{\eps}, 
	\]
	to conclude the proof. This can be achieved by setting $\eps = \sqrt{2}\delta/4$ and picking $\delta$ to be sufficiently small. Thus, 
	we proved that if $value(\bG) \leq 2-\sqrt{2}-\delta$, then $value(\hbG) > value(\bG)+\delta^3$. \Qed{clm:2-sqrt2-value}
	
\end{proof}

\subsubsection{Proof of~\Cref{obs:main-disjoint}}\label{app:obs:main-disjoint}

\begin{observation*}
	The sets $\bAL_i,\bAR_i,\bBL_i,\bBR_i$ for $i \in [K]$, and $\bCL,\bCR,\bDL,\bDR,\bOL,\bOR$ are all pairwise disjoint. 
\end{observation*}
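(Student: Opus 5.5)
The plan is to distinguish the sets by block first and then by coordinate conditions inside a block. By definition, $\bAL_i \subseteq \hbLL{i}$ for $i \in [K]$; $\bBL_i, \bCL \subseteq \hbLL{K+1}$; and $\bDL, \bOL \subseteq \hbLL{K+2}$. The same holds on the right side with $\hbRR{\cdot}$. Left-side sets lie in $\hbL$ and right-side sets lie in $\hbR$, which are disjoint. Distinct blocks are also disjoint. So it only remains to separate sets that share a block: the $\bBL_i$ among themselves and from $\bCL$ inside $\hbLL{K+1}$, and $\bDL$ from $\bOL$ inside $\hbLL{K+2}$. The sets $\bAL_i$ for different $i$ live in different blocks, so they are disjoint automatically. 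The right side is symmetric.

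Inside $\hbLL{K+1}$, I will use the fact that $\Top_j$ and $\Bot_j$ partition $[C]$ for each $j$. Take $i<j$. A vertex of $\bBL_i$ has its $i$-th coordinate in $\Bot_i$. A vertex of $\bBL_j$ has its $i$-th coordinate in $\Top_i$, because its first $j-1$ coordinates lie in $\Top_1,\ldots,\Top_{j-1}$. Hence $\bBL_i \cap \bBL_j = \emptyset$. Similarly, every vertex of $\bCL$ has its $i$-th coordinate in $\Top_i$ for all $i \in [K]$, while every vertex of $\bBL_i$ has its $i$-th coordinate in $\Bot_i$. So $\bCL$ is disjoint from every $\bBL_i$.

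Inside $\hbLL{K+2}$, the coordinate that separates the sets is coordinate $K+1$. Vertices of $\bDL$ have this coordinate in $\New$. Vertices of $\bOL$ come from type $K+2$ edges, which only use $z \in \Old$ in position $K+1$. Since $\Old \cap \New = \emptyset$, the two sets are disjoint.

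There is no real obstacle here. The one point needing care is reading off from the construction which block each set lives in, and checking that the index conventions (the positions $i$ and $K+1$) are applied consistently.
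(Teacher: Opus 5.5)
Your proposal is correct and follows the same route as the paper. Like the paper, you first separate the sets by side and by block. Inside $\hbLL{K+1}$ you separate $\bBL_i$, $\bBL_j$ and $\bCL$ using the $\Top_i$/$\Bot_i$ split of the $i$-th coordinate, and inside $\hbLL{K+2}$ you separate $\bDL$ and $\bOL$ using the $\Old$/$\New$ split of coordinate $K+1$. Your explicit choice of the smaller index $i$ as the separating coordinate for $\bBL_i$ versus $\bBL_j$ is slightly cleaner than the paper's wording, which says the first $i-1$ indices lie in ``$\Top_i$''.
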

\begin{proof}
	We only prove the bounds for the sets in $\hbL$; the other subsets in $\hbR$ are disjoint from these obviously and the rest can be proven by symmetry. 
	
	$\bAL_i$'s are disjoint from each other and the rest as they are subsets of $\hbLL{i}$, respectively, and no other sets uses these blocks. 
	
	$\bBL_i$'s as well as $\bCL$ are subsets of $\hbLL{K+1}$. Vertices in $\bBL_i$ have their first $i-1$ indices in $\Top_i$ but their $i$-th index in $\Bot_i$ and 
	so are disjoint from $\bBL_j$ for $j \neq i$. They are also disjoint from $\bCL$ because the latter vertices have their first $K$ indices in $\Top_1,\ldots,\Top_K$. 
	
	Finally, $\bDL$ as well as $\bOL$ are subsets of $\hbLL{K+2}$. They are pairwise disjoint because vertices in $\bDL$ have 
	their $(K+1)$-th index in $\New$ whereas vertices in $\bOL$ have that index in $\Old$. \Qed{obs:main-disjoint} 
\end{proof}

\subsubsection{Proof of~\Cref{obs:main-sizes}}\label{app:obs:main-sizes}

\begin{observation*}
	We have 
	\begin{alignat*}{2}
		\card{\bAL_i} &= \card{\bAR_i} =  \frac{t}{K+t-i+1} \cdot value(\bG) \cdot C^{\hP} \qquad && \forall i \in [K] \\
		\card{\bBL_i} &= \card{\bBR_i} = \frac{1}{K+t} \cdot C^{\hP} \qquad && \forall i \in [K] \\
		\card{\bCL} &= \card{\bCR} = \frac{t}{K+t} \cdot (1-\eps) \cdot (1-value(\bG)) \cdot C^{\hP} \qquad \\
		\card{\bDL} &= \card{\bDR} = \eps \cdot C^{\hP}.
	\end{alignat*}
\end{observation*}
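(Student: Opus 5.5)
The plan is a direct count over the coordinates of $[C]^{\hP}$. Each of the sets $\bAL_i$, $\bBL_i$, $\bCL$ and $\bDL$ is a product set: a fixed block, with each coordinate restricted to one of $[C]$, $\Top_j$, $\Bot_j$, $\Old$, $\New$, and the last $P$ coordinates (jointly) restricted to $\Match_L$, $\Free_R$, or all of $[C]^P$. So each cardinality is $C^{\hP}$ times a product of the densities of these restrictions. We only treat the left-hand sets; the right-hand ones follow by symmetry. The only fact needed about $\bG$ is that it is simple, so $\card{\Match_L} = \card{\Match_R} = \card{\bE} = value(\bG)\cdot C^P$ and $\card{\Free_L} = \card{\Free_R} = (1-value(\bG))\cdot C^P$. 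The densities are well defined because $C$ is a multiple of $(K+t)!\cdot(1/\eps)$: $\Top_j$ has density $\frac{K+t-j}{K+t-j+1}$, $\Bot_j$ has density $\frac{1}{K+t-j+1}$, $\Old$ has density $1-\eps$, and $\New$ has density $\eps$.

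We then compute each size, using the telescoping identity
\[
\prod_{j=a}^{b}\frac{K+t-j}{K+t-j+1} = \frac{K+t-b}{K+t-a+1}.
\]
For $\bAL_i$, the coordinates $1,\ldots,i-1$ and $K+1$ are free, coordinates $i,\ldots,K$ lie in $\Top_i,\ldots,\Top_K$, and the last $P$ lie in $\Match_L$. Telescoping from $a=i$ to $b=K$ gives $\frac{t}{K+t-i+1}$, and multiplying by $value(\bG)$ gives the claimed size. For $\bBL_i$, we have $\Top_1,\ldots,\Top_{i-1}$ followed by $\Bot_i$ and the rest free. This gives
\[
\frac{K+t-i+1}{K+t}\cdot\frac{1}{K+t-i+1} = \frac{1}{K+t}.
\]
For $\bCL$, telescoping over all of $\Top_1,\ldots,\Top_K$ gives $\frac{t}{K+t}$, and then $\Old$ and $\Free_R$ contribute the factors $(1-\eps)$ and $(1-value(\bG))$. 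For $\bDL$, only the $(K+1)$-th coordinate is constrained, to $\New$, which gives $\eps$.

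There is no real obstacle; the only care needed is bookkeeping of which coordinate carries which constraint. In particular, $\bAL_i$ and $\bBL_i$ have lengths $(i-1)+(K-i+1)+1+P = \hP$ and $(i-1)+1+(\hP-i) = \hP$, as required. The equalities $\card{\bAL_i}=\card{\bAR_i}$ and so on hold because $\card{\Match_L}=\card{\Match_R}$ and $\card{\Free_L}=\card{\Free_R}$.
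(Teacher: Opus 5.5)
Your proposal is correct and takes the same approach as the paper. Both write each set as a product of coordinate restrictions inside a single block, multiply the densities of $\Top_j$, $\Bot_j$, $\Old$, $\New$, $\Match_L$ and $\Free_R$ (using $|\bE| = value(\bG)\cdot C^P$ for the simple blueprint $\bG$), and telescope the products of $\frac{K+t-j}{K+t-j+1}$; your coordinate-length check and your remark that $|\Match_L|=|\Match_R|$ and $|\Free_L|=|\Free_R|$ just spell out the left/right symmetry a little more explicitly than the paper does.
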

\begin{proof}
	As before, we prove the bounds for the sets in $\hbL$, and infer the rest by symmetry. 
	
	For $\bAL_i$ for $i \in [K]$: 
	\begin{align*}
	 \card{\bAL_i} &= \paren{\prod_{j=i}^{K} \frac{\card{\Top_j}}{C}} \cdot \frac{\card{\Match_L}}{C^P} \cdot C^{\hP} \tag{by the definition of $\bAL_i$} \\
		&= \paren{\prod_{j=i}^{K} \frac{K+t-j}{K+t-j+1}} \cdot  value(\bG) \cdot C^{\hP} \tag{by definition of $\Top_j$ for $j \in [K]$ and $value(\bG)$ in~\Cref{def:blueprint-value-approx}} \\
		&= \frac{t}{K+t-i+1} \cdot value(\bG) \cdot C^{\hP} \tag{as the multiplicative terms cancel out telescopically}. 
	\end{align*}
	
	For $\bBL_i$ for $i \in [K]$: 
	\begin{align*}
	 \card{\bBL_i} &= \paren{\prod_{j=1}^{i-1} \frac{\card{\Top_j}}{C}} \cdot \frac{\card{\Bot_i}}{C} \cdot C^{\hP} \tag{by the definition of $\bBL_i$} \\
		&= \paren{\prod_{j=1}^{i-1} \frac{K+t-j}{K+t-j+1}} \cdot \frac{1}{K+t-i+1}\cdot C^{\hP} \tag{by definition of $\Top_j,\Bot_j$ for $j \in [K]$} \\
		&= \frac{1}{K+t} \cdot C^{\hP} \tag{as the multiplicative terms cancel out telescopically}. 
	\end{align*}
	
	For $\bCL$: 
	\begin{align*}
		\card{\bCL} & = \prod_{i=1}^{K} \paren{\frac{\card{\Top_i}}{C}} \cdot \frac{\card{\Old}}{C} \cdot \frac{\card{\Free_R}}{C^P} \cdot C^{\hP} \tag{by definition of $\bCL$} \\
		&= \prod_{i=1}^{K} \paren{\frac{K+t-i}{K+t-i+1}} \cdot (1-\eps) \cdot (1-value(\bG)) \cdot C^{\hP} \tag{by definition of $\Top_j$ for $j \in [K]$, $\Old$, and $value(\bG)$ in~\Cref{def:blueprint-value-approx}} \\
		&= \frac{t}{K+t} \cdot (1-\eps) \cdot (1-value(\bG)) \cdot C^{\hP} \tag{as the multiplicative terms cancel out telescopically}. 
	\end{align*}
	
	For $\bDL$: 
	\begin{align*}
		\card{\bDL} = \frac{\New}{C} \cdot C^{\hP} = \eps \cdot C^{\hP} \tag{by the definition of $\bDL$ and $\New$}. 
	\end{align*}
	\Qed{obs:main-sizes}
\end{proof}

\end{document}